\documentclass[a4paper,allowtoday,unpublished]{quantumarticle} 
\pdfoutput=1 

\usepackage{dsfont} 	
\usepackage{microtype} 	
\usepackage[numbers,sort&compress]{natbib} 	
\usepackage[utf8]{inputenc} 	
\usepackage[T1]{fontenc} 	
\usepackage[british]{babel} 	
\usepackage[dvipsnames,x11names,table]{xcolor} 	
\usepackage{amsmath,amssymb,amsthm,bm,amsfonts,bbm} 	
\usepackage{mathtools} 	
\usepackage{physics} 	
\usepackage[unicode,colorlinks=true,citecolor=OliveGreen,linkcolor=Purple,urlcolor=Magenta,hypertexnames=false]{hyperref} 	
\usepackage{thmtools}
\usepackage{thm-restate}
\usepackage{caption} 
\usepackage{subcaption}
\usepackage[capitalize,nameinlink]{cleveref}
\usepackage{cellspace}

\usepackage{enumitem}

\newtheorem{definition}{Definition}
\newtheorem{theorem}{Theorem}[section] 
\newtheorem{corollary}{Corollary}[section]
\newtheorem{lemma}{Lemma}[section]
\theoremstyle{remark}

\usepackage{listings}
\usepackage{adjustbox}

\usepackage{array}
\newcolumntype{D}{>{$\displaystyle}c<{$}}

\usepackage[table,dvipsnames]{xcolor}

\newcommand{\Discr}{\operatorname{Discr}}
\newcommand{\id}{\mathbb{I}}
\DeclarePairedDelimiter{\ceil}{\lceil}{\rceil}
\DeclarePairedDelimiter{\floor}{\lfloor}{\rfloor}

\begin{document}

\title{The most discriminable quantum states in the multicopy regime}

\author{Maria Kvashchuk}
\affiliation{Sorbonne Université, CNRS, LIP6, F-75005 Paris, France}
\affiliation{Institute of Informatics, 2-1-2 Hitotsubashi, Chiyoda-ku, Tokyo, 101-8430, Japan}
\affiliation{Informatics Program, SOKENDAI, 2-1-2 Hitotsubashi, Chiyoda-ku, Tokyo 101-8430, Japan}
\orcid{0009-0007-5415-5664}

\author{Polina Chernyshova}
\affiliation{Sorbonne Université, CNRS, LIP6, F-75005 Paris, France}
\orcid{0009-0000-6843-1083}

\author{Lucas E. A. Porto}
\affiliation{Sorbonne Université, CNRS, LIP6, F-75005 Paris, France}
\orcid{0000-0001-9509-5951}

\author{Ties-A. Ohst}
\affiliation{Department of Physics and Astronomy, Uppsala University, 75120 Uppsala, Sweden}
\affiliation{Nordita, KTH Royal Institute of Technology and Stockholm University, 10691 Stockholm, Sweden}
\orcid{0000-0003-2260-4342}

\author{Lucas B. Vieira}
\affiliation{Department of Computer Science, Technical University of Darmstadt, Darmstadt, Germany}
\orcid{0000-0002-6530-8271}

\author{Marco Túlio Quintino}
\affiliation{Sorbonne Université, CNRS, LIP6, F-75005 Paris, France}
\orcid{0000-0003-1332-3477}

\begin{abstract}
This work investigates which sets of quantum states give rise to the highest achievable success probability in minimum-error state discrimination if multiple copies of the unknown state are given. Specifically, we consider uniformly distributed ensembles of the form $\left\{\frac{1}{N},\rho_i^{\otimes k}\right\}_{i=1}^N$, where $N$ states in dimension $d$ are provided in $k$ identical copies, and derive universal limits in this scenario. For pure state ensembles, we prove that whenever $N$ is large enough to support a state $k$-design, these designs will exactly give rise to the maximally discriminable sets. We further show that when $N$ exceeds the size required for a $k$-design, mixed states can outperform all pure state ensembles. We then recognise that the problem of most discriminable classical states in the multi-copy regime is in one-to-one correspondence to the concept of the multiplicative Bayes capacity of independent uses of classical channels, a concept that emerges naturally in the context of classical information leakage. This connection allows us to completely solve the classical analogue of our problem when $N\geq \binom{d + k - 1}{k}$, and to prove that quantum systems offer a quadratic advantage (in number of copies $k$) over classical ones. Then, we prove that this classical over quantum advantage is strongly reduced when one is restricted to real quantum states, more precisely, when $N \geq k + 1$, pure real qubits only offer a constant advantage over classical bits. Finally, we introduce computational techniques to find sets of most discriminable ensembles and to obtain rigorous universal upper bounds on the maximal success probability for multi-copy state discrimination in cases that are analytically intractable.
\end{abstract}

\maketitle

\tableofcontents

\section{Introduction}

Quantum state discrimination is a fundamental task that quantifies the probability of determining a particular quantum state that is drawn from a given ensemble~\cite{Barnett2009discrimination,Bae2015ReviewQSD}. Among various different interpretations, quantum state discrimination may be viewed as a communication task between a sender, Alice, and a receiver, Bob. Alice selects a quantum state from a mutually agreed-upon set and transmits it to Bob, whose objective is to identify the received state through an optimal quantum measurement. A core aspect of quantum theory relies on the fact that non-orthogonal pure states cannot be perfectly distinguished. In addition to its foundational value, the existence of non-perfectly discriminable states is central in various applications of quantum theory, such as quantum key distribution~\cite{Bennet2020QuantumCrypto, Wolf2021QKDBook}, quantum query complexity~\cite{Chefles2007Unambiguous}, quantum cryptography~\cite{Aaronson2011Copy-Protetion&Money, Molina2012Money}, hypothesis testing~\cite{Hayashi2006HypotesisTestingBook}, and dimension witnessing~\cite{Brunner2013DimensionWitness}. 

For the case of discriminating between two quantum states, the Holevo-Helstrom theorem gives a closed formula for the probability of success~\cite{helstrom1969quantum}. Also in the case of quantum states with a group symmetry, also referred to as group covariant or geometrically uniform states, the so-called pretty good measurements, which are specified by a simple formula, are proven to be optimal, see e.g., Ref.~\cite{Zhou2025distinguishability}. When considering arbitrary states, the problem may be phrased in terms of semidefinite programming (SDP)~\cite{Watrous2018Book, 2023SkrzypczykSDP}, but no ``closed-form'' solution is known in this case. 

Given the major role of state discrimination, it is of fundamental importance to understand the limits of this task within quantum theory, and to identify the sets of states that are the most discriminable. This problem was addressed in the single copy regime in Refs.~\cite{Elron2007OptimalEncoding, Heinosaari_2024}, which fully characterise the sets of states that are most discriminable, even in the more general setting of ensembles with non-uniform prior distributions~\cite{Elron2007OptimalEncoding}.

While the discrimination problem with a single copy is of foundational interest, from an information-theoretic perspective, it is often natural and relevant to consider scenarios where one has access to $k$ identical and independent copies of the desired state. This is the situation, for instance, in tasks such as state learning/tomography ~\cite{Anshu2024SurveyLearning, ODonnel2015EfficientTomography}, quantum key distribution~\cite{Wolf2021QKDBook}, and quantum metrology~\cite{Toth2014Metrology}.
Also, a version of multicopy state discrimination curiously appears in the context of instrument discrimination~\cite{Eid2026InstrumentDisc}.

In this work, we introduce the concept of $k$-copy discriminability of a set of quantum states and maximal $k$-copy discriminability for a given system dimension $d$ and number of states $N$. We analyse this concept on arbitrary (possibly mixed) quantum states and on different relevant subsets of states: general (complex) pure quantum states, classical states, and real quantum states. 

In \cref{sec:pure}, we analyse sets of pure states and show that when $N$ is large enough to support a state $k$-design, these designs give rise to the maximally discriminable sets. Then in \cref{sec:mixed}, we show that when $N$ exceeds the size required for a $k$-design, mixed states can outperform all pure state ensembles. Also, we establish a universal upper bound that holds for arbitrary sets of states, and an arbitrary number of copies $k$.

In \cref{sec:classical}, we analyse the analogue classical discrimination problems, in which states are replaced by probability distributions. We recognise that the problem of most discriminable classical states in the multicopy regime is strongly connected to the concept of the multiplicative Bayes capacity of independent uses of classical channels, a concept that emerges naturally in the context of classical information leakage and classical cryptography, and that has been extensively studied~\cite{Alvim2020BookQuantitativeInfoFlow, Alvim2012MeasuringInfoLeakage, Alvim2014NotionsOfLeakage, Espinoza2013Min_entropy, Smith2017bounds, Kopf2010VulnerabilityBounds, Boreale2011AsymptoticInfoLeakage, Boreale2011QuantitativeInfoFlowWithAView, Braun2009QuantitativeNotions}. This connection allows us to completely solve the classical analogue of our problem when $N \geq \binom{d + k - 1}{k}$, and to prove that quantum systems offer a quadratic advantage (in number of copies $k$) over classical ones. 

In \cref{sec:real}, we consider the case of real quantum states, i.e., quantum states defined in $\mathbb{R}^d$ instead of $\mathbb{C}^d$. There, we establish a connection between the most discriminable real quantum states and group spherical $k$-designs. Curiously, we show that the quantum-over-classical advantage in the maximal discriminability of states is strongly reduced when one is restricted to real quantum states. 

In \cref{sec:numerics}, we introduce computational techniques to find sets of most discriminable ensembles and to obtain rigorous universal upper bounds on the maximal success probability for multicopy state discrimination in cases that are analytically intractable. 

We summarise our main results for the $d=2$ case in \cref{app:summary}, we discuss the relationship between most discriminable states and state designs in \cref{sec:relationship}, and present a visual illustration of the most discriminable qubits in \cref{sec:visualisation}.

\section{The $k$-copy discrimination problem}

The quantum state discrimination problem can be mathematically formalised as follows~\cite{Barnett2009discrimination,Bae2015ReviewQSD}. Let $\{p_i, \rho_i\}_{i=1}^N$ be an ensemble of $N$ $d$-dimensional quantum states, that is, $\rho_i \in \mathcal{L}(\mathbb{C}^d)$, $\Tr(\rho_i) = 1$ and $\rho_i \geq 0$ , and $p_i\geq0$ be the probability of preparing state $\rho_i$. The task of quantum state discrimination consists of guessing which state $\rho_i$ was prepared. For this goal, one is allowed to perform a quantum measurement and guess which state was prepared depending on the outcome of such measurement. Quantum measurements are described by a positive operator-valued measure (POVM), a set of linear operators $\{M_i\}_{i=1}^N$ which are positive semidefinite, $M_i \geq 0$ $\forall i$, and add up to the identity, $\sum_{i=1}^{N} M_i = \id$. The maximal probability of discriminating states sampled from an ensemble $\{p_i, \rho_i\}_{i=1}^N$ is then given by
\begin{align}
p_s\Big(\{p_i, \rho_i\}_{i=1}^N \Big) = \max_{\{M_i\}_{i=1}^N} \sum_{i=1}^{N} p_i \Tr(\rho_i M_i),    
\end{align}
where the maximisation is taken over all POVMs $\{M_i\}_{i=1}^N$.

In this work, we focus on the case where the states are sampled from an equally distributed ensemble, that is, $p_i = \frac{1}{N}$ $\forall i\in\{1,\ldots, N\}$. Moreover, we are interested in the multicopy regime, where one receives $k$ identical copies of the state $\rho_i$, that is, $\rho_i^{\otimes k}$, as illustrated in \cref{fig:protocol}.
In this case, we study the maximal probability of discriminating states sampled from an ensemble $\{\frac{1}{N}, \rho_i^{\otimes k}\}_{i=1}^N$. 
This motivates the definition of $k$-copy discriminability, which is the figure of merit that we will use to quantify how discriminable a set of states is in the multicopy regime.

\begin{figure}[h!]\centering
	\includegraphics[width=0.9\linewidth]{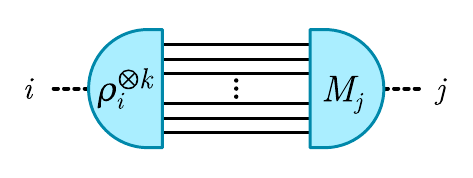}
	\caption{Pictorial illustration of the $k$-copy discrimination scenario considered in this work, where dashed wires represent classical systems and single wires represent quantum systems. With uniform probability, the state $\rho_i^{\otimes k}\in\mathcal{L}({(\mathbb{C}^d)}^{\otimes k})$ is prepared. In order to discriminate among a set of $N$ states, one may perform a joint measurement described by a POVM $\{M_j\}_j$, where $M_j\in \mathcal{L}({(\mathbb{C}^d)}^{\otimes k})$. The discrimination task is successful when $j=i$, and the maximal success probability is presented in \cref{def:discriminability}. }
	\label{fig:protocol}
\end{figure}

\begin{definition}\label{def:discriminability}
The $k$-copy discriminability of a set of $N$ qudit states $\{\rho_i\}_{i=1}^N$ is defined as
     \begin{align}
         \Discr(\{\rho_i\}_{i=1}^N, k) := \max_{\{M_i\}_{i=1}^{N}} \frac{1}{N} \sum_{i=1}^{N} \Tr(\rho_i^{\otimes k} M_i),
    \end{align}
where the maximisation is taken over all positive operator-valued measure (POVM) measurements $\{M_i\}_{i = 1}^N$.
\end{definition}

In other words, the $k$-copy discriminability of a set of states $\{\rho_i\}_{i=1}^N$ is the maximal average probability of correctly guessing a state sampled from an equally distributed ensemble $\{\rho_i^{\otimes k}\}_{i = 1}^N$.
Or, equivalently, it is given by the maximal success probability in the task of minimum-error discrimination of the states $\{\rho_i^{\otimes k}\}_{i = 1}^N$~\cite{Bae2015ReviewQSD}.

We notice that in Refs.~\cite{Heinosaari_2024, Elron2007OptimalEncoding} similar notions have been defined for the single copy case.
In our language, what is defined in Ref.~\cite{Heinosaari_2024} as the \textit{encoding power} of a set of states $\{\rho_i\}_{i = 1}^N$ reads $\text{EP}(\{\rho_i\}_{i=1}^N):=  \Discr(\{\rho_i\}_{i=1}^N, k=1) N $.
Also, using the definition of the encoding power, it holds that $\Discr(\{\rho_i\}_{i=1}^N, k) = \text{EP}(\{\rho_i^{\otimes k}\}_{i=1}^N) \frac{1}{N}$.

In this work, we are interested in studying the maximal value of $k$-copy discriminability that can be reached among all possible sets of states. Therefore, learning the fundamental limit of the state discrimination problem. 

\begin{definition}\label{def:omega}
    The maximal $k$-copy discriminability of $N$ qudit states is defined as
\begin{align}\label{eq:def_omega}
       \Omega^{\textup{q}}(d, N, k) := \max_{\{\rho_i\}_{i=1}^{N}} \max_{\{M_i\}_{i=1}^{N}} \frac{1}{N} \sum_{i=1}^{N} \Tr(\rho_i^{\otimes k} M_i),
\end{align}
where the maximisation is taken over all sets of $d$-dimensional states $\{\rho_i\}_{i=1}^N$ and POVMs $\{M_i\}_{i=1}^N$.
\end{definition}

Moreover, we are also interested in identifying the sets of states that are the most $k$-copy discriminable.

\begin{definition} \label{def:most_discr}
A set of $N$ qudit states $\{\rho_i\}_{i=1}^N$ has maximal $k$-copy discriminability when 
    \begin{align}
        \Discr(\{\rho_i\}_{i=1}^N, k)  =  \Omega^{\textup{q}}(d, N, k).
    \end{align}
\end{definition}

In \cref{def:omega} and \cref{def:most_discr}, the optimisation is performed over all possible quantum states, leading to an ultimate quantum bound on state discrimination in the multicopy regime. While this quantity is meaningful on its own, we are also interested in analysing and comparing this quantity to quantities corresponding to other relevant subsets of quantum states. In \cref{sec:pure}, we introduce $\Omega^{\textup{q}}_\textup{pure}(d, N, k)$, for which maximisation is performed among all sets of $d$-dimensional pure states $\{\ket{\psi_i}\}_{i=1}^N$, and in \cref{sec:classical}, we introduce $\Omega^\textup{cl}(d, N, k)$, for the maximisation over classical states (mathematicality formalised by setting states and measurements diagonal in the computational basis). In \cref{sec:real}, we introduce $\Omega^\textup{real}(d, N, k)$, where the optimisation is restricted to states and measurements that make no use of complex numbers, i.e., states defined in $\mathbb{R}^d$ instead of $\mathbb{C}^d$.

\section{Mathematical preliminaries}

In this section we present the definitions of the symmetric subspace, permutation invariance, and state $k$-designs, which are concepts that play a major role in this work.
For an in-depth discussion about the symmetric subspace, we refer to Refs.~\cite{harrow2013churchsymmetricsubspace, 2024QuantMeleHaarMeasureTools}.

\subsection{The symmetric subspace}\label{subsection:sym_k}

\begin{definition}[Symmetric subspace]
    The symmetric subspace of $\left(\mathbb{C}^d\right)^{\otimes k}$ is defined as
    \begin{align}
        \operatorname{Sym}_d^k := \{&\ket{\psi} \in (\mathbb{C}^d)^{\otimes k}: \nonumber\\
        & V_\pi\ket{\psi} = \ket{\psi} \quad \forall \pi \in \mathcal{S}_{k}\},
    \end{align}
    where $\mathcal{S}_{k}$ denotes the symmetric group on $k$ elements, and $V_\pi$ denotes the permutation operator that is associated to $\pi \in \mathcal{S}_k$ via
    \begin{align}
        V_\pi := \sum_{i_1, \ldots, i_k \in [d]}\ket{i_{\pi^{-1}(1)}, \ldots, i_{\pi^{-1}(k)}} \bra{i_1, \ldots, i_k}, \label{eq:permutation_operator_def}
    \end{align}
    where $[d] = \{0, \ldots, d-1\}$.

    We denote by $\mathcal{L}(\operatorname{Sym}_d^k) \subseteq \mathcal{L}({(\mathbb{C}^d)}^{\otimes k})$ the set of linear operators acting on $\left(\mathbb{C}^d\right)^{\otimes k}$ with support on $\operatorname{Sym}_d^k$.
\end{definition}

Simply speaking, a state belongs to the symmetric subspace, i.e., $\ket{\psi} \in \operatorname{Sym}_d^k$ if it is invariant under any permutation.
The symmetric subspace is also often referred to as the bosonic-symmetric subspace.

As an important result, $k$-fold pure quantum states span the symmetric subspace by Theorem 3 from Ref.~\cite{harrow2013churchsymmetricsubspace} in the sense that
\begin{equation}
    \operatorname{Sym}_d^k = \operatorname{span}\{\ket{\psi}^{\otimes k}: \ket{\psi} \in \mathbb{C}^d\}.
\end{equation}

The projector onto the symmetric subspace can be written as
\begin{equation} \label{eq:symmetric_projection_permutation_form}
   \Pi_\textup{sym}(d,k) = \frac{1}{k!} \sum_{\pi\in S_k} V_\pi.
\end{equation}
Moreover, the projector can be shown to be equal to the average over all $k$-fold pure states, with respect to the Haar measure~\cite{harrow2013churchsymmetricsubspace}
\begin{align}
     \int_\textup{Haar} \ketbra{\psi}^{\otimes k} \mathrm{d}\ket{\psi}  :=& \int_\textup{Haar} \Big(U\ketbra{0} U^\dagger\Big)^{\otimes k} \mathrm{d}U \\
    =&  \frac{\Pi_\textup{sym}(d,k)}{\Tr(\Pi_\textup{sym}(d,k))}.
\end{align}
Finally, the dimension of the symmetric subspace of $\left(\mathbb{C}^d\right)^{\otimes k}$ can, using Eq.~\eqref{eq:permutation_operator_def} and \eqref{eq:symmetric_projection_permutation_form}, be computed as
\begin{align}\label{eq:dim_sym_subspace}
    d_\textup{sym}[k]:=\Tr\Big(\Pi_\textup{sym}(d,k)\Big) = \binom{d + k - 1}{k}.
\end{align}

\subsection{The permutation invariant subspace}

\begin{definition}[Permutation invariant space]
    The permutation invariant space $\operatorname{Perm}_d^k$ is defined as
    \begin{align}
        \operatorname{Perm}_d^k := \{& \rho \in \mathcal{L}(\mathbb{C}^d)^{\otimes k}: \nonumber \\
        & V_\pi \rho V^\dagger_\pi = \rho \quad \forall \pi \in \mathcal{S}_{k}\}.
    \end{align}
    That is, $\operatorname{Perm}_d^k$ is the set of linear operators acting on $\mathcal{L}(\mathbb{C}^d)^{\otimes k}$ which are invariant under any permutation of the $k$ factors.
\end{definition}

One of the key differences between linear operators that act on the bosonic-symmetric subspace $\mathcal{L}(\operatorname{Sym}_d^k)$ and linear operators from the permutation invariant space $\operatorname{Perm}_d^k$ is that if $\rho \in \mathcal{L}(\operatorname{Sym}_d^k)$ then $\rho = V_\pi \rho V^\dagger_\sigma$ $\forall \pi, \sigma \in \mathcal{S}_k$, and if $\rho \in \operatorname{Perm}_d^k$, it is stable only under applying the same permutation from the left and right sides, i.e., $\pi=\sigma$. A typical example of a state that belongs to the permutation invariant space but not to the symmetric subspace is the maximally mixed state $\rho = \frac{\id}{d^k} \in \operatorname{Perm}_d^k, \notin \mathcal{L}(\operatorname{Sym}_d^k)$.

For any quantum state $\rho \in \mathcal{L}(\mathbb{C}^d)$, the $k$-folded version $\rho^{\otimes k}$ belongs to the permutation invariant space $\rho^{\otimes k} \in \operatorname{Perm}_d^k$. Morever, it can be shown that~\cite{harrow2013churchsymmetricsubspace}
    \begin{align}
        \operatorname{Perm}_d^k = \operatorname{span}\{\rho^{\otimes k}: \rho \in \mathcal{L}(\mathbb{C}^d)\}.
    \end{align}

\subsection{Quantum state $k$-design}\label{subsection:state_design}

We now present the definition of a quantum state $k$-design~\cite{ambainis2007quantumtdesignstwiseindependence}, which plays a relevant role in this work.
\begin{definition}\label{def:sets_containing_designs}
  We say that a set of $N$ qudit states $\{ \rho_i \}_{i=1}^N$ contains a state $k$-design if there exists a probability distribution $\{p_i\}_{i=1}^N$ such that
\begin{align}
    \sum_{i=1}^N p_i \rho_i^{\otimes k} =& \int_\textup{Haar} \ketbra{\psi}^{\otimes k} \mathrm{d}\ket{\psi} \\
    =&  \frac{\Pi_\textup{sym}(d,k)}{\Tr(\Pi_\textup{sym}(d,k))},
\end{align}
and when  $p_i >0$, $\rho_i$ is a pure state, i.e., $\rho_i=\ketbra{\psi_i}$ for some normalised vector $\ket{\psi_i}\in\mathbb{C}^d$.

The subset of pure states $\{\ketbra{\psi_i}\}_{i=1}^N \subseteq \{ \rho_i \}_{i=1}^N$ with the strictly positive terms of the distribution $\{p_i\}_{i=1}^N$ is then called a state $k$-design.
\end{definition}

It is known that, for any given dimension $d$ and number of copies $k$, there exists a number of states $N$ such that $\{\ket{\psi_i}\}_{i=1}^N$ contains a $k$-design~\cite{ambainis2007quantumtdesignstwiseindependence, Hayashi2005StateEstimation,Seymour1984averaging}. 
Let us denote by $N'(d,k)$ the smallest $N$ such that a qudit state $k$-design exist.  
There exist known upper and lower bounds for it, given by $c_d k^d \leq N'(d,k) \leq C_d k^d$, where $c_d,C_d\in\mathbb{R}$ are unknown constants that only depend on $d$ ~\cite{ambainis2007quantumtdesignstwiseindependence, Hayashi2005StateEstimation,Bondarenko2010asymptotic}.

Before proceeding, it is worth noticing that if a set of states contains a $k$-design, then it also contains a $k'<k$-design. Also, using the Bloch sphere representation of a qubit, for dimension $d=2$, a state $k$-design coincides with a $3$-dimensional spherical design~\cite{Seymour1984averaging}, a concept that is very well studied. For instance, one may consult the website from Ref.~\cite{spehical_website} for various explicit descriptions of $3$-dimensional spherical $k$-designs using minimal or small number of vectors $N$.
We also notice that a uniform state $1$-design coincides with the concept of a tight frame and corresponds to a rank-one POVM.

For various practical reasons, it is also useful to consider the concept of $\epsilon$-approximate state $k$-design~\cite{ambainis2007quantumtdesignstwiseindependence}.
\begin{definition}\label{def:epsilon-design}
        An ensemble of states $\{p_i,\ketbra{\psi_i}\}_{i=1}^N$ is an $\epsilon$-approximate state $k$-design, if the operator inequalities
    \begin{align}
    &(1-\epsilon) \frac{\Pi_\textup{sym}(d,k)}{\Tr(\Pi_\textup{sym}(d,k))} \leq \sum_{i=1}^{N} p_i \ketbra{\psi_i}^{\otimes k}, \\
    & \sum_{i=1}^{N} p_i \ketbra{\psi_i}^{\otimes k} \leq (1+\epsilon) \frac{\Pi_\textup{sym}(d,k)}{\Tr(\Pi_\textup{sym}(d,k))},
    \end{align}
    are satisfied.
\end{definition}
Ref.~\cite{ambainis2007quantumtdesignstwiseindependence} shows that for any constant $k$ and for every $d>2k$ there exists an $\epsilon$-approximate $k$-design with $\epsilon=O(d^{-1/3})$ consisting of $N=O(d^{3t})$ states. Moreover the states in this approximate design can be efficiently implemented, for various notions of efficiency, including quantum circuits with low depth~\cite{Cui2025design}.

Finally, we present the concept of a real state design.
\begin{definition}\label{def:real_k_design}
An ensemble of $N$ real qudit states $\{p_i,\rho_i \}_{i=1}^N$ is a real state $k$-design when $\rho_i$ is pure, i.e., $\rho_i=\ketbra{\psi_i}$, for some normalised vector $\ket{\psi_i}\in\mathbb{R}^d$, and\footnote{A linear operator $O \in \mathcal{L}(\mathbb{R}^d)$ is in $SO(d)$ if $O O^T = \id$, or, equivalently, if $O$ is a unitary operator with only real numbers in the computational basis.}
\begin{align}
    \sum_{i=1}^N p_i \rho_i^{\otimes k} =& \int_{\ket{\psi}\in\mathbb{R}^d,\textup{Haar}} \ketbra{\psi}^{\otimes k} \mathrm{d}\ket{\psi} \\
    :=& \int_{O\in SO(d),\textup{Haar}} \left(O\ketbra{0} O^\dagger\right)^{\otimes k} \mathrm{d}O. 
\end{align}

If $p_i=\frac{1}{N}$, the set of real states $\{\ketbra{\psi_i}\}_{i=1}^N$ is group covariant \footnote{A set of states $\{\rho_i\}_i$ is group covariant if for every $i$, we can write $\rho_i=U_i \rho_1 U_i^\dagger$, and the set of unitaries $\{U_i\}_i$ form a group up to a global phase, i.e., they constitute a projective unitary representation of a group.}, and $\{p_i,\ketbra{\psi_i} \}_{i=1}^N$ is a real state $k$-design  we say that the  $\{\ketbra{\psi_i}\}_{i=1}^N$  is a real state group $k$-design.
\end{definition}
We note that, for dimension $d=2$, a real state design coincides with a $2$-dimensional spherical design, i.e., a circle design.
Also, differently from the complex case, the average over $k$ copies of pure real qudits is not a projector, but a linear operator with more complicated structure~\cite{harrow2013churchsymmetricsubspace,Nemoz2025real}. 

\subsection{The big O notation for asymptotic analysis}
In the sections on asymptotic analysis of the maximal $k$-copy discriminability we will be using the big O notation~\cite{Arora2009computational}. Let $f,g:\mathbb{N}\to \mathbb{R}_+$ be functions from natural numbers to positive real numbers. We define
\begin{align}
    f(n)\sim g(n) \Rightarrow& \lim_{n\rightarrow \infty} \frac{f(n)}{g(n)} = 1, \hfill ``f = g\text{''} \\
    f(n)=O(g(n)) \Rightarrow& \lim_{n\rightarrow \infty} \frac{f(n)}{g(n)} < \infty, \hfill ``f \lesssim g\text{''}\\
    f(n)=\Omega'(g(n)) \Rightarrow& \lim_{n\rightarrow \infty} \frac{g(n)}{f(n)} < \infty, \hfill ``g \lesssim f\text{''}  \\
    f(n)=\Theta (g(n)) \Rightarrow& \hfill 0 < \lim_{n\rightarrow \infty} \frac{f(n)}{g(n)} < \infty , \hfill ``f \propto g\text{''}.
\end{align}
In other words, $f(n)=O(g(n))$ when there exists a constant $K\in\mathbb{R}_+$ such that $f(n) \leq K g(n)$ holds for every $n$. That is, up to a multiplicative constant, $g$ is an upper bound for $f$, what is analogous to $\Omega'$ and $\Theta$. Lastly, $f(n)\sim g(n)$  holds when $f$ and $g$ have exactly the same asymptotic behaviour.

In some of our proofs, we also make use of the small o notation,
\begin{align}
    f(n)=o(g(n)) \Rightarrow& \lim_{n\rightarrow \infty} \frac{f(n)}{g(n)} = 0, \hfill.
\end{align}
In other words, $f(n)=o(g(n))$ when for every $\epsilon >0$, there exists an $n_0\in\mathbb{N}$ such that $n\geq n_0$ implies $f(n) \leq \epsilon g(n)$. That is, $f$ is dominated by $g$ asymptotically for any constant $\epsilon$.

\section{Single-copy most discriminable states}
Before analysing the general $k$-copy problem, we start our discussion with the simplest case of $k=1$, i.e., the case of single-copy state discrimination.
In this regime, the problem of finding the the maximal discriminability achievable with quantum systems has been thoroughly explored in the literature~\cite{Elron2007OptimalEncoding, Heinosaari_2024}, and the sets of maximally discriminable states are known to admit a particularly simple characterisation~\cite{Elron2007OptimalEncoding}.
In this section, we review these results with focus on the case of uniform prior, although we note that in Ref.~\cite{Elron2007OptimalEncoding} the authors fully characterise the sets of $1$-copy maximally discriminable states with respect to any arbitrary prior.

To begin with, when the number of candidate states $N$ is at most equal to the dimension $d$ of the underlying quantum system, i.e., $N \leq d$, one natural strategy that achieves perfect discriminability amounts to choosing the states $\{\rho_i\}_{i = 1}^N$ to be orthogonal pure states.
More precisely, by setting $\rho_i:=\ketbra{i}$ for $i \in \{1, \ldots, N\}$, and constructing the measurement $\{M_i\}_{i = 1}^N$ by $M_i = \ketbra{i} + \frac{1}{N}(\id - \sum_{j=1}^{N} \ketbra{j})$, one can show that $\frac{1}{N} \sum_{i=1}^{N} \Tr(\rho_i M_i) =1$. 

Analogously, this configuration also inspires a possible strategy for the case $N > d$, although perfect discrimination is not achieved.
In this instance, setting $\rho_i:=\ketbra{i}$ for $i \in \{1, \ldots, d\}$ and for $i > d$ taking $\rho_i$ to be any $d$-dimensional state, one can verify that the measurement $\{M_i\}_{i = 1}^N$ defined by $M_i = \ketbra{i}$ for $i \leq d$ and $M_i = 0$ for $i > d$ achieves $\frac{1}{N} \sum_{i=1}^{N} \Tr(\rho_i M_i) = \frac{d}{N}$.

For the former case, it is clear that the strategy we described is optimal, since it reaches perfect discriminability.
However, even for the latter case the presented strategy turns out to be optimal~\cite{Elron2007OptimalEncoding}.
That is, when $N > d$, there is no set of $N$ $d$-dimensional quantum states that can be discriminated with average success probability greater than $\frac{d}{N}$.
Since both strategies can be done using only pure quantum states and, more dramatically, even only using classical ones, this means that
\begin{align}
    \Omega^{\textup{q}}(d, N, k=1) &= \Omega^{\textup{q}}_{\textup{pure}}(d, N, k=1)\nonumber\\ &=\Omega^\textup{cl}(d, N, k=1)\nonumber \\ &= \min\left(1,\frac{d}{N}\right).
\end{align}
A proof of this fact can be found in Refs.~\cite{Elron2007OptimalEncoding, Heinosaari_2024} (see also \cref{thm:1-copy_max_discriminable} below).

Although these strategies are perhaps the most intuitive way to achieve maximal $1$-copy discriminability with quantum theory, it is interesting to notice that in general there exists several other sets of maximally discriminable states.
For example, in the case $N \geq d$, consider any set of pure states $\{\ket{\psi}_i\}_{i=1}^N$ which contains the maximally mixed states in its convex hull, i.e., for which there exists a probability distribution $\{p_i\}_{i = 1}^N$ such that
\begin{align}\label{eq:id_in_conv_hull}
    \sum_{i=1}^N p_i\ketbra{\psi_i} = \frac{\id}{d}. 
\end{align}
By setting $M_i=dp_i\ketbra{\psi_i}$, the condition above guarantees that the measurement $\{M_i\}_{i = 1}^N$ is properly normalised, and one can verify that $\frac{1}{N} \sum_{i=1}^{N} \Tr(\rho_i M_i) = \frac{d}{N}$.

In fact, when the number of candidate states $N$ is larger than or equal to the dimension of the quantum states $d$, i.e., $N \geq d$, all sets of $1$-copy maximally discriminable states are completely characterised by \cref{eq:id_in_conv_hull}, or, more precisely, by \cref{def:sets_containing_designs}. 
This is essentially the content of \cref{thm:1-copy_max_discriminable} below, which was first proven in Ref.~\cite{Elron2007OptimalEncoding}.
Here, we provide an alternative proof for it, which can also be used to conclude that one cannot achieve a discriminability higher than $\textup{min}(1, \frac{d}{N})$.
\begin{theorem}\label{thm:1-copy_max_discriminable}
    When $N\geq d$, a set of qudit states $\{\rho_i\}_{i=1}^N$ has maximal $1$-copy discriminability if and only if it contains a state $1$-design. 
\end{theorem}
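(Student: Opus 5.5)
The proof splits into three parts: an upper bound valid for every set of states, an analysis of when that bound is tight, and a matching construction. This gives both $\Omega^{\textup{q}}(d,N,1)=d/N$ for $N\geq d$ and the characterisation in the theorem.

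\textbf{Upper bound.} For any states $\{\rho_i\}_{i=1}^N$ and any POVM $\{M_i\}_{i=1}^N$ we use $\rho_i \leq \lambda_{\max}(\rho_i)\,\id \leq \id$. Together with $M_i\geq 0$ this gives
\begin{align}
\frac{1}{N}\sum_{i=1}^N \Tr(\rho_i M_i) \leq \frac{1}{N}\sum_{i=1}^N \Tr(M_i) = \frac{1}{N}\Tr(\id)=\frac{d}{N}.
\end{align}
Hence $\Discr(\{\rho_i\},1)\leq d/N$ for every set.

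\textbf{Sufficiency.} Suppose the set contains a state $1$-design, i.e.\ there are $p_i$ with $\sum_i p_i\rho_i=\id/d$ and $\rho_i=\ketbra{\psi_i}$ whenever $p_i>0$. Set $M_i=d\,p_i\,\rho_i$. These operators are positive and sum to $\id$. The success probability is then
\begin{align}
\frac{1}{N}\sum_i d\,p_i\Tr(\rho_i^2)=\frac{d}{N}\sum_i p_i=\frac{d}{N},
\end{align}
since $\Tr(\rho_i^2)=1$ whenever $p_i>0$. This attains the upper bound, so $\Omega^{\textup{q}}(d,N,1)=d/N$ and any set containing a $1$-design is maximally discriminable. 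The hypothesis $N\geq d$ is what makes a design exist, for instance an orthonormal basis padded with arbitrary states, so that $d/N$ really is the maximum.

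\textbf{Necessity.} This is the main step. Assume $\Discr(\{\rho_i\},1)=d/N$ and let $\{M_i\}$ be an optimal POVM, so every inequality in the upper bound is an equality: $\Tr(\rho_i M_i)=\Tr(M_i)$ for all $i$, i.e.\ $\Tr((\id-\rho_i)M_i)=0$. Since both $\id-\rho_i$ and $M_i$ are positive semidefinite, this forces $(\id-\rho_i)M_i=0$. So the support of $M_i$ lies in the eigenvalue-$1$ eigenspace of $\rho_i$. Because $\rho_i$ has unit trace, that eigenspace is nonzero only if $\rho_i=\ketbra{\psi_i}$ is pure, and then $M_i=c_i\ketbra{\psi_i}$ with $c_i\geq 0$. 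Thus $M_i\neq 0$ implies $\rho_i$ is pure and $M_i\propto\rho_i$. Now define $p_i:=c_i/d$. Taking the trace of $\sum_i M_i=\id$ gives $\sum_i c_i=d$, so $\{p_i\}$ is a probability distribution. The same relation gives $\sum_i p_i\rho_i=\id/d$, and $\rho_i$ is pure whenever $p_i>0$, which is exactly \cref{def:sets_containing_designs} for $k=1$.

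The only delicate point is the step from $\Tr(AB)=0$ with $A,B\geq0$ to $AB=0$. It follows from $\Tr(AB)=\|A^{1/2}B^{1/2}\|_2^2$, which gives $A^{1/2}B^{1/2}=0$ and hence $AB=0$.
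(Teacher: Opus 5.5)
Your proof is correct, but it rests on a different key inequality than the paper's.

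The paper bounds each term with the Hilbert--Schmidt Cauchy--Schwarz inequality followed by $\Tr(A^2)\le\Tr(A)^2$, that is, $\Tr(\rho_i M_i)\le\sqrt{\Tr(\rho_i^2)\Tr(M_i^2)}\le\Tr(M_i)$. It then reads off necessity from two separate equality cases: $M_i\propto\rho_i$ from Cauchy--Schwarz, and rank one (hence purity of $\rho_i$) from the trace inequality.

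You instead use the single operator inequality $\rho_i\le\id$. This is exactly the statement that $G=\id/N$ is feasible for the dual of the discrimination SDP, and your equality analysis $(\id-\rho_i)M_i=0$ is complementary slackness. That one condition gives purity of $\rho_i$ and $M_i=c_i\ketbra{\psi_i}$ at once. You also make explicit the normalisation $\sum_i c_i=d$ and the role of $N\ge d$ in guaranteeing $\Omega^{\textup{q}}(d,N,1)=d/N$, both of which the paper leaves implicit.

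Both routes extend to the $k$-copy pure-state bound. The paper inserts $\Pi_\textup{sym}(d,k)$ into the Cauchy--Schwarz step. Yours would use $\ketbra{\psi}^{\otimes k}\le\Pi_\textup{sym}(d,k)$, i.e.\ the dual point $G=\Pi_\textup{sym}(d,k)/N$, which ties naturally into the dual-SDP technique the paper later uses for the trine. The paper's version, in turn, makes more visible that the optimal measurement elements are aligned with the states in Hilbert--Schmidt space.
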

\begin{proof}
First, assume that $\{\rho_i\}_{i=1}^N$ contains a state $1$-design, i.e., that there exists a probability distribution $\{p_i\}_{i =1 }^N$ such that $\sum_{i = 1}^N p_i \rho_i = \frac{\id}{d}$.
Defining $M_i:= d p_i  \rho_i$, one can verify that
\begin{align}
       \frac{1}{N} \sum_{i=1}^{N} \Tr(\rho_i M_i) = \frac{d}{N},
\end{align}
and thus the set $\{\rho_i\}_{i=1}^N$ is maximally discriminable.

Now, we will show that if $\{\rho_i\}_{i=1}^N$ has maximal $1$-copy discriminability, then it contains a state $1$-design.
To do so, notice that for any set of measurements $\{M_i\}_{i = 1}^N$ it holds that
\begin{align}\label{eq:Ties_1-copy}
    \sum_{i=1}^{N} \Tr(\rho_i M_i) & \leq \sum_{i=1}^N \sqrt{\Tr(\rho_i^2)\Tr(M_i^2)}\\
    & \leq \sum_{i=1}^N \sqrt{\Tr(\rho_i)^2\Tr(M_i)^2} \\
    & = \sum_{i=1}^N \Tr(M_i) = d,
\end{align}
where the first inequality is a form of the Cauchy-Schwarz inequality for the Hilbert-Schmidt inner product, and the second inequality follows from the identity $\Tr(A^2) \leq \Tr(A)^2$, valid for any positive semidefinite operator $A$.

A set of measurements $\{M_i\}_{i=1}^N$ that optimally discriminates the states $\{\rho_i\}_{i=1}^N$, i.e., that achieves $\sum_{i=1}^N \Tr(\rho_i M_i) = d$, must therefore saturate both inequalities above.
In the first place, to achieve equality in the Cauchy-Schwarz inequality, if $M_i \neq 0$ we must have $M_i \propto \rho_i$.
Additionally, to saturate the identity $\Tr(M_i^2) \leq \Tr(M_i)^2$, a nonzero $M_i$ must be rank-1. An analogous argument ensures that, when $M_i\neq 0$, the states $\rho_i$ must have rank-1, hence, they are pure. 
These two facts taken together imply that $\{\rho_i\}_{i=1}^N$ must contain a 1-copy state design.
\end{proof}

Now, coming back to the cases where the number of candidate states is smaller than the dimension of the underlying quantum system $N\leq d$, maximal discriminability amounts to perfect discriminability.\footnote{Notice that the case $N = d$ falls within the scope of both \cref{thm:1-copy_max_discriminable} and \cref{thm:orthogonal_supp}}
In this sense, a set of states is maximally discriminable in this regime if and only if the states are pairwise orthogonal.
This fact is well known and has been mentioned several times in literature~\cite{Watrous2018Book, Bae2015ReviewQSD, Elron2007OptimalEncoding}.
For completeness, we include a proof below.

\begin{theorem}\label{thm:orthogonal_supp}
        When $N\leq d$, a set of qudit states $\{\rho_i\}_{i=1}^N$ has maximal $1$-copy discriminability if and only if $\rho_i\rho_j=0$ for $i \ne j$. 
\end{theorem}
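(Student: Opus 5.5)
Since $N\le d$, the maximal $1$-copy discriminability equals $1$: by the discussion preceding \cref{thm:1-copy_max_discriminable}, orthogonal pure states achieve $\min(1,d/N)=1$, and no success probability exceeds $1$. So the claim reduces to showing that $\{\rho_i\}_{i=1}^N$ can be discriminated with average success probability exactly $1$ if and only if $\rho_i\rho_j=0$ for all $i\neq j$. The proof therefore splits into two directions.

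\emph{Sufficiency.} Assume $\rho_i\rho_j=0$ for $i\neq j$. Let $P_i$ be the projector onto the support of $\rho_i$. For positive semidefinite operators, $\rho_i\rho_j=0$ implies that their supports are orthogonal, so $P_iP_j=0$ for $i\ne j$. Define the measurement
\begin{align}
M_i = P_i + \frac{1}{N}\Big(\id-\sum_{j=1}^N P_j\Big).
\end{align}
Each $M_i$ is positive semidefinite because $\sum_j P_j$ is itself a projector, and $\sum_i M_i=\id$. Since $\Tr(\rho_iP_j)=0$ for $j\neq i$ and $\Tr(\rho_iP_i)=1$, we get $\Tr(\rho_i M_i)=1$ for every $i$. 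Hence the average success probability is $1$.

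\emph{Necessity.} Suppose some POVM satisfies $\frac1N\sum_i\Tr(\rho_iM_i)=1$. Because each term satisfies $\Tr(\rho_iM_i)\le 1$, every term must equal $1$. Then
\begin{align}
\Tr\big(\rho_i(\id-M_i)\big)=0, \qquad \id-M_i=\sum_{j\neq i}M_j\ge 0,
\end{align}
so $\Tr(\rho_i M_j)=0$ for all $j\ne i$. For positive semidefinite $A,B$, the condition $\Tr(AB)=0$ gives $\|A^{1/2}B^{1/2}\|_2^2=0$, hence $A^{1/2}B^{1/2}=0$ and therefore $AB=0$. Applying this yields $\rho_iM_j=0$ for $j\neq i$. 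Using $\sum_k M_k=\id$, we compute
\begin{align}
\rho_i\rho_j=\rho_i\Big(\sum_k M_k\Big)\rho_j=\rho_iM_i\rho_j+\rho_iM_j\rho_j+\sum_{k\ne i,j}\rho_iM_k\rho_j.
\end{align}
The middle term and the last sum vanish because $\rho_iM_k=0$ for $k\neq i$. The first term vanishes because $M_i\rho_j=(\rho_jM_i)^\dagger=0$, using $i\neq j$. Therefore $\rho_i\rho_j=0$.

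\emph{Main step.} The only nontrivial point is the lemma that $\Tr(AB)=0$ implies $AB=0$ for positive semidefinite $A$ and $B$. This follows directly from the Hilbert--Schmidt norm argument above.
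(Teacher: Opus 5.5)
Your proof is correct and follows the paper's route: perfect success forces $\Tr(\rho_i M_j)=0$, hence $\rho_i M_j=0$, for $j\neq i$. The paper finishes by using $M_i\le\id$ to show $\mathrm{supp}(\rho_i)\subseteq\mathrm{supp}(M_i)$, whereas you insert $\sum_k M_k=\id$ between $\rho_i$ and $\rho_j$, which is a slightly slicker finish; also, your extra term $\frac{1}{N}\big(\id-\sum_j P_j\big)$ properly completes the POVM, which the paper's choice $M_i=\sum_k |e_k^{(i)}\rangle\langle e_k^{(i)}|$ leaves subnormalised when the supports do not span $\mathbb{C}^d$.
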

\begin{proof}
        If the states in the set $\{\rho_i\}_{i=1}^{N}$ have orthogonal support, each state can be written as $\rho_i=\sum_k p_{k}^{(i)} \ketbra*{e_{k}^{(i)}}$, where $p_k^{(i)} > 0$ and $\braket*{e_{k}^{(i)}}{e_{k'}^{(j)}}=\delta_{ij} \cdot \delta_{kk'}$. 
        Hence, we can set $M_i:=\sum_k \ketbra*{e_{k}^{(i)}}$ and check that $\frac{1}{N} \sum_{i=1}^{N} \Tr(\rho_i M_i) = 1$. 
        
        To prove the other direction, assume that the set $\{\rho_i\}_{i=1}^{N}$ can be perfectly discriminated, i.e., for some $\{M_i\}_{i=1}^{N}$, we have $\Tr(\rho_i M_j) = \delta_{ij}$.
        This means that for every $i \neq j$, the support of $\rho_i$ must be orthogonal to the support of $M_j$, that is,
        $\textup{supp}(\rho_i) \perp \textup{supp}(M_j)$.
        Additionally, writing $\rho_i=\sum_k p_{k}^{(i)} \ketbra*{e_{k}^{(i)}}$, where $p_k^{(i)} > 0$, it follows that $\bra*{e_{k}^{(i)}} M_i \ket*{e_{k}^{(i)}} = 1$ for all $k$.
        Since $M_i$ cannot have eigenvalues larger than one, this implies that $M_i \ket*{e_{k}^{(i)}} = \ket*{e_{k}^{(i)}}$, which means that the support of $\rho_i$ is included in the support of $M_i$, i.e., $\textup{supp}(\rho_i) \subset \textup{supp}(M_i)$.
        This can then be used to conclude that for $i \neq j$, $\rho_i$ and $\rho_j$ have orthogonal supports, which leads to $\rho_i \rho_j = 0$.
\end{proof}

In summary, the maximal discriminability of quantum states is very well understood in the single-copy regime.
In these scenarios, quantum systems do not provide an advantage over classical ones, and sets of states which are maximally discriminable are well characterised, either in terms of \cref{def:sets_containing_designs} (when $N \geq d$), or in terms of their mutual orthogonality (when $N \leq d$).
In the following sections, we will observe that the problem becomes much more difficult in the multicopy regime.

In the following, we analyse the discriminability of sets of states within the multicopy regime. When more copies of the states are available, we may use this extra resource to improve the probability of successfully discriminating a set of states. And, for any fixed set of different states and large number of copies $k$, we have that $\textup{Discr}(\{\rho_i\}_{i=1}^N, k\to \infty) \to 1$~\cite{2006harrrow_howmanycopies}.

\section{The $k$-copy most discriminable pure states} \label{sec:pure}

Here, we analyse $k$-copy discriminability of sets of $N$ pure $d$-dimensional quantum states. Analogously to \cref{def:omega}, we now introduce the function $\Omega^{\textup{q}}_\textup{pure}(d, N, k)$.
\begin{definition}\label{def:omega-pure}
    The maximal $k$-copy discriminability of $N$ pure qudit states is defined as
    \begin{align}\label{eq:def_omega_pure}
           \Omega&^{\textup{q}}_\textup{pure}(d, N, k) := \nonumber\\
           & \max_{\{\ket{\psi_i}\}_{i=1}^{N}} \max_{\{M_i\}_{i=1}^{N}} \frac{1}{N} \sum_{i=1}^{N} \Tr(\ketbra{\psi_i}^{\otimes k} M_i),
    \end{align}
    where the maximisation is taken over all sets of $d$-dimensional pure states $\{\ket{\psi_i}\}_{i=1}^N$ and POVMs $\{M_i\}_{i=1}^N$.
\end{definition}

As was mentioned in \cref{subsection:sym_k}, $k$-fold versions of pure quantum states span the symmetric subspace $\operatorname{Sym}_d^k$, which implies that the discrimination task is now defined on that space. From that fact and the idea behind \cref{thm:1-copy_max_discriminable}, one can obtain an intuition of where the upper bound for maximal $k$-copy discriminability of sets of pure quantum states comes from.

\begin{theorem}[Upper bound for pure states]\label{thm:k-copy_upper_bound_pure}
    The maximal $k$-copy discriminability of $N$ pure qudit states respects the upper bound \footnote{If $N \leq \binom{d + k - 1}{k}$, the bound \eqref{eq:pure_quantum_bound} becomes trivial and can be improved to $\Omega^{\textup{q}}_{\textup{pure}} \leq 1$, since $\Omega^{\textup{q}}_{\textup{pure}}$ is defined as a probablity.}
\begin{align}
    \label{eq:pure_quantum_bound}
    \Omega^{\textup{q}}_{\textup{pure}}(d, N, k) \leq \frac{\binom{d + k - 1}{k}}{N},
\end{align}
where $\binom{d + k - 1}{k}$ is the dimension of the symmetric subspace of $\left(\mathbb{C}^d\right)^{\otimes k}$. 
\end{theorem}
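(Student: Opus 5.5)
The plan is to mimic the proof of \cref{thm:1-copy_max_discriminable}, with the symmetric subspace $\operatorname{Sym}_d^k$ playing the role of $\mathbb{C}^d$.

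\textbf{Reduction to the symmetric subspace.} Fix pure states $\{\ket{\psi_i}\}_{i=1}^N$ and an arbitrary POVM $\{M_i\}_{i=1}^N$ on $(\mathbb{C}^d)^{\otimes k}$. Write $\Pi:=\Pi_\textup{sym}(d,k)$. Since $\ket{\psi_i}^{\otimes k}\in\operatorname{Sym}_d^k$, we have $\ketbra{\psi_i}^{\otimes k}=\Pi\ketbra{\psi_i}^{\otimes k}\Pi$. Hence
\begin{align}
\Tr(\ketbra{\psi_i}^{\otimes k}M_i)=\Tr(\ketbra{\psi_i}^{\otimes k}\tilde M_i),\qquad \tilde M_i:=\Pi M_i\Pi .
\end{align}
The compressed operators satisfy $\tilde M_i\geq 0$ and $\sum_i\tilde M_i=\Pi$, so $\sum_i\Tr(\tilde M_i)=\Tr(\Pi)=\binom{d+k-1}{k}$ by \cref{eq:dim_sym_subspace}.

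\textbf{Bounding each term.} For each $i$, since $\ketbra{\psi_i}^{\otimes k}$ is a rank-one projector, $\Tr(\ketbra{\psi_i}^{\otimes k}\tilde M_i)=\bra{\psi_i}^{\otimes k}\tilde M_i\ket{\psi_i}^{\otimes k}\leq\Tr(\tilde M_i)$, using $\tilde M_i\geq 0$. Alternatively, one can repeat the Cauchy--Schwarz chain of \cref{eq:Ties_1-copy}, using $\Tr(\rho^2)\le\Tr(\rho)^2=1$ and $\Tr(\tilde M_i^2)\le\Tr(\tilde M_i)^2$. Summing over $i$ and dividing by $N$ gives
\begin{align}
\frac1N\sum_{i=1}^N\Tr(\ketbra{\psi_i}^{\otimes k}M_i)\leq\frac1N\sum_{i=1}^N\Tr(\tilde M_i)=\frac{\binom{d+k-1}{k}}{N}.
\end{align}
Maximising over states and POVMs then yields \cref{eq:pure_quantum_bound}. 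The trivial improvement to $1$ when $N\le\binom{d+k-1}{k}$ holds because the quantity is a probability.

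There is no real obstacle here. The only point needing care is the projection step: the POVM elements live on the full space $(\mathbb{C}^d)^{\otimes k}$, so we must justify replacing them by their compressions to $\operatorname{Sym}_d^k$, whose traces sum to $d_\textup{sym}[k]$ rather than $d^k$. This is exactly where the bound $\binom{d+k-1}{k}$ arises instead of $d^k$.
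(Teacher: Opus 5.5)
Your proposal is correct and follows essentially the same route as the paper: insert $\Pi_\textup{sym}(d,k)$ via $\Pi_\textup{sym}(d,k)\ket{\psi}^{\otimes k}=\ket{\psi}^{\otimes k}$, bound each term by the trace of the compressed POVM element, and sum to $\Tr(\Pi_\textup{sym}(d,k))=\binom{d+k-1}{k}$. Your explicit compression $\tilde M_i=\Pi M_i\Pi\geq 0$ with $\sum_i\tilde M_i=\Pi$ is the clean way to make the paper's Cauchy--Schwarz step rigorous, since the paper phrases that step in terms of $M_i\Pi_\textup{sym}(d,k)$, which need not be positive.
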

\begin{proof}
    The proof works in a similar way as the one of Theorem \ref{thm:1-copy_max_discriminable} and again makes use of the Cauchy-Schwarz inequality and the trace inequality $\Tr(A^2) \leq \Tr(A)^2$ for positive operators. For any set of pure states $\{\ket{\psi_i}\}_{i=1}^{N} \subset \mathbb{C}^{d}$ and any POVM $\{M_i\}_{i=1}^{N}$ acting on $k$ copies of $\mathbb{C}^{d}$, using the identity $\Pi_\textup{sym}(d,k) \ket{\psi}^{\otimes k} = \ket{\psi}^{\otimes k}$ one can bound the average success probability as
    \begin{align}\label{eq:Ties_k-copy}
     &\frac{1}{N}\sum_{i=1}^{N}\Tr(\ketbra{\psi_i}^{\otimes k} M_i) \\  & =   \frac{1}{N}\sum_{i=1}^{N}\Tr(\ketbra{\psi_i}^{\otimes k}  M_i \Pi_\textup{sym}(d,k)) \\
     &  \leq \frac{1}{N} \sum_{i=1}^N \sqrt{\Tr(\ketbra{\psi_i})^{2k}\Tr(M_i \Pi_\textup{sym}(d,k))^2} \\
     & = \frac{1}{N} \Tr(\Pi_\textup{sym}(d,k)) = \frac{\binom{d + k - 1}{k}}{N}.
    \end{align}
\end{proof}

We now show that the upper bound presented in \cref{thm:k-copy_upper_bound_pure} is attainable by pure states if and only if there exists a corresponding set of states that contains a state $k$-design. 

\begin{theorem}\label{thm:k-copy-optimal-pure-design}
A set of pure qudit states $\{\ket{\psi_i}\}_{i=1}^N$ attains the maximal $k$-copy discriminability, i.e.,
\begin{align}
\Discr(\{\ketbra{\psi_i}\}_{i=1}^N, k) = \frac{\binom{d + k - 1}{k}}{N}
\end{align}
if and only if $\{\ket{\psi_i}\}_{i=1}^N$ contains a state $k$-design.
\end{theorem}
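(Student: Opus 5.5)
The plan mirrors the proof of \cref{thm:1-copy_max_discriminable}, now carried out inside the symmetric subspace $\operatorname{Sym}_d^k$, whose dimension is $d_\textup{sym}[k]=\binom{d+k-1}{k}$.

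\emph{Sufficiency.} Assume $\{\ket{\psi_i}\}_{i=1}^N$ contains a $k$-design with weights $\{p_i\}_{i=1}^N$, so that $\sum_i p_i \ketbra{\psi_i}^{\otimes k}=\Pi_\textup{sym}(d,k)/d_\textup{sym}[k]$. I would define
\begin{align}
M_i := d_\textup{sym}[k]\, p_i \ketbra{\psi_i}^{\otimes k} + \frac{1}{N}\Big(\id-\Pi_\textup{sym}(d,k)\Big).
\end{align}
Each $M_i$ is positive, and $\sum_i M_i=\Pi_\textup{sym}(d,k)+\id-\Pi_\textup{sym}(d,k)=\id$, so this is a valid POVM. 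Because $(\id-\Pi_\textup{sym}(d,k))\ket{\psi_i}^{\otimes k}=0$, we get $\Tr(\ketbra{\psi_i}^{\otimes k}M_i)=d_\textup{sym}[k]\,p_i$. Averaging over $i$ gives success probability $d_\textup{sym}[k]/N$, which meets the bound of \cref{thm:k-copy_upper_bound_pure}.

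\emph{Necessity.} Suppose some POVM achieves $d_\textup{sym}[k]/N$. Set $\tilde M_i:=\Pi_\textup{sym}(d,k) M_i \Pi_\textup{sym}(d,k)\ge 0$. These operators are supported on $\operatorname{Sym}_d^k$, they sum to $\Pi_\textup{sym}(d,k)$, and they give the same success probability, because $\ket{\psi_i}^{\otimes k}\in\operatorname{Sym}_d^k$. Repeating the chain of \cref{thm:k-copy_upper_bound_pure} with the $\tilde M_i$, I would write
\begin{align}
\Tr(\ketbra{\psi_i}^{\otimes k}\tilde M_i)\le \sqrt{\Tr(\tilde M_i^2)}\le \Tr(\tilde M_i),
\end{align}
and summing the right-hand side over $i$ gives $d_\textup{sym}[k]$. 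Attaining the bound forces equality termwise. Equality in $\Tr(\tilde M_i^2)\le\Tr(\tilde M_i)^2$ forces each nonzero $\tilde M_i$ to be rank one. Equality in Cauchy--Schwarz then forces $\tilde M_i = c_i \ketbra{\psi_i}^{\otimes k}$ with $c_i\ge0$.

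Summing over $i$ then yields $\sum_i c_i\ketbra{\psi_i}^{\otimes k}=\Pi_\textup{sym}(d,k)$. Taking the trace gives $\sum_i c_i=d_\textup{sym}[k]$, so $p_i:=c_i/d_\textup{sym}[k]$ is a probability distribution with $\sum_i p_i\ketbra{\psi_i}^{\otimes k}=\Pi_\textup{sym}(d,k)/d_\textup{sym}[k]$. This is exactly a state $k$-design in the sense of \cref{def:sets_containing_designs}; the purity requirement there holds automatically.

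The main obstacle is a bookkeeping point in the necessity direction. The original $M_i$ need not be supported on $\operatorname{Sym}_d^k$, so the equality conditions must be applied to the compressed operators $\tilde M_i$, and one must check that $\sum_i\Tr(\tilde M_i)=d_\textup{sym}[k]$ holds exactly. A second, minor point: when $N<d_\textup{sym}[k]$ the target value exceeds $1$, so the statement is only meaningful for $N\ge d_\textup{sym}[k]$. In that regime the equality argument above goes through unchanged.
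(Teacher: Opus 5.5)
Your proposal is correct and follows essentially the same route as the paper. Sufficiency uses the identical POVM $M_i = \binom{d+k-1}{k} p_i \ketbra{\psi_i}^{\otimes k} + \frac{1}{N}(\id - \Pi_\textup{sym}(d,k))$, and necessity uses the equality conditions of the Cauchy--Schwarz and $\Tr(A^2)\le\Tr(A)^2$ steps from \cref{thm:k-copy_upper_bound_pure}; your compression $\tilde M_i = \Pi_\textup{sym}(d,k) M_i \Pi_\textup{sym}(d,k)$ and the trace argument giving $\sum_i c_i = \binom{d+k-1}{k}$ just make explicit what the paper's one-line necessity argument leaves implicit.
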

\begin{proof}
    We define $M_i:= \binom{d + k - 1}{k} p_i\ketbra{\psi_i}^{\otimes k} + \frac{\id_{d^k}-\Pi_{\textup{sym}}(d,k)}{N}$. It holds that $M_i\geq0$ and $\sum_{i=1}^N M_i= \id_{d^k}$, so that $\{M_i\}_{i=1}^N$ is a valid POVM. We may now check that: 
\begin{align}
    \frac{1}{N}\sum_{i=1}^N \Tr(\ketbra{\psi_i}^{\otimes k} M_i) =  \frac{\binom{d + k - 1}{k}}{N}.
\end{align}
To prove the other direction, we can use the analogous argument from the proof of \cref{thm:1-copy_max_discriminable}: to reach equality in the Cauchy-Schwarz inequality, the set $\{\ket{\psi_i}\}_{i=1}^N$ must contain a state $k$-copy design.
\end{proof}

We observe that \cref{thm:k-copy-optimal-pure-design} establishes a strong relationship between maximal $k$-copy discriminability for pure states and state $k$-designs. This relationship is discussed in \cref{{subsec:pure_and_design}}.

For a given $(d, N, k)$, the state $k$-design does not always exist, which implies that for some sets of parameters, no set of pure states saturates the bound (see the \cref{table:numerics}). Nevertheless, in Ref. ~\cite{ambainis2007quantumtdesignstwiseindependence, Hayashi2005StateEstimation,Seymour1984averaging}, it has been shown that for given $d$ and $k$ with a sufficiently large number of states $N$, the design always exists.

\begin{corollary}\label{coro:omega_pure}
    When $N$ is large enough to contain a $k$-design, it holds that
        \begin{align}
        \Omega^{\textup{q}}_{\textup{pure}}(d, N, k) = \frac{\binom{d + k - 1}{k}}{N}.
    \end{align}
\end{corollary}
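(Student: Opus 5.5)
The corollary follows by combining the upper bound of \cref{thm:k-copy_upper_bound_pure} with the achievability direction of \cref{thm:k-copy-optimal-pure-design}. The plan is to establish the two inequalities separately and conclude equality.

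\emph{Upper bound.} By \cref{thm:k-copy_upper_bound_pure}, every set of $N$ pure qudit states and every POVM satisfy
\begin{align}
\frac{1}{N}\sum_{i=1}^N \Tr(\ketbra{\psi_i}^{\otimes k}M_i)\leq \frac{\binom{d+k-1}{k}}{N}.
\end{align}
Taking the maximum over states and measurements gives $\Omega^{\textup{q}}_{\textup{pure}}(d,N,k)\leq \binom{d+k-1}{k}/N$.

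\emph{Lower bound.} The hypothesis that ``$N$ is large enough to contain a $k$-design'' means $N\geq N'(d,k)$. I read it as guaranteeing a set of $N$ pure states $\{\ket{\psi_i}\}_{i=1}^N$ that contains a state $k$-design in the sense of \cref{def:sets_containing_designs}. If $N$ is strictly larger than the size of a minimal design, I will pad it:
\begin{itemize}
\item take a $k$-design $\{\ket{\psi_i}\}_{i=1}^{N'}$ with weights $p_i$;
\item add $N-N'$ arbitrary further pure states, assigning them weight $p_i=0$.
\end{itemize}
This is permitted because \cref{def:sets_containing_designs} only requires pure states where $p_i>0$ (and here all states are pure anyway). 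So the padded set still contains a $k$-design. The ``if'' direction of \cref{thm:k-copy-optimal-pure-design} then gives an explicit POVM,
\begin{align}
M_i=\binom{d+k-1}{k}p_i\ketbra{\psi_i}^{\otimes k}+\frac{\id_{d^k}-\Pi_{\textup{sym}}(d,k)}{N},
\end{align}
for which the success probability equals $\binom{d+k-1}{k}/N$. Hence $\Omega^{\textup{q}}_{\textup{pure}}(d,N,k)\geq \binom{d+k-1}{k}/N$, and the two bounds together give equality.

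\emph{Main obstacle.} The only delicate point is the padding step, that is, showing that ``containing a $k$-design'' is monotone in $N$. The weights-zero argument above handles this directly. There is also a consistency check: the bound $\binom{d+k-1}{k}/N$ must be a valid probability, so I will verify $N\geq\binom{d+k-1}{k}$ whenever a $k$-design with $N$ elements exists. This holds because the operators $\ketbra{\psi_i}^{\otimes k}$ with $p_i>0$ must sum, with the weights $p_i$, to a multiple of $\Pi_{\textup{sym}}(d,k)$, whose rank is $\binom{d+k-1}{k}$. Each term has rank one, so at least $\binom{d+k-1}{k}$ of them are needed, and the expression is therefore at most $1$.
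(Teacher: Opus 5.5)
Your proposal is correct and takes the paper's route: the corollary combines the upper bound of \cref{thm:k-copy_upper_bound_pure} with the explicit POVM from the achievability direction of \cref{thm:k-copy-optimal-pure-design}, and your zero-weight padding matches the paper's remark that for larger $N$ the optimal sets are a design plus arbitrary extra states. Your rank argument for $N\geq\binom{d+k-1}{k}$ is valid but redundant, since the explicit POVM already realises $\binom{d+k-1}{k}/N$ as a success probability, which is therefore at most $1$.
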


We remark that an analogous result to Corollary~\ref{coro:omega_pure} for unitary operations was obtained in Ref.~\cite{2022BavarescoMuraoQuintino}, where the authors show that the maximal probability for discriminating a set of $N$ $d$-dimensional unitary operations, when $k$ calls are available, is upper bounded by $\frac{\binom{d^2 + k - 1}{k}}{N}$, a bound that is attainable by the unitary group $k$-design.

Once, for some $(d, N, k)$, the design exists, the bound is saturated, and the set of states that has maximal $k$-copy discriminability is the design up to rotation and it is unique. Now, if we fix $d$ and $k$ but increase the number of states, i.e., $\forall N' > N$, the maximally $k$-copy discriminable set of states is not unique, and all of the most discriminable sets are composed of a design on $N$ states and arbitrary $N'-N$ states. 

As discussed in the preliminaries, in ~\cref{def:epsilon-design}, for any constant $k$, for every $d>2k$ there exists an $\epsilon$-approximate $k$ design for $\epsilon=O(d^{-1/3})$ with $N=O(d^{3t})$ states. 
For a given $(d,N,k)$, the exact design does not necessarily exist, but approximate designs exist for any set of parameters. Moreover, the states in this approximate design can be efficiently generated, and such a set admits high discriminability.

\begin{theorem} \label{thm:epsilon}
    If $\{\ket{\psi_i}\}_{i=1}^N$ is an $\epsilon$-approximate state $k$-design \cite{ambainis2007quantumtdesignstwiseindependence}, i.e., there exists a distribution $p_i$ such that
    \begin{align}
    &(1-\epsilon) \frac{\Pi_\textup{sym}(d,k)}{\Tr(\Pi_\textup{sym}(d,k))} \leq \sum_{i=1}^{N} p_i \ketbra{\psi_i}^{\otimes k}, \\
    & \sum_{i=1}^{N} p_i \ketbra{\psi_i}^{\otimes k} \leq (1+\epsilon) \frac{\Pi_\textup{sym}(d,k)}{\Tr(\Pi_\textup{sym}(d,k))},
    \end{align}
    then, 
    \begin{align}
       \Discr(\{\ketbra{\psi_i}\}_{i=1}^N , k) \geq \frac{(1-\epsilon)}{(1+\epsilon)}\frac{\binom{d + k - 1}{k}}{N} 
    \end{align}
\end{theorem}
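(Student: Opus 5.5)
We construct an explicit POVM, in the same spirit as the proof of \cref{thm:k-copy-optimal-pure-design}, and lower bound the success probability it achieves. Write $D:=\binom{d+k-1}{k}=\Tr(\Pi_\textup{sym}(d,k))$ and $S:=\sum_{i=1}^N p_i\ketbra{\psi_i}^{\otimes k}$. The hypothesis gives $S\leq (1+\epsilon)\Pi_\textup{sym}(d,k)/D$. Define
\begin{align}
M_i := \frac{D}{1+\epsilon}\, p_i \ketbra{\psi_i}^{\otimes k} + \frac{1}{N}\Big(\id_{d^k} - \frac{D}{1+\epsilon} S\Big).
\end{align}

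First we check that this is a valid POVM. Each $M_i$ is positive semidefinite. The first term is manifestly positive. For the second term, the upper design inequality gives $\frac{D}{1+\epsilon}S\leq \Pi_\textup{sym}(d,k)\leq \id_{d^k}$, so $\id_{d^k}-\frac{D}{1+\epsilon}S\geq 0$. Summing over $i$, the first terms add up to $\frac{D}{1+\epsilon}S$ and the second terms add up to $\id_{d^k}-\frac{D}{1+\epsilon}S$, so $\sum_{i=1}^N M_i=\id_{d^k}$. The only point requiring care is this positivity of the remainder, and it is exactly where the upper bound with $(1+\epsilon)$ enters. This explains the normalisation by $1+\epsilon$.

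Next we evaluate the success probability. Since the remainder term is positive, its contribution $\frac{1}{N}\Tr\!\big(\ketbra{\psi_i}^{\otimes k}(\cdot)\big)$ is nonnegative and can be dropped. The first term contributes $\frac{D}{1+\epsilon}p_i\,|\braket{\psi_i}{\psi_i}|^{2k}=\frac{D}{1+\epsilon}p_i$. Hence
\begin{align}
\frac{1}{N}\sum_{i=1}^N\Tr(\ketbra{\psi_i}^{\otimes k}M_i)\geq \frac{1}{N}\frac{D}{1+\epsilon}\sum_i p_i=\frac{D}{(1+\epsilon)N}\geq \frac{1-\epsilon}{1+\epsilon}\frac{D}{N}.
\end{align}
Because $\Discr$ is a maximum over POVMs, this lower bounds $\Discr(\{\ketbra{\psi_i}\}_{i=1}^N,k)$.

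This argument does not use the lower design inequality at all; it would only be needed for a sharper bound. If the intended construction were instead a pretty-good-measurement type $M_i=p_iS^{-1/2}\ketbra{\psi_i}^{\otimes k}S^{-1/2}$ on the symmetric subspace, then the lower inequality $S\geq(1-\epsilon)\Pi_\textup{sym}/D$ together with the upper inequality would give exactly the stated $\frac{1-\epsilon}{1+\epsilon}$ factor. The simpler construction above already suffices, so we adopt it.
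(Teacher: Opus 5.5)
Your proof is correct and is essentially the paper's construction: the paper takes $M_i=\frac{D}{1+\epsilon}p_i\ketbra{\psi_i}^{\otimes k}+\frac{1}{N}\big(\id_{d^k}-\Pi_\textup{sym}(d,k)\big)$ and appends the leftover $\Pi_\textup{sym}(d,k)-\frac{D}{1+\epsilon}S$ as an extra $(N+1)$-th outcome (with your $D$ and $S$), whereas you spread that leftover uniformly over the $N$ outcomes, which directly gives an $N$-outcome POVM as \cref{def:discriminability} requires (the paper's extra element would otherwise have to be merged into one of the others). As you observed, both constructions actually give $\frac{D}{(1+\epsilon)N}$ using only the upper design inequality, so the stated factor $\frac{1-\epsilon}{1+\epsilon}$ is weaker than what either argument proves.
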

\begin{proof}
    We define $M_i:= \frac{1}{(1+\epsilon)}\binom{d + k - 1}{k} p_i\ketbra{\psi_i}^{\otimes k}  + \frac{\id_{d^k}-\Pi_{\textup{sym}}(d,k)}{N}$. It is then straightforward to verify that $M_i\geq0$ and $\sum_{i=1}^N M_i \leq \id_{d^k}$. This means that $\{M_i\}_{i=1}^N$ is a subnormalised POVM but we can append $M_{N+1} = \id_{d^k} - \sum_{i=1}^N M_i$, so $\{M_i\}_{i=1}^{N+1}$ is a valid POVM. Using this measurement, a direct calculation then shows the bound
    \begin{align}
       \Discr(\{\ketbra{\psi_i}\}_{i=1}^N , k) \geq \frac{(1-\epsilon)}{(1+\epsilon)}\frac{\binom{d + k - 1}{k}}{N}. 
    \end{align}
\end{proof}

\subsection{In-depth analysis of the case where $d=2$, $N=3$, and $k\in \mathbb{N}$}\label{subsection:pure_examples}

The first non-trivial scenario is the case $(d, N, k) = (2, 3, 2)$. In this case, no state design exists, i.e., there is no set of states that attains the bound given by \cref{thm:k-copy_upper_bound_pure}. From the numerical results and the algorithm for finding the lower bound of $\Omega^{\textup{q}}_\textup{pure}(d=2, N=3, k=2)$ described in \cref{sec:numerics_lower}, the good candidate for set of qubits with maximal two-copy discriminability is an equilateral triangle on the equator of the Bloch sphere (see \cref{fig:triangle}). In the following theorem we proved that the state ensemble of trine states is indeed the optimal set of states for $(d, N, k) = (2, 3, 2)$\footnote{Moreover, after the first version of this manuscript appeared on arXiv, Hyunho Cha presented us a stronger proof of this result, which shows that even when mixed states are considered, for $(d, N, k) = (2, 3, 2)$, the trine is most discriminable and have that $\Omega^{\textup{q}}(d=2, N=3, k=2) =  \frac{1}{3}\left(\frac{3}{2}+\sqrt{2}\right) \approx 0.9714$. This result is available at Ref.~\cite{cha}.}.

\begin{restatable}{theorem}{thmTrineNew}\label{thm:trine_new}
    For $N = 3$, and $k=2$ it holds that 
    \begin{align}
        \label{eq:232_value}
        &\Omega^{\textup{q}}_\textup{pure}(d=2,N = 3,k=2) = \\ 
        &=\Omega^\textup{real}_\textup{pure}(d=2,N = 3,k=2) = \\
        &=\frac{1}{2}+\frac{\sqrt{2}}{3} \approx 0,9714.
    \end{align}
    This maximal value is attained with the state ensemble of trine states that constitutes an equiangular triangle in the XZ-plane of the Bloch sphere.  

    Furthermore, when restricting to real states one has for any $N\geq 3$ that
    \begin{align}
       \Omega^\textup{real}_\textup{pure}(d=2,N \geq 3,k=2)= \frac{1}{N}\left(\frac{3}{2}+\sqrt{2}\right).
    \end{align}
\end{restatable}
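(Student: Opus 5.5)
The plan is to work in the three-dimensional symmetric subspace $\operatorname{Sym}_2^2$ and use semidefinite duality. A feasible $Y$ in the dual of minimum-error discrimination, i.e.\ $Y \geq \frac{1}{N}\ketbra{\psi_i}^{\otimes 2}$ for all $i$, gives the upper bound $\Discr \leq \Tr(Y)$. Matching this with an explicit measurement on the trine gives the value.

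\textbf{Real states, all $N \geq 3$.} Write a real qubit as $\ket{\psi_\theta}=\cos\theta\ket{0}+\sin\theta\ket{1}$, with Bloch angle $\varphi = 2\theta$ in the XZ-plane. Use the orthonormal basis $e_0=(\ket{00}+\ket{11})/\sqrt2$, $e_1=(\ket{00}-\ket{11})/\sqrt2$, $e_2=(\ket{01}+\ket{10})/\sqrt2$ of the real part of $\operatorname{Sym}_2^2$. In this basis one checks
\begin{align}
\ket{\psi_\theta}^{\otimes 2} = \tfrac{1}{\sqrt2}\bigl(e_0 + \cos\varphi\, e_1 + \sin\varphi\, e_2\bigr) = \tfrac{1}{\sqrt2}(e_0+u_\varphi),
\end{align}
where $u_\varphi$ is a unit vector in the plane $P=\operatorname{span}\{e_1,e_2\}$.

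Take the rotation-invariant ansatz $Y=\alpha\, e_0e_0^T+\beta\,\Pi_P$, with $\Pi_P$ the projector onto $P$. Since $xx^T\leq Y$ if and only if $x^TY^{-1}x\leq 1$, the condition $Y\geq \frac{1}{2N}(e_0+u)(e_0+u)^T$ for every unit $u\in P$ reduces to
\begin{align}
\frac{1}{\alpha}+\frac{1}{\beta}\leq 2N .
\end{align}
Minimising $\Tr Y=\alpha+2\beta$ under this constraint gives
\begin{align}
\Tr Y=\frac{(1+\sqrt2)^2}{2N}=\frac{1}{N}\Bigl(\frac32+\sqrt2\Bigr),
\end{align}
which upper-bounds every real ensemble, including real mixed ones, and every $N$.

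For achievability, use three states at $\varphi\in\{0,2\pi/3,4\pi/3\}$ (the trine) and arbitrary further states for $N>3$. Take the measurement $M_i=c\,(a e_0+u_i)(a e_0+u_i)^T$ on the trine elements, with $M_i=0$ for the remaining states, and add the antisymmetric projector to any element so that $\sum_i M_i=\id$. Completeness on $\operatorname{Sym}_2^2$ fixes $3ca^2=1$ and $3c/2=1$. The success probability is then
\begin{align}
\frac{c}{2N}\sum_{i=1}^{3}(a+1)^2=\frac{(1+\sqrt2)^2}{2N}.
\end{align}
This is exactly the complementary-slackness choice and matches the dual value. This settles the real statement for all $N\geq3$, and gives the value $\frac12+\frac{\sqrt2}{3}$ at $N=3$.

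\textbf{Complex pure states, $N=3$.} The lower bound is immediate from the trine, so only the upper bound remains. I would reduce to a real-type configuration. The discrimination problem depends only on the Gram matrix $G_{ij}=\langle\psi_i|\psi_j\rangle^2$, up to rephasing each state.

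Three Bloch vectors always lie on a common (possibly small) circle. After a rotation, the three states are $\cos\frac{\vartheta}{2}\ket0+e^{i\phi_j}\sin\frac{\vartheta}{2}\ket1$ with fixed latitude $\vartheta$. In the spin-1 picture, this leaves two moduli: the latitude and the azimuth configuration.

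I would then try a dual certificate adapted to the circle's axis: $Y=\sum_{m}y_m\ketbra{m}$ in the $J_z$ eigenbasis $\ket{m}$, $m\in\{-1,0,1\}$, of $\operatorname{Sym}_2^2$. The overlaps are $|\langle m|\psi^{\otimes2}\rangle|^2\in\{\cos^4\frac{\vartheta}{2},\ \tfrac12\sin^2\vartheta,\ \sin^4\frac{\vartheta}{2}\}$. This ansatz is invariant under $z$-rotations, so feasibility must hold for every azimuth, and I expect to need off-diagonal terms to capture the specific three points.

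\textbf{Main obstacle.} The naive diagonal ansatz only yields the trivial bound $3/N$ away from the equator, because it must cover the whole small circle rather than just three points. So the main difficulty is the complex $N=3$ upper bound. I would first try to show that, for fixed $\vartheta$, the optimal success probability is maximised at equal azimuth spacing, using the $\mathbb{Z}_3$ symmetry and concavity of the dual value in $Y$. That would reduce the problem to the group-covariant case, where the pretty good measurement is optimal and the success probability is $\frac{1}{9}\bigl(\sum_k\sqrt{\lambda_k}\bigr)^2$, with $\lambda_k$ the eigenvalues of the circulant Gram matrix. The final step is a one-variable maximisation over $\vartheta$, which should peak at the equator $\vartheta=\pi/2$ and give $\frac12+\frac{\sqrt2}{3}$.
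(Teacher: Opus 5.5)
Your real-state argument is correct and is essentially the paper's. The paper uses the same $SO(2)$-invariant dual variable, written as $\frac1N\bigl(\alpha\Pi_{\textup{sym}}(2,2)+\beta\ketbra{\phi^+}\bigr)$, which is a reparametrisation of your $\alpha e_0e_0^T+\beta\Pi_P$. For achievability it invokes the regular-polygon design via \cref{thm:SO(2)} instead of your explicit measurement.

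One side remark is wrong: the bound does not cover mixed real states. Your $Y$ vanishes on the antisymmetric subspace, where $\rho^{\otimes 2}$ has weight whenever $\rho$ is mixed. Indeed, the real trine plus $\id/2$ gives $\approx 0.7911>\frac14\bigl(\frac32+\sqrt2\bigr)\approx 0.7286$ at $N=4$. This does not affect the statement, which concerns pure states only.

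The genuine gap is in the complex $N=3$ case, and it comes from a miscalculation. The $J_z$-diagonal ansatz you dismiss is exactly the certificate the paper uses. There the circle's axis is $y$, and $\ket{\alpha_1},\ket{\alpha_2},\ket{\alpha_3}$ are the $J_y$ eigenvectors of $\operatorname{Sym}_2^2$. This ansatz does not degrade to $3/N$.

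Since $|c_m|:=|\langle m|\psi^{\otimes 2}\rangle|$ does not depend on the azimuth, requiring $Y\geq\frac1N\ketbra{\psi}^{\otimes2}$ at every point of the circle is a single scalar condition, $\sum_m |c_m|^2/y_m\leq N$. This is the same $x^\dagger Y^{-1}x\leq 1$ criterion you used for rebits, now allowing $y_{1}\neq y_{-1}$. Minimising $\Tr Y$ by Cauchy--Schwarz gives
\[
\Tr Y=\frac1N\Bigl(\sum_m|c_m|\Bigr)^2=\frac1N\Bigl(\cos^2\tfrac{\vartheta}{2}+\tfrac{\sin\vartheta}{\sqrt2}+\sin^2\tfrac{\vartheta}{2}\Bigr)^2=\frac1N\Bigl(1+\tfrac{\sin\vartheta}{\sqrt2}\Bigr)^2\leq\frac1N\Bigl(\tfrac32+\sqrt2\Bigr).
\]
So covering the whole small circle costs nothing, and the bound only improves away from the equator. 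At $N=3$ it equals $\frac12+\frac{\sqrt2}{3}$, which completes the proof; this is precisely the paper's argument.

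The route you actually propose instead is not a proof. The claim that equal azimuth spacing is optimal at fixed $\vartheta$ is left open. The tool you suggest also points the wrong way. The optimal success probability is a maximum of linear functionals of the $\ketbra{\psi_i}^{\otimes 2}$, hence convex rather than concave. A $\mathbb{Z}_3$-averaging argument therefore cannot single out the symmetric configuration as a maximiser, and the map from azimuths to states is not affine in any case. As written, the complex upper bound is not established.
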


The proof of \cref{thm:trine_new} is presented in \cref{app:trine}. The maximal discriminability in this case is achieved by using pretty good measurement (PGM) or square-root measurements \cite{Hausladen01121994}. The theorem \cref{thm:trine_new} states results for $\Omega^\textup{real}_\textup{pure}(d, N, k)$ as well. We will have a more in-depth discussion about real states in \cref{sec:real} and generalized results on maximal $k$-copy discriminability for real pure states in \cref{thm:SO(d)} and \cref{thm:SO(2)}.

\begin{figure}[h!]\centering
	\includegraphics[width=0.6\linewidth]{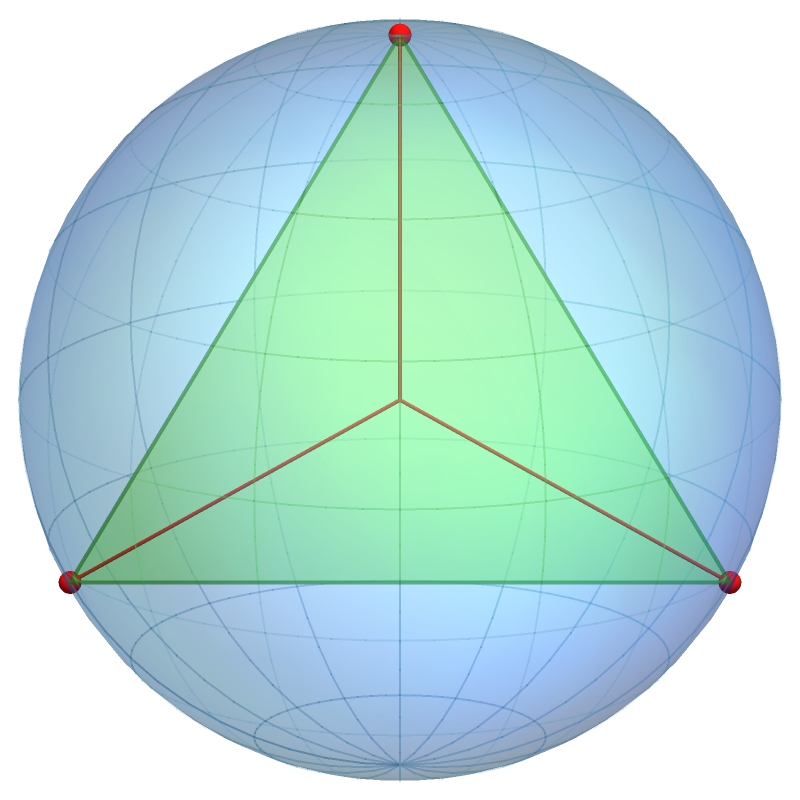}
	\caption{An equilateral triangle in the XZ plane of the Bloch sphere.}
	\label{fig:triangle}
\end{figure} 

We believe that the equilateral triangle set of states also has the maximal discriminability for cases $(d = 2, N = 3, k = 3, 4)$, for which we have strong numerical evidence. The exact values of maximal discriminability for all described cases can be found in a table \cref{table:numerics} in the appendix.

\subsection{Asymptotic analysis  of $\Omega^{\textup{q}}_{\textup{pure}}(d, N, k)$} \label{subsec:asympt_pure}

If $N$ is large enough for a $d$-dimensional state $k$-design to exist, we have that 
\begin{align}
   \Omega^{\textup{q}}_{\textup{pure}}(d, N\geq N', k) &=  \frac{\binom{d + k - 1}{k}}{N}.
\end{align}
Hence, for $d=2$, we have that
\begin{align}
   \Omega^{\textup{q}}_{\textup{pure}}(d=2, N\geq N', k) &=  \frac{k+1}{N},
\end{align}
and, for $d>2$, we may use the inequalities
\begin{align}
   \frac{k^{d-1}}{(d-1)!} \leq \binom{d + k - 1}{k} \leq  \frac{(k+d-1)^{d-1}}{(d-1)!},
\end{align}
to conclude that for sufficiently large $N$, for any dimension $d$, we have that,
\begin{align}
    \Omega^{\textup{q}}_{\textup{pure}}(d, N\geq N', k)  &= \frac{1}{N} \frac{k^{d-1}}{(d-1)!} +  \frac{o(k^{d-1})}{N(d-1)!} \\
    &= \frac{1}{N}\Theta \left(   k^{d-1} \right). \label{eq:pure_asymptotic_theta}
\end{align}

In the limit of large numbers of copies $k$, we can construct an upper bound for our function:
\begin{align}
    \Omega^{\textup{q}}_{\textup{pure}}(d, N\geq N', k)  \leq \frac{1}{N} \frac{(k+d-1)^{d-1}}{(d-1)!} \leq \\
    \leq \frac{(k+d-1)^{d-1}}{C_dk^d(d-1)!} = \frac{1}{k(d-1)!}(1+\frac{d-1}{k})^{d-1},
\end{align}
which behaves the following way for a fixed $d$ and a large enough $k$:
\begin{align}
    \Omega^{\textup{q}}_{\textup{pure}}(d, N\geq N', k) &= \frac{1}{k(d-1)!}\big(1+o(\frac{1}{k})\big) \\ &\approx  \frac{1}{k(d-1)!}.
\end{align}

Here we used that we can bound  $c_d k^d \leq N'(d,k) \leq C_d k^d$, the result discussed in \cref{subsection:state_design}. 
This allows us to see that to saturate the analytical bound in the limit of large numbers of copies $k$, the number of pure quantum states needed to contain a state $k$-design scales so quickly, that it makes the optimal discrimination a very hard task, and $\Omega^{\textup{q}}_{\textup{pure}}(d, N\geq N', k)$ tends to zero. 

\section{The $k$-copy most discriminable mixed states} \label{sec:mixed}

We now consider the problem on arbitrary, and possibly mixed, quantum states. From the results for the single-copy regime \cref{thm:1-copy_max_discriminable} and \cref{thm:orthogonal_supp}, one could have the intuition that mixed states are less discriminable than the pure ones and that having them in a set will only decrease the discriminability of that set. In fact, for the multicopy regime, the opposite phenomenon can be witnessed. From the numeric results in \cref{table:numerics} and the following theorem, it can be seen that the mixed state strategies are often more beneficial. 

\subsection{Mixed states are more $k$-copy discriminable than pure states}

We start this subsection by presenting a lower bound on the quantity $\Omega^{\textup{q}}(d,N,k)$ that makes use of the value $\Omega^{\textup{q}}_\textup{pure}(d,N-1,k)$.

\begin{theorem}\label{thm:n-1_pure_and_fully_mixed}
For every $d,N,k\in \mathbb{N}$, ($N>1$) it holds that 
\small
\begin{align}
    \Omega^{\textup{q}}(d,N,k) \geq \quad &\frac{N-1}{N} \Omega^{\textup{q}}_\textup{pure}(d,N-1,k)  + \\
    & \frac{1}{N} \left( 1 - \frac{\binom{d + k - 1}{k} }{d^k}\right).
\end{align}
\normalsize
\end{theorem}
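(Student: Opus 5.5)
The plan is an explicit construction: take an optimal pure ensemble for $N-1$ states and append the maximally mixed state $\rho_N=\id/d$ as the $N$-th state. The measurement for the first $N-1$ states will be supported only on the symmetric subspace. The element for $\rho_N$ will be the projector onto the complement of the symmetric subspace.

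First I fix pure states $\{\ket{\psi_i}\}_{i=1}^{N-1}$ and a POVM $\{M_i'\}_{i=1}^{N-1}$ on $(\mathbb{C}^d)^{\otimes k}$ that attain $\Omega^{\textup{q}}_\textup{pure}(d,N-1,k)$. The maximum exists by compactness of the sets of states and POVMs. I then compress each element to the symmetric subspace by setting $\tilde M_i:=\Pi_\textup{sym}(d,k) M_i' \Pi_\textup{sym}(d,k)$ for $i\le N-1$. These operators are positive and satisfy $\sum_{i=1}^{N-1}\tilde M_i=\Pi_\textup{sym}(d,k)$. Since $\Pi_\textup{sym}\ket{\psi_i}^{\otimes k}=\ket{\psi_i}^{\otimes k}$, as used in the proof of \cref{thm:k-copy_upper_bound_pure}, we get $\Tr(\ketbra{\psi_i}^{\otimes k}\tilde M_i)=\Tr(\ketbra{\psi_i}^{\otimes k}M_i')$, so no success probability is lost on the pure states. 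I complete the POVM with $\tilde M_N:=\id_{d^k}-\Pi_\textup{sym}(d,k)$, which is positive because it is a projector, and the elements sum to $\id_{d^k}$.

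Next I evaluate the success probability of this POVM on the ensemble $\{\ketbra{\psi_1},\dots,\ketbra{\psi_{N-1}},\id/d\}$. The first $N-1$ terms contribute $\frac{1}{N}(N-1)\,\Omega^{\textup{q}}_\textup{pure}(d,N-1,k)$. The last term is $\frac{1}{N}\Tr\big((\id/d)^{\otimes k}(\id-\Pi_\textup{sym})\big)=\frac{1}{N}\big(1-d_\textup{sym}[k]/d^k\big)$, which follows from \eqref{eq:dim_sym_subspace}. Summing the two contributions gives a lower bound on $\Discr$ of this set, and hence on $\Omega^{\textup{q}}(d,N,k)$.

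The only step needing care is the compression to the symmetric subspace. I need to check that it preserves positivity, preserves the sum rule on $\operatorname{Sym}_d^k$, and does not change the pure-state success probabilities, and the identity $\Pi_\textup{sym}\ket{\psi}^{\otimes k}=\ket{\psi}^{\otimes k}$ handles the last point. Everything else is direct computation.
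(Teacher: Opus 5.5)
Your proposal is correct and uses the same construction as the paper: an optimal pure ensemble for $N-1$ states plus $\id/d$, measured with the optimal POVM restricted to the symmetric subspace and completed by $\id-\Pi_\textup{sym}(d,k)$. Your compression $\Pi_\textup{sym}(d,k)M_i'\Pi_\textup{sym}(d,k)$ is in fact slightly more careful than the paper's choice $M_i=\Pi_\textup{sym}(d,k)L_i$, which is guaranteed to be positive only when $L_i$ commutes with $\Pi_\textup{sym}(d,k)$ (for instance after twirling over permutations).
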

\begin{proof}
    This lower bound can be achieved with the following construction.
    For $i \in \{1, \ldots, N-1\}$, take $\rho_i = \ketbra{\psi_i}$, where $\{\ket{\psi_i}\}_{i = 1}^{N-1}$ is a set of pure states with $\textup{Discr}(\{\ket{\psi_i}\}_{i = 1}^{N-1}, k) = \Omega^{\textup{q}}_{\textup{pure}}(d, N-1, k)$, and for $i = N$, take $\rho_i = \frac{\id}{d}$.
    Then, define the measurement $\{M_i\}_{i = 1}^N$ by taking, for $i \in \{1, \ldots, N-1\}$ $M_i = \Pi_{\textup{sym}}(d, k) L_{i}$ where $\{L_{i}\}_{i=1}^{N-1}$ is any POVM that optimally discriminates $k$-copies of the states $\{\ket{\psi_i}\}_{i = 1}^{N-1}$, and $M_N = \mathbb{I} - \Pi_{\textup{sym}}(d, k)$.
    With this construction, calculation of $\frac{1}{N} \sum_{i=1}^{N} \Tr(\rho_i^{\otimes k} M_i)$ leads to the desired lower bound.
\end{proof}
\cref{thm:n-1_pure_and_fully_mixed} implies the following Corollary that indicates an advantage in the usage of mixed states if the number $N$ of states is large enough.
\begin{corollary} \label{cor:design_and_mixed}
Let $N'$ be the smallest number of states such that there exists a state $k$-design. For any $N>N'$, there exist sets of mixed states that are more discriminable than any set of pure states, that is, $\Omega^{\textup{q}}_\textup{pure}(d,N,k) < \Omega^{\textup{q}}(d,N,k)$. 

More precisely,  
 \begin{align}
    \Omega^{\textup{q}}_\textup{pure}(d,N> N',k) = \frac{\binom{d + k - 1}{k} }{N} 
\end{align}
and
\begin{align}
  \Omega^{\textup{q}}(d,N>N',k) \geq \frac{\binom{d + k - 1}{k} }{N} + \frac{1}{N} \left( 1 - \frac{\binom{d + k - 1}{k} }{d^k}\right).
\end{align}
\end{corollary}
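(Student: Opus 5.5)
The corollary combines \cref{coro:omega_pure} with \cref{thm:n-1_pure_and_fully_mixed}, together with the observation that the additive correction in the latter is strictly positive. I would carry out three steps.

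\textbf{Pure-state value.} Since $N > N'$, a state $k$-design on $N'$ states exists. Padding it with $N-N'$ arbitrary pure states (weight zero in the distribution) gives a set of $N$ pure states that contains a $k$-design. \cref{thm:k-copy-optimal-pure-design} then gives discriminability $\binom{d+k-1}{k}/N$, and \cref{thm:k-copy_upper_bound_pure} shows this cannot be exceeded. Hence
\begin{align}
\Omega^{\textup{q}}_\textup{pure}(d,N,k) = \frac{\binom{d+k-1}{k}}{N}.
\end{align}

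\textbf{Mixed-state lower bound.} Because $N-1 \geq N'$, the same argument applied to $N-1$ states gives
\begin{align}
\Omega^{\textup{q}}_\textup{pure}(d,N-1,k) = \frac{\binom{d+k-1}{k}}{N-1}.
\end{align}
Substituting this into \cref{thm:n-1_pure_and_fully_mixed}, the prefactor $\frac{N-1}{N}$ cancels the denominator. This yields exactly
\begin{align}
\Omega^{\textup{q}}(d,N,k) \geq \frac{\binom{d+k-1}{k}}{N} + \frac{1}{N}\left(1-\frac{\binom{d+k-1}{k}}{d^k}\right).
\end{align}
If one wants more explicitly why this value is attainable, it comes from the construction in that theorem. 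The design states are measured inside $\Pi_\textup{sym}(d,k)$, and the maximally mixed state is detected by $\id-\Pi_\textup{sym}(d,k)$ with probability $1-\Tr(\Pi_\textup{sym}(d,k))/d^k$.

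\textbf{Strict inequality.} This is the only step needing care: we must show $\binom{d+k-1}{k} < d^k$, i.e.\ that the symmetric subspace is a proper subspace of $(\mathbb{C}^d)^{\otimes k}$. This holds whenever $d\geq 2$ and $k\geq 2$; for instance, the antisymmetric vector $\ket{01}-\ket{10}$ tensored with $\ket{0}^{\otimes (k-2)}$ is a nonzero vector orthogonal to $\operatorname{Sym}_d^k$.

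In the degenerate cases $k=1$ or $d=1$ the correction term vanishes. These cases are excluded implicitly, since they are the single-copy and trivial settings, where \cref{thm:1-copy_max_discriminable} already shows that mixed states give no advantage. I would state this assumption explicitly. With it, $\Omega^{\textup{q}}_\textup{pure}(d,N,k) < \Omega^{\textup{q}}(d,N,k)$ follows.
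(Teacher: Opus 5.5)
Your proposal is correct and matches the paper's argument: the corollary follows by inserting $\Omega^{\textup{q}}_\textup{pure}(d,N-1,k)=\binom{d+k-1}{k}/(N-1)$ (valid because $N-1\geq N'$) into \cref{thm:n-1_pure_and_fully_mixed}, with the pure-state value itself coming from \cref{coro:omega_pure}. Your check that $\binom{d+k-1}{k}<d^k$ for $d,k\geq 2$ supplies the strictness step the paper leaves unstated, and you are right that the strict inequality fails for $k=1$ (there \cref{thm:1-copy_max_discriminable} gives $\Omega^{\textup{q}}=\Omega^{\textup{q}}_\textup{pure}=d/N$), which is a hypothesis the paper's statement needs but does not state.
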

In particular, for qubits if there exists a state $k$-design with less than $N$ states, it holds that 
\begin{align}
    \Omega^{\textup{q}}_\textup{pure}(d=2,N\geq N',k) = \frac{k+1}{N} 
\end{align}
and 
\begin{align}
    \Omega^{\textup{q}}(d=2,N>N',k) \geq \frac{1}{N} \left(k + 2 -\frac{k+1}{2^k} \right).
\end{align}

This advantage provided by the usage of mixed states may be understood in the following way.
Allowing for a candidate set to be constructed from $d$-dimensional qudit states $\{\rho_i\}_{i=1}^N$, $\forall i$ $\rho_i \in \mathcal{L}(\mathbb{C}^d)$ implies that the discrimination problem is defined on a permutation-invariant space since $\rho_i^{\otimes k} \in \operatorname{Perm}_d^k$. The permutation-invariant space is bigger than the space $\mathcal{L}(\operatorname{Sym}_d^k)$ of operators supported on the bosonic-symmetric subspace; for example, a maximally mixed state belongs to $\operatorname{Perm}_d^k$ but not to $\mathcal{L}(\operatorname{Sym}_d^k)$, which leads to the fact that the optimal strategy now involves the optimisation over a bigger search space. In the optimal measurements in \cref{thm:n-1_pure_and_fully_mixed}, the larger space is utilised by having a measurement outcome defined as $M_N = \mathbb{I} - \Pi_{\textup{sym}}(d, k)$, i.e., $M_N \perp \mathcal{L}(\operatorname{Sym}_d^k)$, in contrast to the optimal measurements for the pure state case \cref{thm:k-copy_upper_bound_pure} where the subspace orthogonal to the symmetric subspace is not used. All $k$-fold versions of pure states in the candidate set will never click in the $M_N$ measurement element, since $\ketbra{\psi_i}^{\otimes k} \in \mathcal{L}(\operatorname{Sym}_d^k)$. This is where the increase of maximal $k$-copy discriminability comes from.

\subsection{An upper bound on $\Omega^{\textup{q}}(d,N,k)$} 

We now present a fundamental limitation on the discrimination of $k$ copies of arbitrary sets of quantum states that are uniformly distributed. This upper bound holds not only for pure states, but also for mixed ones.

\begin{theorem}[Upper bound for mixed states] \label{thm:k-copy_upper_bound_mixed}
The maximal $k$-copy discriminability of $N$ qudit states respects the upper bound
\begin{align}\label{eq:upper_mixed}
    \Omega^{\textup{q}}(d, N, k) \leq  \frac{\binom{d^2 + k - 1}{k}}{N},
\end{align}
where $\binom{d^2 + k - 1}{k}$ is the dimension of the symmetric subspace of $(\mathbb{C}^{d^2})^{\otimes k}$.
\end{theorem}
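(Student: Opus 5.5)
We reduce the mixed-state problem to the pure-state bound of \cref{thm:k-copy_upper_bound_pure} by purification. Each $\rho_i\in\mathcal{L}(\mathbb{C}^d)$ admits a purification $\ket{\phi_i}\in\mathbb{C}^d\otimes\mathbb{C}^d\cong\mathbb{C}^{d^2}$ with $\Tr_B\ketbra{\phi_i}=\rho_i$. Then $\ketbra{\phi_i}^{\otimes k}$, with systems reordered as $(A_1B_1)\cdots(A_kB_k)$, is a $k$-fold pure state of dimension $d^2$, and its partial trace over $B_1\cdots B_k$ equals $\rho_i^{\otimes k}$.

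Given any POVM $\{M_i\}_{i=1}^N$ on $(\mathbb{C}^d)^{\otimes k}$, we define $\tilde M_i := M_i\otimes \id_{B_1\cdots B_k}$, suitably reordered to act on $(\mathbb{C}^{d^2})^{\otimes k}$. This is again a POVM, and
\begin{align}
\Tr(\rho_i^{\otimes k}M_i)=\Tr\big(\ketbra{\phi_i}^{\otimes k}\tilde M_i\big).
\end{align}
Hence the success probability for $\{\rho_i\}$ is at most $\Discr(\{\ketbra{\phi_i}\}_{i=1}^N,k)$ for the purified set, which by \cref{thm:k-copy_upper_bound_pure} (applied in dimension $d^2$) is at most $\binom{d^2+k-1}{k}/N$. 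Maximising over states and measurements gives \eqref{eq:upper_mixed}.

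The only point needing care is that the reordering of tensor factors is a fixed unitary independent of $i$. It maps $\ket{\phi_i}^{\otimes k}$ to a genuine $k$-fold tensor power in $(\mathbb{C}^{d^2})^{\otimes k}$, so the symmetric-subspace argument of \cref{thm:k-copy_upper_bound_pure} applies verbatim. Equivalently, one could argue directly: $\rho^{\otimes k}$ lies in $\operatorname{Perm}_d^k$, whose dimension as a vector space is $\binom{d^2+k-1}{k}$, and run the Cauchy–Schwarz argument in the Hilbert–Schmidt space. The purification route is cleaner and avoids that computation, so it is the one we use.
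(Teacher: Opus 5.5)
Your proof is correct and uses the same purification reduction as the paper. The paper states it in two lines; you usefully make explicit that $\tilde M_i = M_i\otimes\id_{B_1\cdots B_k}$ is a POVM on $(\mathbb{C}^{d^2})^{\otimes k}$ with the same success probabilities, so the pure-state bound in dimension $d^2$ applies.

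I would drop the closing aside, though. A direct Cauchy--Schwarz argument in Hilbert--Schmidt space on $\operatorname{Perm}_d^k$ does not by itself produce $\binom{d^2+k-1}{k}$; the straightforward version only gives $d^k/N$. The natural way to recover the $\operatorname{Perm}_d^k$ dimension is to vectorise $\sqrt{\rho_i}$, which is just the purification again.
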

\begin{proof}
    Discriminating between $k$ copies of a set of mixed states $\{\rho_i\}_{i = 1}^N$ is never easier than discriminating between $k$ copies of their purifications.
    Since such purifications always exist in dimension $d^2$, the upper bound \cref{eq:upper_mixed} follows from \cref{thm:k-copy_upper_bound_pure}. 
\end{proof}

It is worth noting that, while this universal upper bound is not tight (and likely not attainable), as discussed in \cref{subsec:AsymptoticOmega}, this upper bound is not far from being saturated by pure states when the number of states $k$ is large and $N$ is large enough to contain a state $k$-design.

\subsection{In-depth analysis of the case where $d=2$, $N=4$, and $k\in \mathbb{N}$}\label{subsection:mixed_examples}
One example of the fact that the discriminability of mixed states is higher than that of pure states can be seen in the scenario $(d, N, k) = (2, 4, 2)$. The state $2$-design for $N=4$ exists, so the maximally discriminable set is given by a tetrahedron \cref{fig:tetrahedron}, and its discriminability attains the bound from \cref{thm:k-copy_upper_bound_pure}: $\Omega^{\textup{q}}_{\textup{pure}}(d = 2, N = 4, k = 2) = 0.75$, which can be achieved using PGM measurements. 
\begin{figure}[h!]\centering
	\includegraphics[width=0.6\linewidth]{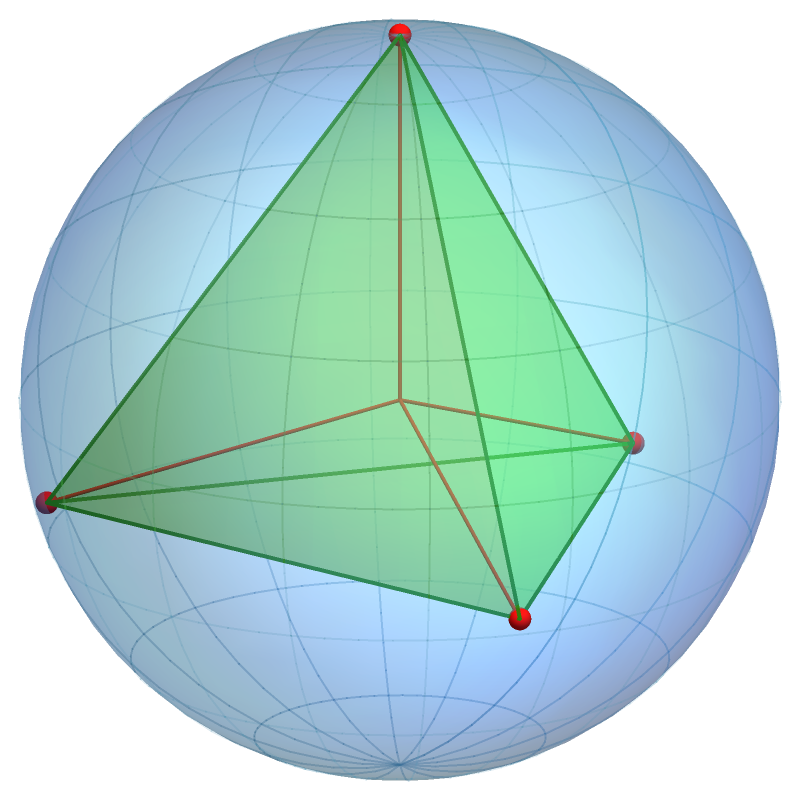}
	\caption{Tetrahedron.}
	\label{fig:tetrahedron}
\end{figure}
However, choosing an equilateral triangle and a fully mixed state as a set of states \cref{fig:triangle_id} will yield the discriminability value $\Omega^{\textup{q}}(d = 2, N = 4, k = 2) = 0.7911$, which is higher than that of the pure state. The optimal measurements for this set admit the construction of measurements for the lower bound in \cref{thm:n-1_pure_and_fully_mixed}.
\begin{figure}[h!]\centering
	\includegraphics[width=0.6\linewidth]{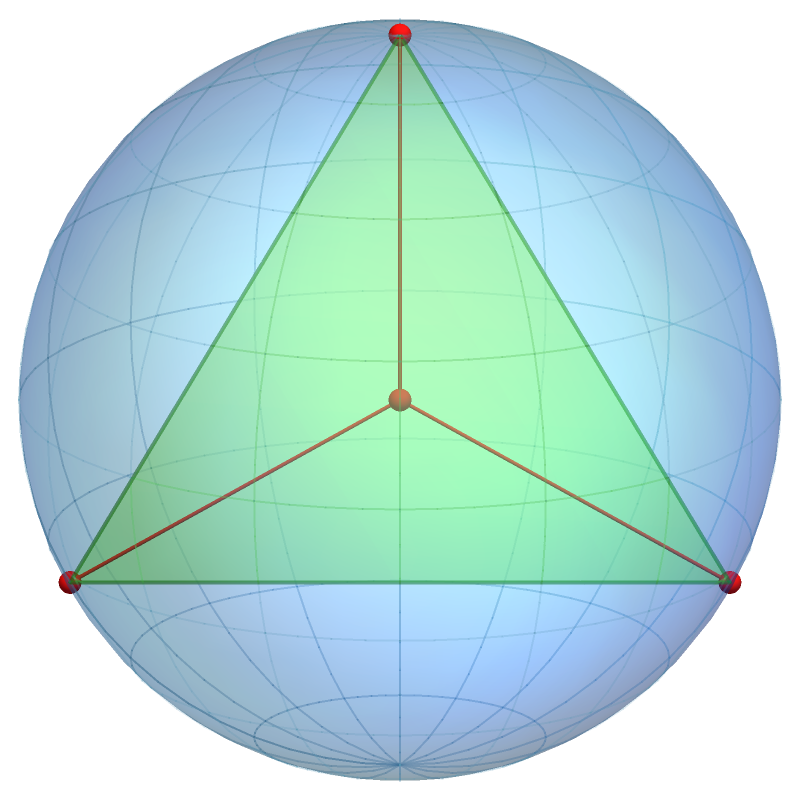}
	\caption{Triangle and fully mixed state.}
	\label{fig:triangle_id}
\end{figure}
If we consider the same two sets of states for the scenarios with higher $k$, for example, in the scenario $(d, N, k) = (2, 4, 3)$, the discriminability of the tetrahedron set of states is higher than that of the triangle and fully mixed sets. The same results are observed for the scenario $(d, N, k) = (2, 4, 4)$. Using analogous numerical techniques as in \cref{subsection:pure_examples}, we have strong numerical evidence that for $(d = 2, N = 4, k = 3/4)$, the tetrahedron set of states is the unique optimal solution.

\subsection{Asymptotic analysis  of $\Omega^{\textup{q}}(d, N, k)$} \label{subsec:AsymptoticOmega}
As discussed in the previous section, for pure states we have 
\begin{align}
  \Omega^{\textup{q}}_{\textup{pure}}(d, N\geq N', k) \leq \frac{1}{N} \frac{(k+d-1)^{d-1}}{(d-1)!}.
\end{align}
From analogous arguments, the upper bound on $\Omega^{\textup{q}}(d,N,k)$ ensures that
\begin{align}
   \Omega^{\textup{q}}(d,N \geq N',k) \leq  \frac{1}{N}\frac{(k+d^2-1)^{d^2-1}}{(d^2-1)!},
\end{align}
from which we conclude that
\begin{align}
   \Omega^{\textup{q}}(d,N \geq N',k) = O\left(\frac{k^{d^2-1}}{N} \right).
\end{align}

The lower bound is constructed in \cref{thm:n-1_pure_and_fully_mixed}, and we can analyse it for the case of large enough $k$ and sufficiently large $N\ge N'(d,k)$:
\begin{align}
   \Omega^{\textup{q}}(&d,N \geq N',k) \geq \\ &\geq  \frac{1}{N}\left( \frac{k^{d-1}}{(d-1)!} + 1 \right) + \frac{1}{N} \frac{(k+d-1)^{d-1}}{d^k(d-1)!} ,
\end{align}

which allows us to describe our function
\begin{align}
   \Omega^{\textup{q}}(d,N \geq N',k) = \Omega'\left(\frac{k^{d-1}}{N} \right).
\end{align}

In general, we can say that for a fixed dimension, large $k$, and sufficiently large $N$, we have the scaling $\Omega^{\textup{q}}(d,N \geq N',k) \approx \frac{k^{\text{Poly}(d)}}{N} $.

\section{The $k$-copy most discriminable classical states} \label{sec:classical}

In this section, we investigate the classical version of the problem, restricting the set of possible states only to classical ones.
Interestingly, we establish a direct connection between the $k$-copy discriminability of classical states and the notion of \textit{multiplicative Bayes capacity} of $k$ independent runs of a classical channel, which has been extensively studied in the literature~\cite{Alvim2020BookQuantitativeInfoFlow, Alvim2012MeasuringInfoLeakage, Alvim2014NotionsOfLeakage, Espinoza2013Min_entropy, Smith2017bounds, Kopf2010VulnerabilityBounds, Boreale2011AsymptoticInfoLeakage, Boreale2011QuantitativeInfoFlowWithAView, Braun2009QuantitativeNotions}.
This allows us to prove tight bounds on the maximal discriminability of classical states, which highlights the quantum advantages in this setting.

For convenience of notation, we choose to describe classical states within the quantum formalism.
That is, we describe a classical state by a density matrix which is diagonal in the computational basis, and whose diagonal terms correspond to a probability distribution.
In this sense, a set of $d$-dimensional classical states $\{\rho_i^{\textup{cl}}\}_{i=1}^N$ is fully specified by a conditional probability distribution $\{\{ p(j|i)\}_{j=0}^{d-1}\}_{i=1}^N$ via
\begin{equation}\label{eq:classical_states}
    \rho_i^\textup{cl}=\sum_{j=0}^{d-1} p(j|i) \ketbra{j}.
\end{equation}

Analogously, a classical measurement is described by a POVM whose elements are also diagonal in the computational basis.
Therefore, in a similar way we can fully specify a classical measurement $\{M_i\}_{i=1}^N$ by a conditional probability distribution $\{\{ q(i|j)\}_{i=1}^{N}\}_{j=0}^{d-1}$, according to
\begin{equation}\label{eq:classical_measurement}
    M_i^\textup{cl}=\sum_{j=0}^{d-1} q(i|j) \ketbra{j}.
\end{equation}
Notice, in particular, that a classical measurement can always be viewed as a measurement in the computational basis followed by a classical post-processing.

In the multicopy regime, the states that represent $k$ copies of $\{\rho_i^{\textup{cl}}\}_{i=1}^N$ can be written as
\begin{equation}\label{eq:classical_states_multicopy}
    {\rho_i^\textup{cl}}^{\otimes k}= \sum_{j_1, \ldots, j_k \in [d]} \left(\prod_{l=1}^kp(j_l|i)\right) \ketbra{j_1 \ldots j_k}.
\end{equation}
And a classical measurement acting on $k$ copies of $d$-dimensional states is denoted by
\begin{equation}\label{eq:classical_measurement_multicopy}
    M_i^\textup{cl}=\sum_{j_1, \ldots, j_k \in [d]} q(i|j_1 \ldots j_k) \ketbra{j_1 \ldots j_k},
\end{equation}
where $\{\{ q(i|j_1 \ldots j_k)\}_{i=1}^{N}\}_{j_1, \ldots, j_k \in [d]}$ is a conditional probability distribution.

With this notation, the optimal average probability of correctly discriminating between a $k$ copies of a set of classical states $\{\rho_i^{\textup{cl}}\}_{i=1}^N$ sampled with uniform prior, i.e., the $k$-copy discriminability of $\{\rho_i^{\textup{cl}}\}_{i=1}^N$, has a particularly simple form, given by the following theorem.
\begin{theorem}\label{thm:classical_state_discrimination}
    Let $\{\rho_i^{\textup{cl}}\}_{i=1}^N$ be a set of classical states described by $ \rho_i^\textup{cl}=\sum_{j=0}^{d-1} p(j|i) \ketbra{j} $, where $\{\{ p(j|i)\}_{j=0}^{d-1}\}_{i=1}^N$ is a conditional probability distribution. 
    It holds that
\begin{align}\label{eq:discriminability_classical}
    \Discr(\{\rho_i\}_{i=1}^N,k) = \frac{1}{N} \sum_{j_1, \ldots, j_k \in [d]} \max_{i}  \prod_{l=1}^k p(j_l|i).
\end{align}
When $k=1$, this equation reduces to
\begin{align}
    \Discr(\{\rho_i\}_{i=1}^N,k=1) = \frac{1}{N} \sum_{j=0}^{d-1} \max_i p(j|i).  
\end{align}
\end{theorem}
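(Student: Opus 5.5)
The plan is to prove matching lower and upper bounds on $\Discr(\{\rho_i\}_{i=1}^N,k)$. Write $P(\vec j\,|\,i):=\prod_{l=1}^k p(j_l|i)$ for $\vec j=(j_1,\ldots,j_k)\in[d]^k$. By \cref{eq:classical_states_multicopy}, the state ${\rho_i^{\textup{cl}}}^{\otimes k}$ is diagonal in the product computational basis $\{\ket{\vec j}\}$, with eigenvalues $P(\vec j|i)$.

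\textbf{Lower bound.} For each $\vec j$, choose an index $i^*(\vec j)\in\arg\max_i P(\vec j|i)$, breaking ties by taking the smallest such index. Define the classical POVM
\begin{align}
M_i=\sum_{\vec j:\, i^*(\vec j)=i}\ketbra{\vec j}.
\end{align}
These operators are orthogonal projectors that sum to the identity. Plugging them in gives
\begin{align}
\frac1N\sum_i\Tr({\rho_i^{\textup{cl}}}^{\otimes k}M_i)=\frac1N\sum_{\vec j}\max_i P(\vec j|i).
\end{align}
This is the maximum-likelihood guessing strategy.

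\textbf{Upper bound.} Take an arbitrary quantum POVM $\{M_i\}$ on $(\mathbb{C}^d)^{\otimes k}$, not necessarily diagonal. Since each state is diagonal, only the diagonal entries of $M_i$ contribute:
\begin{align}
\Tr({\rho_i^{\textup{cl}}}^{\otimes k}M_i)=\sum_{\vec j}P(\vec j|i)\,q(i|\vec j), \qquad q(i|\vec j):=\bra{\vec j}M_i\ket{\vec j}.
\end{align}
Positivity of $M_i$ gives $q(i|\vec j)\ge0$, and $\sum_i M_i=\id$ gives $\sum_i q(i|\vec j)=1$. So for each fixed $\vec j$, the quantity $\sum_i q(i|\vec j)P(\vec j|i)$ is a convex combination of the numbers $P(\vec j|i)$, and is therefore at most $\max_i P(\vec j|i)$. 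Summing over $\vec j$ and dividing by $N$ yields the claimed value as an upper bound.

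This argument shows the formula holds even when the measurement is allowed to be fully quantum. It is thus consistent with the classical restriction in \cref{eq:classical_measurement_multicopy}, and the result does not depend on which of the two optimisation domains is intended in \cref{def:discriminability}.

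The case $k=1$ is just the specialisation $P(j|i)=p(j|i)$.

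I do not expect a real obstacle. The only point needing care is the pinching step: replacing a general POVM by its diagonal part. This requires checking that the diagonal entries form a valid conditional distribution, which follows immediately from positivity and normalisation of the POVM.
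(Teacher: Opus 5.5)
Your proof is correct and is essentially the paper's argument: once the success probability is written as $\frac1N\sum_{\vec j}\sum_i q(i|\vec j)\prod_{l}p(j_l|i)$, the distribution $q(\cdot|\vec j)$ can be chosen independently for each string $\vec j$, so each term is at most $\max_i\prod_l p(j_l|i)$ and can attain it. The only difference is that the paper directly parametrises the measurement as diagonal via \cref{eq:classical_measurement_multicopy}, whereas your choice $q(i|\vec j):=\bra{\vec j}M_i\ket{\vec j}$ also shows that a general non-diagonal POVM, as permitted in \cref{def:discriminability}, gives no advantage, which the paper leaves implicit.
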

\begin{proof}
    According to \cref{def:discriminability}, \cref{eq:classical_states_multicopy} and \cref{eq:classical_measurement_multicopy}, the $k$-copy discriminability of $\{\rho_i^{\textup{cl}}\}_{i=1}^N$ can be written as
    \begin{align}
        & \Discr(\{\rho_i\}_{i=1}^N,k) = \nonumber\\
        &\max_{\{q(i \vert j_1 \ldots j_k)\}} \frac{1}{N}\sum_{j_1, \ldots, j_k \in [d]} \sum_{i = 1}^N q(i \vert j_1 \ldots j_k)\prod_{l = 1}^k p(j_l \vert i),
    \end{align}
    where the maximisation is performed among all conditional probability distributions $\{\{ q(i|j_1 \ldots j_k)\}_{i=1}^{N}\}_{j_1, \ldots, j_k \in [d]}$.
    Since for each $j_1, \ldots, j_k$, we can choose the distribution $\{q(i \vert j_1 \ldots j_k)\}_{i=1}^N$ independently of other choices of $j$'s, \cref{eq:discriminability_classical} follows.
\end{proof}

In other words, the theorem above provides a simpler way to find the optimal measurement for the discrimination of any given set of classical states $\{\rho_i^{\textup{cl}}\}_{i=1}^N$, which leads to a more straightforward expression for the discriminability of such states, given by \cref{eq:discriminability_classical}.
As we will shortly discuss, \cref{eq:discriminability_classical} performs a crucial role for the connection between the discriminability of classical states and a well-studied quantity in the literature of classical information theory.

Our goal in establishing this connection is to compute bounds on the maximal $k$-copy discriminability of classical states, defined as follows.
\begin{definition}\label{def:omega-classical}
The maximal $k$-copy discriminability of $N$ classical states is defined as
\begin{align} \label{eq:def_omega_classical}
       \Omega&^\textup{cl}(d, N, k) := \nonumber\\
       &
       \max_{\{\rho_i^\textup{cl}\}_{i=1}^{N}} \max_{\{M_i^\textup{cl}\}_{i=1}^{N}} \frac{1}{N} \sum_{i=1}^{N} \Tr({\rho^\textup{cl}_i}^{\otimes k} M_i^\textup{cl}),
\end{align}
where the maximisation is taken over all sets of $d$-dimensional diagonal states $\{\rho_i^\textup{cl}\}_{i=1}^N$ and diagonal POVMs $\{M_i^\textup{cl}\}_{i=1}^N$ (see \cref{eq:classical_states} and \cref{eq:classical_measurement}).

Equivalently, in light of \cref{thm:classical_state_discrimination}, the maximal $k$-copy discriminability can be written as
\begin{align}\label{eq:omega-classical-simplified}
    \Omega&^\textup{cl}(d, N, k) = 
       \max_{\{p(j \vert i)\}_{ij}} \frac{1}{N}\sum_{j_1, \ldots, j_k \in [d]} \max_i  \prod_{l=1}^k p(j_l|i).
\end{align}
\end{definition}

\subsection{The $k$-copy pure classical states}
Before discussing the general version of the classical problem, it is interesting to observe that if we restrict ourselves to the maximal $k$-copy discriminability of pure classical states, the problem is greatly simplified.
In our notation, pure classical states are simply states of the computational basis, which can be obtained by taking deterministic probability distributions $\{\{ p(j|i)\}_{j=0}^{d-1}\}_{i=1}^N$ on \cref{eq:classical_states}.

Since there are exactly $d$ pure classical states, the most natural instance of the state discrimination problem is the case $N\leq d$.
In this situation, the set $\{\rho_i^{\textup{cl}}\}_{i=1}^N$ defined by $\rho_i^\textup{cl} = \ketbra{i}, \forall i \in \{1, \ldots, N\}$ is perfectly discriminable, by a measurement in the computational basis.
In the case $N > d$, any set $\{\rho_i^{\textup{cl}}\}_{i=1}^N$ contains at most $d$ different states (which, by the above reasoning, are perfectly discriminable), and therefore the best discriminability one can achieve is $d/N$.

Notice that this reasoning can be applied independently of the considered number of copies $k$, since a projective measurement in the computational basis always perfectly identifies any classical pure state.
That is, in the classical pure case, considering multiple copies does not provide any advantage over the single copy scenario.

Therefore, we conclude that
\begin{equation}\label{eq:max_classical_pure_discriminability}
    \Omega^{\textup{cl}}_{\textup{pure}}(d, N, k) = \min\left(1, \frac{d}{N}\right).
\end{equation}
More formally, \cref{eq:max_classical_pure_discriminability} can also be obtained by restricting the optimisation in \cref{eq:omega-classical-simplified} to deterministic probability distributions, which correspond to pure classical states.

Similarly to the quantum case, randomness (i.e., considering mixed classical states) will also be useful in the multicopy regime.
This is due to the fact that randomness enlarges the set of possible states, and in the limit of arbitrarily many copies, any such states are arbitrarily close to being perfectly discriminable.
In some sense, this is also the reason why mixed quantum states are more $k$-copy discriminable than pure ones, but the enlargement of the state space in the quantum case is somewhat subtle, as it relates to the difference between the symmetric and the permutation invariant subspaces.
In the classical case, this enlargement is more explicit, since there are only finitely many classical pure states, but infinitely many mixed ones.

This discussion is nicely illustrated by a simple example.
Consider the problem of discriminating a set of $N = 3$ classical states of dimension $d = 2$, denoted by $\{b_1, b_2, b_3\}$.
If we restrict the states to be pure, we are thus in the problem of discriminating between three bits.
In this setting, at least two bits will necessarily have the same value, which limits the $1$-copy discriminability to $\Omega^{\textup{cl}}_{\textup{pure}}(d=2, N=3, k=1) = \Discr(\{\ketbra{0},\ketbra{1},\ketbra{1}\},k=1) = \frac{2}{3}$.
It also follows that taking any number $k$ of copies of those bits does not alter their discriminability, since we cannot extract new information from having more copies of the same bit, in accordance with \cref{eq:max_classical_pure_discriminability}.
Alternatively, having access to randomness allows us to discriminate within $\{b_1, b_2, b_3\}$ better, making the set asymptotically perfectly discriminable.
For example, setting $b_1 = \ketbra{0}$, $b_2 = \ketbra{1}$ and $b_3 = \frac{1}{2}\ketbra{0} + \frac{1}{2}\ketbra{1}$, it follows that $\Discr(\{b_1,b_2,b_3\},k) = 1 - \frac{1}{3} \frac{1}{2^{k-1}}$.
Thus, not only $\Discr(\{b_1,b_2,b_3\},k') > \Discr(\{b_1,b_2,b_3\},k)$ for any $k' > k$, but also $\Discr(\{b_1,b_2,b_3\},k) \rightarrow 1$ when $k \rightarrow \infty$.

\subsection{The $k$-copy most discriminable mixed classical states}
We now turn to the general classical version of the problem, of finding the maximal $k$-copy discriminability of arbitrary classical states.
As stated in \cref{def:omega-classical}, $d$-dimensional classical states $\rho_i$ are completely described by a conditional probability distribution $\{\{ p(j|i)\}_{j=0}^{d-1}\}_{i=1}^N$ via the equation $ \rho_i^\textup{cl}=\sum_{j=0}^{d-1} p(j|i) \ketbra{j}$.

From a classical information theory perspective, conditional probability distributions represent classical channels, which may be described by its associated channel matrix\footnote{Traditionally, literature on classical information theory does not make use of Dirac notation. Hence, the authors often just state the $C$ is matrix with matrix coefficients $C_{ij}=p(j|i)$.
}
\begin{align}\label{eq:classical_channel}
    C_{N\to d}=\sum_{i=1}^N \sum_{j=0}^{d-1} p(j|i) \ketbra{j}{i}.
\end{align}
Here, we use the sub-index $N\to d$ to emphasise that $C_{N\to d}$ is a classical channel which maps distributions with $N$ different symbols to distributions with $d$ different symbols. 
Therefore, the conditional probability distributions that describe both sets of classical states and classical channels provides a one-to-one correspondence between these concepts.

With this correspondence, it turns out that the $k$-copy discriminability of a set of classical states $\{\rho^{\textup{cl}}_i\}_{i=1}^N$, as given by \cref{eq:discriminability_classical}, is \textit{precisely} the so-called \textit{multiplicative Bayes capacity} of $k$ independent runs of the associated classical channel $C_{N\to d}$, up to a factor $1/N$.
This quantity, that has been extensively studied in the literature of \textit{quantitative information flow}~\cite{Alvim2020BookQuantitativeInfoFlow, Alvim2012MeasuringInfoLeakage, Alvim2014NotionsOfLeakage, Espinoza2013Min_entropy, Smith2017bounds}, is denoted by $\mathcal{ML}^\times(C^{(k)}_{N\to d})$, where $C_{N\to d}^{(k)}$ denotes the channel that corresponds to $k$ independent runs of $C_{N\to d}$, defined as
\begin{align}
    C_{N\to d}^{(k)}:= \sum_{i=1}^N \sum_{j=0}^{d-1} \left(\prod_{l=1}^k p(j_l|i)\right) \ketbra{j_1 j_2 \ldots j_k}{i}.
\end{align}
This connection is more precisely formalised in the following lemma.
\begin{lemma} \label{lemma:ML}
Let $\{\rho_i^\textup{cl}\}_{i=1}^N$ be a set of classical states described by $ \rho_i^\textup{cl}=\sum_{j=0}^{d-1} p(j|i) \ketbra{j}$ and $C_{N\to d}$ be a classical channel with channel matrix
$C_{N\to d}=\sum_{i=1}^N \sum_{j=0}^{d-1} p(j|i) \ketbra{j}{i}$. It holds that
\begin{align}
     \Discr(\{\rho_i^\textup{cl}\}_{i=1}^N,k) = \frac{1}{N} \mathcal{ML}^\times(C^{(k)}_{N\to d}).
\end{align}
\end{lemma}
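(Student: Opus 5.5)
The plan is to unfold the definition of the multiplicative Bayes capacity from the quantitative information flow literature and compare it term by term with \cref{eq:discriminability_classical}. For a channel $C$ with entries $C_{xy}=p(y|x)$, the multiplicative Bayes leakage for a prior $\pi$ is
\begin{align}
\mathcal{L}^\times(\pi,C)=\frac{\sum_y \max_x \pi_x p(y|x)}{\max_x \pi_x},
\end{align}
and the capacity $\mathcal{ML}^\times(C)$ is its supremum over priors $\pi$. The key external fact I will use is the standard result (e.g.\ Braun et al., Alvim et al.) that this supremum is always attained at the uniform prior. Hence
\begin{align}
\mathcal{ML}^\times(C)=\sum_y \max_x p(y|x).
\end{align}
For completeness I would re-derive this fact briefly. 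For any $\pi$, one has $\max_x \pi_x p(y|x)\le (\max_x\pi_x)\max_x p(y|x)$. Dividing by $\max_x\pi_x$ gives $\mathcal{L}^\times(\pi,C)\le \sum_y\max_x p(y|x)$. The uniform prior attains equality, because the factor $1/N$ cancels between numerator and denominator.

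Next I apply this to the channel $C^{(k)}_{N\to d}$. Its input alphabet is $\{1,\dots,N\}$ and its output alphabet is $[d]^k$, with entries $p(j_1\ldots j_k|i)=\prod_{l=1}^k p(j_l|i)$. This gives
\begin{align}
\mathcal{ML}^\times(C^{(k)}_{N\to d})=\sum_{j_1,\ldots,j_k\in[d]}\max_i \prod_{l=1}^k p(j_l|i).
\end{align}
Comparing with \cref{thm:classical_state_discrimination}, the right-hand side equals $N\,\Discr(\{\rho_i^\textup{cl}\}_{i=1}^N,k)$, and dividing by $N$ gives the lemma.

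The only step requiring care is fixing the definition of $\mathcal{ML}^\times$, in particular whether the capacity is the maximisation over priors. The identification with the uniform-prior expression is exactly where the uniform distribution in \cref{def:discriminability} matters. Everything else is a direct substitution of the product-channel entries into \cref{eq:discriminability_classical}, so I expect no real obstacle.
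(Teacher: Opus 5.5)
Your proposal is correct and follows the same route as the paper. Both identify $\mathcal{ML}^\times(C^{(k)}_{N\to d})$ with the sum of column maxima $\sum_{j_1,\ldots,j_k\in[d]}\max_i\prod_{l=1}^k p(j_l|i)$ and match this with \cref{thm:classical_state_discrimination}. The paper simply cites Braun et al.\ (Prop.~5.1) and Alvim et al.\ (Thm.~7.2) for that identity, whereas your short argument that the supremum over priors of $\sum_y\max_x\pi_x p(y|x)/\max_x\pi_x$ is attained at the uniform prior is a correct, self-contained substitute for the citation.
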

\begin{proof}
    According to \cite[Proposition 5.1]{Braun2009QuantitativeNotions} (see also~\cite[Theorem 7.2]{Alvim2020BookQuantitativeInfoFlow}), the multiplicative Bayes capacity of $k$ independent runs of the channel $C_{N\to d}$ is given by
    \begin{equation}
        \mathcal{ML}^\times(C^{(k)}_{N\to d}) = \sum_{j_1, \ldots, j_k \in [d]} \max_i  \prod_{l=1}^k p(j_l|i),
    \end{equation}
    which is exactly $N \Discr(\{\rho^{\textup{cl}}_i\}_{i=1}^N,k)$ (see \cref{thm:classical_state_discrimination}). 
\end{proof}

For a definition of the multiplicative Bayes capacity in the context where it is more often studied, we refer the reader to Refs.~\cite{Alvim2012MeasuringInfoLeakage, Alvim2020BookQuantitativeInfoFlow}.
Also, we note that it in the $1$-copy case, it is also equivalent to the notion of \textit{communication value} of a channel, studied, for example, in Refs.~\cite{Chitambar2021CommunicationValue, Doolittle2021ClassicalCommCost} and that also appears in \cite{Cubitt2010ZeroErrorSimulation}.
For discussions on the multiplicative Bayes capacity of multiple independent runs of the same channel, to Refs.~\cite{Espinoza2013Min_entropy, Smith2017bounds}.
Here, we notice that based on \cite[Theorem 3.1] {Smith2017bounds}, this correspondence directly provides an upper bound for the maximal $k$-copy discriminability of classical states, which is tight for large enough $N$.
\begin{theorem} \label{thm:classical_dit}
The maximal $k$-copy discriminability of $N$ qudit classical states respects the upper bound 
\begin{align}\label{eq:upper_classical_dit}
    \Omega^\textup{cl}(d,N,k) \leq& \frac{1}{N} \textup{cap}_d(k),
\end{align}
where $\textup{cap}_d(k)$ is defined in Ref.~\cite{Smith2017bounds} as
\begin{align}\label{eq:cap_d}
\textup{cap}_d(k) := \frac{k!}{k^k} \sum_{\substack{t_1, t_2, \dots, t_d \in \mathbb{N} \\ t_1+t_2+\dots+t_d=k}} \frac{t_1^{t_1} t_2^{t_2} \dots t_d^{t_d}}{t_1! t_2! \dots t_d!}
\end{align}

Moreover, when $N\geq \binom{d + k - 1}{k}$, this upper bound is saturated by a set $\{\rho_i^{\textup{cl}}\}_{i=1}^N$ where each state corresponds to a so-called type (see Refs.~\cite{Harrow2005Thesis, harrow2013churchsymmetricsubspace, Smith2017bounds} for details).
That is, for $N\geq \binom{d + k - 1}{k}$,
\begin{align}\label{eq:upper_classical}
      \Omega^\textup{cl}(d,N,k) = \frac{1}{N} \textup{cap}_d(k).
\end{align}
\end{theorem}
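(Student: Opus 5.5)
The plan is to start from \cref{thm:classical_state_discrimination} (equivalently \cref{lemma:ML}) and group the strings $j_1\ldots j_k$ by their type.

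\textbf{Rewriting by types.} For a string $\mathbf j=(j_1,\ldots,j_k)\in[d]^k$, let $t=(t_1,\ldots,t_d)$ be its type, where $t_a$ counts the occurrences of symbol $a$. Then
\begin{align}
\prod_{l=1}^k p(j_l|i)=\prod_{a} p(a|i)^{t_a}
\end{align}
depends only on $t$. There are $\frac{k!}{t_1!\cdots t_d!}$ strings of type $t$. Hence
\begin{align}
N\,\Discr(\{\rho_i^{\textup{cl}}\}_{i=1}^N,k)=\sum_{t}\frac{k!}{t_1!\cdots t_d!}\max_i \prod_a p(a|i)^{t_a}.
\end{align}

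\textbf{Upper bound.} For each fixed type $t$, maximise $\prod_a x_a^{t_a}$ over the probability simplex. By Lagrange multipliers, or by the weighted AM--GM inequality, the maximiser is $x_a=t_a/k$, with value $\prod_a (t_a/k)^{t_a}=k^{-k}\prod_a t_a^{t_a}$ (using $0^0=1$). Bound each $\max_i$ by this quantity, independently of $i$ and of the channel. Summing over types gives exactly $\textup{cap}_d(k)$ as defined in \cref{eq:cap_d}. Dividing by $N$ and maximising over channels yields \cref{eq:upper_classical_dit}. This matches \cite[Theorem 3.1]{Smith2017bounds}, which could also be cited directly through \cref{lemma:ML}.

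\textbf{Achievability for $N\geq\binom{d+k-1}{k}$.} The number of types equals the number of compositions of $k$ into $d$ nonnegative parts, which is $\binom{d+k-1}{k}=d_\textup{sym}[k]$. Assign one distinct state to each type $t$ via $p(a|i_t)=t_a/k$, and let the remaining $N-\binom{d+k-1}{k}$ states be arbitrary. For every type $t$, the state $i_t$ attains the per-type maximum from the previous step, so $\max_i$ equals that upper value. Every inequality is therefore tight, and $N\,\Discr=\textup{cap}_d(k)$. The corresponding measurement is the classical one that, on observing a string of type $t$, guesses $i_t$.

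\textbf{Main obstacle.} The bounding step must be done type-by-type rather than string-by-string, and it must be checked that the $\max$ over $i$ is always dominated by the unconstrained simplex maximum. Both are elementary, so the real care lies in the combinatorial bookkeeping: the multinomial counts, the count of types, and the $0^0$ convention for types with $t_a=0$.
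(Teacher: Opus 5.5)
Your argument is correct, but it takes a different route from the paper. The paper's proof is a two-line reduction. By \cref{lemma:ML}, $N\,\Omega^\textup{cl}(d,N,k)$ equals the maximum over channels of the multiplicative Bayes capacity $\mathcal{ML}^\times(C^{(k)}_{N\to d})$. Both the bound by $\textup{cap}_d(k)$ and its saturation for $N\geq\binom{d+k-1}{k}$ are then taken directly from \cite[Theorem 3.1]{Smith2017bounds}.

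You instead work from \cref{thm:classical_state_discrimination}. You group strings by type and bound each type's likelihood by its unconstrained maximum $\prod_a (t_a/k)^{t_a}$ over the simplex (weighted AM--GM, i.e.\ the multinomial maximum-likelihood estimate). You then reach equality by placing one state at each empirical frequency vector, $p(a|i_t)=t_a/k$. This is the $d$-dimensional version of the paper's own proof of \cref{cor:classical_bit}, where Hamming weights play the role of types and $x^{k-l}(1-x)^l$ is maximised at $x=(k-l)/k$.

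The paper's route is shorter and links the result to the quantitative-information-flow literature, which also supplies the recursion for $\textup{cap}_d$. However, it leaves the reader to look up what ``states corresponding to types'' means. Your route is self-contained and never needs the Bayes-capacity identification. It also makes explicit the optimal states, the optimal measurement (guess $i_t$ upon observing type $t$), and why the threshold is exactly the number of types, $\binom{d+k-1}{k}$.

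One small point you handle correctly without stating it: for \eqref{eq:cap_d} to agree with your sum over all types, $\mathbb{N}$ there must include $0$, with the convention $0^0=1$.
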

\begin{proof}
If follows from \cref{lemma:ML} that 
\begin{align}
     \Omega^\textup{cl}(d,N,k) = \frac{1}{N}\max_{C^{(k)}_{N\to d}}   \mathcal{ML}^\times(C^{(k)}_{N\to d}) .
\end{align}
We now finish the proof by invoking~\cite[Theorem 3.1]{Smith2017bounds}, which ensures that
\begin{align}
    \mathcal{ML}^\times(C^{(k)}_{N\to d}) \leq \textup{cap}_d(k),
\end{align}
bound that is saturated when $N\geq \binom{d + k - 1}{k}$.
\end{proof}

The theorem above thus fully determines the maximal $k$-copy discriminability of classical states for large enough $N$.
For $d$-dimensional states, this quantity matches the $\text{cap}_d(k)$ function, defined in \cref{eq:cap_d}.
It is worth noticing that Ref.~\cite{Smith2017bounds} also presents a recursive formula to evaluate $\text{cap}_d(k)$ for any $k\in\mathbb{N}$ and any $d>2$,
\begin{align}
    \text{cap}_d(k) = \text{cap}_{d-1}(k) + \frac{k}{d-2} \text{cap}_{d-2}(k).
\end{align}

Additionally, \cref{thm:classical_dit} provides examples of maximally $k$-copy discriminable classical states, for large enough $N$.
These are associated with the information-theoretic notion of \textit{types}~\cite{harrow2013churchsymmetricsubspace, Harrow2005Thesis}.
In fact, the proof of~\cite[Theorem 3.1]{Smith2017bounds} essentially follows from properties of types.
Since its main ideas can be already seen in the case $d = 2$, where it is not necessary to go into types in full generality, this case is discussed in greater details in the following section.

\subsection{The $k$-copy most discriminable classical bits}
For concreteness and for developing a sharper intuition on the problem, we present a proof of \cref{thm:classical_dit} for the case $d = 2$ in details.
In this case, classical states can be written as $\rho_i^\textup{cl} = a_i \ketbra{0} + (1-a_i) \ketbra{1}$, where $0 \leq a_i\leq 1$ is the probability of the state being the bit $\ketbra{0}$.
That is, each classical state is uniquely identified by a single number $a_i$.
 
\begin{corollary}\label{cor:classical_bit}
The maximal $k$-copy discriminability of $N$ qubit classical states respects the upper bound 
\begin{align}
    \Omega^\textup{cl}(d=2,N,k) \leq& \frac{1}{N} \sum_{l=0}^{k} \binom{k}{l}\left( \frac{k-l}{k} \right)^{k-l}\left(\frac{l}{k} \right)^l \label{eq:upper_bound_bits_original}\\
    =& \frac{1}{N} \frac{1}{k^k}\sum_{l=0}^{k} \binom{k}{l} l^l(k-l)^{k-l} \label{eq:upper_bound_bits_intermediate}\\
    =& \frac{1}{N} \frac{k!}{k^k}  \sum_{l=0}^k \frac{k^l}{l!}\label{eq:upper_bound_bits_simplified}
\end{align}

Moreover, when $N\geq k+1$, this upper bound is saturated by a set $\{\rho_i^{\textup{cl}}\}_{i=1}^N$ where $\rho_i^{\textup{cl}} = a_i \ketbra{0} + (1 - a_i)\ketbra{1}$ with $a_i = (i-1)/k$ for $i \in \{1, \ldots, k+1\}$ and, for $i > k+1$, $\rho_i^{\textup{cl}}$ can be taken to be arbitrary states.
That is, for $N\geq k+1$,
\begin{align}
    \Omega^\textup{cl}(d=2,N,k) = \frac{1}{N} \frac{k!}{k^k}  \sum_{l=0}^k \frac{k^l}{l!}.
\end{align}
\end{corollary}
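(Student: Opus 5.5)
The proof will start from \cref{thm:classical_state_discrimination} specialised to $d=2$. For classical bits $\rho_i^{\textup{cl}} = a_i\ketbra{0}+(1-a_i)\ketbra{1}$, the probability of a string $j_1\ldots j_k$ depends only on the number $l$ of ones in it: it equals $a_i^{k-l}(1-a_i)^l$. Grouping the $2^k$ strings by $l$ gives
\begin{align}
\Discr(\{\rho_i^{\textup{cl}}\}_{i=1}^N,k) = \frac{1}{N}\sum_{l=0}^k \binom{k}{l}\max_i a_i^{k-l}(1-a_i)^l .
\end{align}
The upper bound \cref{eq:upper_bound_bits_original} then follows by replacing the maximum over the finite set $\{a_i\}$ with the maximum over $a\in[0,1]$. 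Elementary calculus (setting the derivative of $(k-l)\log a + l\log(1-a)$ to zero) shows that the function $a^{k-l}(1-a)^l$ is maximised at $a=(k-l)/k$, with value $\left(\frac{k-l}{k}\right)^{k-l}\left(\frac{l}{k}\right)^l$. The edge cases $l=0$ and $l=k$ give value $1$, using the convention $0^0=1$. Summing over $l$ gives \cref{eq:upper_bound_bits_original}, and pulling out the factor $k^{-k}$ gives \cref{eq:upper_bound_bits_intermediate}.

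For attainability when $N\ge k+1$, I will choose $a_i=(i-1)/k$ for $i=1,\ldots,k+1$. This set contains the optimiser $(k-l)/k$ for every $l\in\{0,\ldots,k\}$, so each inner maximum is attained exactly. The remaining $N-k-1$ states can be arbitrary, since adding candidates to the set can only increase $\max_i$. The sum therefore equals the bound, which gives the equality $\Omega^{\textup{cl}}(2,N,k)=\frac1N\,\cdot$ the bound. This is consistent with \cref{thm:classical_dit}, where $\binom{k+1}{k}=k+1$.

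The main obstacle is the combinatorial identity
\begin{align}
\sum_{l=0}^k\binom{k}{l}l^l(k-l)^{k-l} = k!\sum_{l=0}^k\frac{k^l}{l!},
\end{align}
which is needed to pass to \cref{eq:upper_bound_bits_simplified}. One option is simply to cite it as the $d=2$ case of $\textup{cap}_2(k)$ from Ref.~\cite{Smith2017bounds}. For a self-contained proof, I would first try exponential generating functions via the Lambert tree function $T(z)=\sum_{n\ge1} n^{n-1}z^n/n!$. The standard identities $\sum_{n\ge0} n^n z^n/n! = 1/(1-T(z))$ and $T=ze^{T}$ turn the left-hand side into $k!\,[z^k]\,(1-T)^{-2}$. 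Lagrange inversion with $\phi(w)=e^w$ then gives $[z^k]H(T) = \frac1k[w^{k-1}]H'(w)e^{kw}$. Applying this to $H(w)=(1-w)^{-2}$ and simplifying should recover $\sum_{l\le k} k^l/l!$. I expect the final coefficient simplification to take some care; if it becomes messy, I would fall back on Abel's binomial identity, or simply cite Ref.~\cite{Smith2017bounds}.
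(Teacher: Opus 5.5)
Your proposal is correct and follows the paper's proof essentially step by step: you group bitstrings by Hamming weight, bound each $\max_i a_i^{k-l}(1-a_i)^l$ by its maximum at $a=(k-l)/k$, and attain the bound with $a_i=(i-1)/k$. The only difference is the final identity, which the paper simply takes from Ref.~\cite{Prodinger2013identity}, whereas you sketch a self-contained route. Your Lagrange-inversion argument does close: it yields $\frac{1}{k}\sum_{j=0}^{k-1}(k-j)(k+1-j)\frac{k^j}{j!}$, and summation by parts with $(k-j)\frac{k^j}{j!}=\frac{k^{j+1}}{j!}-\frac{k^{j}}{(j-1)!}$ for $j\ge 1$ (the $j=0$ case being just $k$) reduces this to $\sum_{l=0}^{k}\frac{k^l}{l!}$.
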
 
\begin{proof}
According to \cref{eq:classical_states_multicopy}, $k$ copies of a classical state $\rho_i^{\textup{cl}} = a_i \ketbra{0} + (1 - a_i) \ketbra{1}$ can be written as
\begin{align}
    \rho_i^{\otimes k} = \sum_{j_1, \ldots, j_k \in \{0,1\}} a_i^{k-w\left(\vec{j}\right)} (1-a_i)^{w\left(\vec{j}\right)} \ketbra{j_1 \ldots j_k},
\end{align}
where $w\left(\vec{j}\right) = \sum_{l=1}^k j_l$ denotes the \textit{Hamming weight} of the bitstring $\vec{j} = (j_1, \ldots j_k)$, i.e., the number of 1's in the bitstring $\vec{j}$.
Then each state can be rewritten in an explicit permutation-invariant form
\begin{align}
    \rho_i^{\otimes k} = \sum_{l=0}^{k} a_i^{k-l} (1-a_i)^{l} \left(\sum_{\substack{j_1, \ldots, j_k \in \{0,1\} \\ w\left(\vec{j}\right)=l}} \ketbra{j_1 \ldots j_k} \right).
\end{align}

The optimal classical measurements can then be assumed to have the following form 
\begin{align}\label{eq:classical_meas}
    M_i = p(i|l) \sum_{\substack{j_1, \ldots, j_k \in \{0,1\} \\ w\left(\vec{j}\right)=l}} \ketbra{j_1 \ldots j_k}
\end{align}
Thus, the maximal $k$-copy discriminability of a classical set can be written as (see \cref{eq:def_omega_classical})
\begin{align}
    &\Omega^\textup{cl}(d = 2, N, k) = \\ 
    &= \max_{\{a_i\}_{i=1}^{N}} \max_{\{p(i|l)\}_{il}} \frac{1}{N} \sum_{l=0}^{k} \binom{k}{l} a_i^{k-l} (1-a_i)^{l} p(i|l). 
\end{align}
By the same arguments as in the proof of \cref{thm:classical_state_discrimination}, it follows that
\begin{align}\label{eq:max_hamming_weight}
    &\Omega^\textup{cl}(d=2, N, k) = \nonumber\\ 
    &= \frac{1}{N}\max_{\{a_i\}_{i=1}^{N}} \sum_{l=0}^{k} \binom{k}{l} \max_i \left[ a_i^{k-l} (1-a_i)^{l}\right],
\end{align}
where the binomial $\binom{k}{l}$ corresponds to the number of bitstrings of length $k$ with Hamming weight $l$.
In fact, the equation above could have been directly obtained from \cref{eq:omega-classical-simplified} and by noticing that one can group the optimisation over bitstrings with the same Hamming weight.

Now, the function $f_l(x) =  x^{k-l} (1-x)^{l}$ is maximised in the interval $x \in [0,1]$ for $x = \frac{k-l}{k}$ as proven in~\cite[Appendix C]{2024VieiraMilzVitagliano}.
In this way we arrive at the upper bound given by \cref{eq:upper_bound_bits_original}.
    
Moreover, this argument also provides a strategy to attain this upper bound when $N \geq k+1$, which is described in the statement of the theorem.
For each $i \in \{1, \ldots, k+1\}$, the idea is to choose $\rho_i^\textup{cl} = a_i \ketbra{0} + (1-a_i) \ketbra{1}$ with an $a_i$ that maximises the function $f_{l}$ defined above, for $l = i-1$.

Finally, we will show how to simplify \cref{eq:upper_bound_bits_original}. 
By noticing that
\begin{equation}
    \left( \frac{k-l}{k} \right)^{k-l}
    \left( \frac{l}{k} \right)^l
    = \frac{(k-l)^{k-l} \, l^l}{k^k},
\end{equation}
we obtain \cref{eq:upper_bound_bits_intermediate}.
In this form, we can use the results proven in Ref.~\cite{Prodinger2013identity} to conclude that $s(k):=\sum_{l=0}^{k} \binom{k}{l} \, l^l (k-l)^{k-l} = k!\sum_{l=0}^k \frac{k^l}{l!}$, from which \cref{eq:upper_bound_bits_simplified} follows.
Interestingly, the function $s(k)$ is referred to as the \textit{Schenker sum with k-th term}, defined as A063170 at the On-Line Encyclopedia of Integer Sequences~\cite{oeisA063170}, and in Ref.~\cite{Prodinger2013identity} the authors also prove that $s(k) = k^k (1 + Q(k))$, where $Q$ is the Ramanujan $Q$-function~\cite{Flajolet1995ramanujan}.
\end{proof}

Notice that, when $N < k+1$, at least one of the measurement elements \cref{eq:classical_meas} is associated with two different Hamming weights, which leads to having not uniformly distributed on $Z$-axis states as the optimal set, for numerical details see \cref{table:classical_and_rebit}. 
This shows that, contrarily to what an initial naive intuition could suggest, the most discriminable classical bits are not bits uniformly spread between the $Z$ axis. In particular when $N<k-1$, the distribution over most discriminable bits are more concentrated towards the bits zero and one instead of being uniform.

\subsection{Asymptotic analysis  of $\Omega^\textup{cl}(d, N, k)$ and a quadratic quantum advantage}

As proven in \cref{thm:classical_dit}, for $N\geq \binom{d + k - 1}{k} $, we have that $
      \Omega^\textup{cl}(d,N,k) = \frac{1}{N} \textup{cap}_d(k)$. Ref.~\cite{Smith2017bounds} shows that, when $d=2$, it holds that. 
\begin{align}
 \text{cap}_2(k) <& \sqrt{\frac{\pi k}{2}} + \frac{2}{3} + \frac{1}{12}\sqrt{\frac{\pi}{2k}}, \\
 \text{cap}_2(k)>&  \sqrt{\frac{\pi k}{2}} + \frac{2}{3} + \frac{1}{12}\sqrt{\frac{\pi}{2k}} - \frac{4}{135k} .
\end{align}
Hence, we see that, for large values of $k$, we have $f(k)= \sqrt{\frac{\pi k}{2}} + O(1)$, consequently, for large $k$ and $N\geq k+1$, we have
\begin{align}
    \Omega^\textup{cl}(d=2,N,k) &= \frac{1}{N}\sqrt{\frac{\pi k}{2}} + O\left(\frac{1}{N}\right) \\ &\sim \frac{1}{N} \sqrt{\frac{\pi k}{2}} . \label{eq:classical_asymptotic_bound}
\end{align}

The result can be generalised for an arbitrary dimension $d$, which we formalise in the following \cref{thm:quantum_advantage}.

We will now analyse the asymptotic behaviour of $\Omega^\textup{cl}(d,N,k) $ to show that quantum systems offer a quadratic advantage over classical states. We notice that, in the one-copy regime there is no quantum advantage over classical systems, see \cref{thm:1-copy_max_discriminable} and Refs.~\cite{Elron2007OptimalEncoding, Heinosaari_2024,2015Weiner_clinfo_in_nlevel_qusystem,2024HeinosaariHillery}. 

\begin{theorem}\label{thm:quantum_advantage}
    For a fixed dimension $d$, a large enough number of copies $k$,  and a cardinality $N$ sufficiently large to contain a state $k$-design, pure quantum states are quadratically more discriminable in $k$ than classical states:
    \begin{align}
        \Omega^\textup{cl}(d,N,k) &= \frac{1}{N}\Theta \left( \sqrt{k^{d-1}} \right),\\
        \Omega^{\textup{q}}_\textup{pure}(d,N,k) &= \frac{1}{N}\Theta \left( k^{d-1} \right).
    \end{align}
\end{theorem}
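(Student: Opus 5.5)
The quantum half of the statement needs no new work. By \cref{coro:omega_pure}, once $N$ is large enough to contain a $k$-design, $\Omega^{\textup{q}}_\textup{pure}(d,N,k)=\binom{d+k-1}{k}/N$. \Cref{subsec:asympt_pure}, through the bounds $k^{d-1}/(d-1)!\le\binom{d+k-1}{k}\le (k+d-1)^{d-1}/(d-1)!$, already shows this is $\frac1N\Theta(k^{d-1})$ for fixed $d$. Note that $N\ge N'(d,k)\ge\binom{d+k-1}{k}$ holds automatically: a $k$-design must span $\operatorname{Sym}_d^k$, so it needs at least $\dim\operatorname{Sym}_d^k$ elements. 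Hence \cref{thm:classical_dit} also applies in this regime, and $\Omega^\textup{cl}(d,N,k)=\textup{cap}_d(k)/N$. The whole task therefore reduces to proving $\textup{cap}_d(k)=\Theta(k^{(d-1)/2})$ for fixed $d$.

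I will estimate the sum in \cref{eq:cap_d} directly. Each term is $\frac{k!}{k^k}\prod_i \frac{t_i^{t_i}}{t_i!}$. By Stirling, $\frac{t^t}{t!}=\frac{e^t}{\sqrt{2\pi t}}(1+O(1/t))$ for $t\ge1$, with $t^t/t!=1$ at $t=0$. Similarly $\frac{k!}{k^k}=\sqrt{2\pi k}\,e^{-k}(1+O(1/k))$. Since $\sum_i t_i=k$, the exponentials cancel, and for a composition with support $S=\{i:t_i>0\}$, $|S|=s$, the term equals, up to bounded multiplicative constants depending only on $d$,
\begin{align}
\sqrt{2\pi k}\,\prod_{i\in S}\frac{1}{\sqrt{2\pi t_i}}.
\end{align}
So up to constants, $\textup{cap}_d(k)\asymp \sum_{S}\sum_{t\in\mathbb{Z}_{>0}^S,\ \sum t_i=k}\sqrt{k}\prod_{i\in S} t_i^{-1/2}$.

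For the \emph{lower bound}, I will restrict to $S=[d]$ and to compositions with every $t_i\in[k/(2d),2k/d]$. There are $\Theta(k^{d-1})$ such compositions, since they form a positive-volume region of the $(d-1)$-dimensional simplex scaled by $k$. Each contributes $\Theta(\sqrt{k}\,k^{-d/2})=\Theta(k^{-(d-1)/2})$, giving $\Omega'(k^{(d-1)/2})$.

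For the \emph{upper bound}, I will compare the inner sum with a Dirichlet-type integral. One has $\sum_{t_1+\dots+t_s=k,\,t_i\ge1}\prod t_i^{-1/2}\le C\,k^{s/2-1}\,\frac{\Gamma(1/2)^s}{\Gamma(s/2)}$, which can be proved by induction on $s$ using $\sum_{t=1}^{m-1}t^{-1/2}(m-t)^{a-1}\le C m^{a-1/2}$ for $a\ge 1/2$. Each support size $s$ then contributes $O(\sqrt{k}\,k^{s/2-1})=O(k^{(s-1)/2})$, and there are finitely many supports, at most $2^d$. The dominant term is $s=d$, giving $O(k^{(d-1)/2})$. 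Combining the two bounds yields $\textup{cap}_d(k)=\Theta(\sqrt{k^{d-1}})$; for $d=2$ this is consistent with the $\sqrt{\pi k/2}$ asymptotics quoted above.

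The main obstacle is making the upper bound rigorous uniformly over boundary terms, where some $t_i$ are small and the Stirling approximation is not asymptotic. I will handle this with the crude uniform bounds $\frac{e^t}{e\sqrt{t}}\le \frac{t^t}{t!}\le \frac{e^t}{\sqrt{2\pi t}}$ for $t\ge1$, valid for all $t\ge 1$. These make the constants absolute, so the support decomposition above goes through without asymptotic caveats. An alternative route, if the inductive summation is awkward, is the recursion $\textup{cap}_d(k)=\textup{cap}_{d-1}(k)+\frac{k}{d-2}\textup{cap}_{d-2}(k)$. Inducting on $d$ with the base cases $\textup{cap}_1=1$ and $\textup{cap}_2=\Theta(\sqrt k)$ immediately gives $\textup{cap}_d(k)=\Theta(k^{(d-1)/2})$, since the second term dominates as $k\cdot k^{(d-3)/2}=k^{(d-1)/2}$ while the first is $\Theta(k^{(d-2)/2})$. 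I would in fact lead with this recursion argument, as it is cleaner, and keep the Stirling computation as a check.
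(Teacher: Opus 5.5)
Your proposal is correct, and the argument you choose to lead with is the paper's proof: the quantum side follows from \cref{coro:omega_pure} together with the bounds on $\binom{d+k-1}{k}$, and the classical side follows by induction on $d$ through $\textup{cap}_d(k)=\textup{cap}_{d-1}(k)+\frac{k}{d-2}\textup{cap}_{d-2}(k)$, with base cases $\textup{cap}_1(k)=1$ and $\textup{cap}_2(k)=\Theta(\sqrt{k})$. Two additions go beyond the paper: your observation that a $k$-design needs at least $\binom{d+k-1}{k}$ elements, so that \cref{thm:classical_dit} really gives $\Omega^\textup{cl}(d,N,k)=\textup{cap}_d(k)/N$ in this regime, makes explicit a step the paper leaves implicit, and your Stirling/Dirichlet-sum estimate is a valid independent check.
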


\begin{proof}
    For the pure quantum case, \cref{eq:pure_asymptotic_theta} follows from the bounds on the combinatorial coefficient.
    
    For the classical case, we update the derivations in the section 4D of \cite{Smith2017bounds} to be stricter; in our notation
    \begin{align}
        \text{cap}_1(k) &= \frac{k!}{k^k}\frac{k^k}{k!} = 1 = \Theta(1) = \Theta(k^{\frac{1-1}{2}}),\\
        \text{cap}_2(k) &= N\cdot \textup{\cref{eq:classical_asymptotic_bound}} = \Theta(\sqrt{k}) = \Theta(k^{\frac{2-1}{2}}),\\
        \text{cap}_d(k) &= \text{cap}_{d-1}(k) + \frac{k}{d-2} \text{cap}_{d-2}(k) =\\
        &= \Theta(k^{\frac{d-2}{2}}) + \frac{1}{d-2} \Theta(k^{\frac{d-1}{2}}) = \Theta(k^{\frac{d-1}{2}}),
    \end{align}
    where the tight bound is proved by induction.
\end{proof}

\section{The $k$-copy most discriminable real states} \label{sec:real}

In this section, we study another restriction of the initial problem, where the quantum states are represented by real-valued density matrices $\{\rho_i\}_{i=1}^N\in \mathcal{L}(\mathbb{R}^d)$.
This special case is motivated by the observed quantum advantage over the classical states, which then raises the question of which conditions are necessary for the advantage to appear.
In the asymptotic setting, we manage to show that the upper bound on maximal discriminability for real qubits offers just a constant advantage over the corresponding classical value, showing that complex numbers are important for quantum advantage.

\begin{theorem} \label{thm:SO(d)}

The maximal $k$-copy discriminability of $N$ qudit real pure states respects the upper bound 
\begin{align} \label{eq:SO(d)}
 \Omega^\textup{real}_\textup{pure}(d,N,k) \leq \frac{1}{N} \left(\sum_{j=0}^{\floor{\frac{k}{2}}-1} m_j \sqrt{\lambda_j}\right)^2
\end{align}
where\footnote{Here, $n!!:=\prod_{l=0}^{\ceil{\frac{n}{2}}-1}(n-2l)$ stands for the double factorial, and to evaluate $m_j$ when $d=2$, we use the convention that that, $\binom{-1}{1} := 0$.  } 
\begin{align}
    &\lambda_j = \frac{k!(d - 2)!!}{(2j)!!(d + 2k - 2j - 2)!!} \\
 & m_j = \binom{d + k - 2j - 1}{d - 1} - \binom{d + k - 2j - 3}{d - 1}.
\end{align}

Moreover, when the number of states $N$ is big enough such that there exists a real qudit group $k$-design, this upper bound is saturated by a real qudit group $k$-design. That is for large enough $N$ we have that 
\begin{align} \label{eq:SO(d)_2}
 \Omega^\textup{real}_\textup{pure}(d,N,k) = \frac{1}{N} \left(\sum_{j=0}^{\floor{\frac{k}{2}}-1} m_j \sqrt{\lambda_j}\right)^2 .
\end{align}
\end{theorem}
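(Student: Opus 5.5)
The plan is a weighted Cauchy--Schwarz argument, replacing the flat weighting by $\Pi_\textup{sym}(d,k)$ used in \cref{thm:k-copy_upper_bound_pure} with the real Haar moment operator
\begin{align}
T:=\int_{O\in SO(d),\textup{Haar}} \left(O\ketbra{0}O^\dagger\right)^{\otimes k}\mathrm{d}O .
\end{align}
By construction $T\geq 0$, $\Tr T=1$, and $O^{\otimes k}T{O^{\otimes k}}^\dagger=T$ for all $O\in SO(d)$. For every real unit vector $\ket{\psi}$, the vector $\ket{v_\psi}:=\ket{\psi}^{\otimes k}$ lies in the support of $T$. I would first establish the key invariance: for any function $f$, the number $\bra{v_\psi}f(T)\ket{v_\psi}$ is independent of $\psi$. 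This holds because $SO(d)$ acts transitively on real unit vectors and commutes with $T$. Averaging over $\psi$ then gives
\begin{align}
\bra{v_\psi}T^{-1/2}\ket{v_\psi}=\Tr\left(T^{-1/2}T\right)=\Tr\sqrt{T}
\end{align}
for every $\psi$, with the inverse taken on the support of $T$.

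For the upper bound, I would use the operator inequality $\ketbra{v}\leq \bra{v}X^{-1}\ket{v}\,X$, valid for $X>0$ on a subspace containing $\ket{v}$. Choosing $X=\sqrt{T}$ gives, for any set $\{\ket{\psi_i}\}_{i=1}^N$ and any POVM $\{M_i\}_{i=1}^N$,
\begin{align}
\sum_{i=1}^N\Tr\left(\ketbra{\psi_i}^{\otimes k}M_i\right)\leq \Tr\sqrt{T}\,\sum_{i=1}^N \Tr\left(M_i\sqrt{T}\right)= \left(\Tr\sqrt{T}\right)^2 .
\end{align}
Dividing by $N$ yields $\Omega^\textup{real}_\textup{pure}(d,N,k)\leq \frac{1}{N}(\Tr\sqrt T)^2$. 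The remaining task is the spectral computation $\Tr\sqrt T=\sum_j m_j\sqrt{\lambda_j}$.

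I expect the spectral computation to be the main obstacle, and I would handle it with representation theory. The real symmetric tensors $\operatorname{Sym}^k(\mathbb{R}^d)$ are identified with homogeneous polynomials of degree $k$. Under $SO(d)$ these decompose as $\bigoplus_j r^{2j}\mathcal{H}_{k-2j}$, where $\mathcal{H}_n$ denotes harmonic polynomials of degree $n$ and $\dim\mathcal{H}_{k-2j}=\binom{d+k-2j-1}{d-1}-\binom{d+k-2j-3}{d-1}=m_j$. For $d\geq3$ these irreps are pairwise inequivalent, so by Schur's lemma $T$ acts as a scalar $\lambda_j$ on each block. For $d=2$ one checks directly, via the Fourier modes $e^{\pm i n\theta}$, that $T$ is still block scalar.

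To obtain $\lambda_j$, I would compute $\bra{v_\psi}P_j\ket{v_\psi}$, where $P_j$ is the projector onto the $j$-th block. This quantity equals $\lambda_j m_j$ divided by the appropriate normalisation coming from the identity $\Tr(TP_j)=\lambda_j m_j$. I would use the Gaussian/sphere moment formulas $\int x_{i_1}\cdots x_{i_{2n}}\,\mathrm{d}\psi$ together with the harmonic projection of $(x\cdot y)^k$, which is a Gegenbauer polynomial, following Refs.~\cite{harrow2013churchsymmetricsubspace,Nemoz2025real}. The double factorials in $\lambda_j$ should come out of these sphere moments. I would then check the result against the cases $d=2$ and $k=2$, where \cref{thm:trine_new} gives $\frac{3}{2}+\sqrt2$, and I would reconcile the stated summation range with the number of blocks.

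For saturation, take a real group $k$-design $\{\ket{\psi_i}\}_{i=1}^N$, so that $\frac1N\sum_i\ketbra{v_{\psi_i}}=T$, and use the pretty good measurement
\begin{align}
M_i=\frac1N T^{-1/2}\ketbra{v_{\psi_i}}T^{-1/2}+\frac{\id-\Pi_T}{N},
\end{align}
where $\Pi_T$ is the support projector of $T$. These operators are positive and sum to the identity by the design property. By the invariance step, each term satisfies $\Tr(\ketbra{v_{\psi_i}}M_i)=\frac1N(\Tr\sqrt T)^2$, so the average success probability equals $\frac1N(\Tr\sqrt T)^2$, and the bound is attained.
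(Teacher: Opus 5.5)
Your proof is correct, and it follows a genuinely different route from the paper's.

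\textbf{The paper's route.} The paper proves the upper bound by relaxing the finite problem to a continuous, $SO(d)$-covariant one. It builds a continuous POVM $\{M_O\}$ from delta masses at the $O_i$ and cites Chiribella et al.\ to assert that the optimal covariant measurement is the square-root measurement. It then evaluates that measurement's value as $(\Tr F^{1/2})^2$ with $F=T$. Attainability comes from the formula $\frac{1}{N}(\Tr\sqrt{\overline{\rho}})^2$ of Zhou et al., which requires the ensemble to be group covariant. The spectrum of $T$ is simply quoted from Nemoz et al.

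\textbf{Your route.} You instead write down an explicit dual feasible point: $\ketbra{v_\psi}\leq(\Tr\sqrt{T})\sqrt{T}$ for every real unit $\psi$, i.e.\ $G=\frac{1}{N}(\Tr\sqrt{T})\sqrt{T}$. This is exactly the general-$(d,k)$ form of the paper's own ansatz in the proof of \cref{thm:trine_new}. For $d=k=2$ your $G$ reproduces $\alpha=\frac{2+\sqrt{2}}{4}$ and $\beta=\frac{\sqrt{2}}{4}$.

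\textbf{What your route buys.} First, it is self-contained and avoids continuous POVMs and the covariant maximum-likelihood theorem. Second, the bound holds even against arbitrary complex POVMs. Third, your saturation step uses only the $SO(d)$-invariance of $T$, not covariance of the ensemble. Hence the same pretty good measurement shows that any set containing a real $k$-design is optimal, even with non-uniform weights $p_i$, since $\sum_i p_i(\Tr\sqrt{T})^2=(\Tr\sqrt{T})^2$. This covers $N$ larger than the design size directly and mirrors \cref{thm:k-copy-optimal-pure-design}.

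\textbf{Spectral step and small corrections.} Your spectral computation is at the same level of detail as the paper's, which just cites it. Note that $\bra{v_\psi}P_j\ket{v_\psi}=\Tr(TP_j)=\lambda_j m_j$ exactly; no further normalisation is needed. Your suspicion about the summation range is justified. There are $\floor{k/2}+1$ blocks $r^{2j}\mathcal{H}_{k-2j}$, so the sum must run over $j=0,\dots,\floor{k/2}$, as the paper's own appendix proof for $d=2$ does. With the stated upper limit $\floor{k/2}-1$, the case $d=k=2$ would give $\frac{1}{N}$ instead of $\frac{1}{N}(\frac{3}{2}+\sqrt{2})$.
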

\begin{proof}

Let us begin by showing the attainability of the bound by real qudit group $k$-designs. As shown in Ref.~\cite{Zhou2025distinguishability}, if a set of states $\{\ketbra{\psi_i}\}_{i=1}^N$ is group covariant, it holds that 
\begin{align}
        \Discr(\{\ketbra{\psi_i}\}_{i=1}^N, k)  = \frac{1}{N} \Tr(\sqrt{\overline{\rho}})^2.
\end{align}
where $\overline{\rho}:=\sum_{i=1}^N \frac{1}{N}\ketbra{\psi_i}^{\otimes k}$.
Also, notice that, if $\lambda_j$ and $m_j$ are respectively the eigenvalues and multiplicities of $\overline{\rho}$, it holds that
\begin{align} \label{eq:mean_state}
 \Tr(\sqrt{\overline{\rho}})^2=\left(\sum_j m_j \sqrt{\lambda_j}\right)^2.
\end{align}
We now note that, if $\{\ketbra{\psi_i}\}_{i=1}^N$ is a real state group $k$-design, $\overline{\rho}=\int_{\ket{\psi}\in\mathbb{R}^d} \ketbra{\psi}^{\otimes k}\mathrm{d}\ket{\psi}$. We showing the attainability by invoking Theorem 1 of Ref.~\cite{Nemoz2025real}, which proves that eigenvalues and the multiplicities of $\int_{\ket{\psi}\in\mathbb{R}^d} \ketbra{\psi}^{\otimes k}\mathrm{d}\ket{\psi}$ are given by 
\begin{align}
    &\lambda_j = \frac{k!(d - 2)!!}{(2j)!!(d + 2k - 2j - 2)!!} \\
 & m_j = \binom{d + k - 2j - 1}{d - 1} - \binom{d + k - 2j - 3}{d - 1},
\end{align}
where $j\in\{0,\ldots,\floor{\frac{k}{2}}-1\}$.

Now, let us prove optimality. We start by noticing that, any set of pure real states $\{\ketbra{\psi_i}\}_{i=1}^N$ can be written as $\{O_i\ketbra{0}O_i^{\dagger}\}_{i=1}^N$, where $O_i\in SO$, that is, is unitary matrix with real coefficients. We can now see that
\begin{align}
    \max_{\{M_i\}} &\sum_{i=1}^N \Tr \left( M_i \left[ O_i \ketbra{0}{0} O_i^{\dagger} \right]^{\otimes k} \right) \le  \\
    & \max_{\{M_O\}} \int_{O\in SO(d)} \Tr \left( M_O \left[ O \ketbra{0}{0} O^{\dagger} \right]^{\otimes k} \right) dO, \label{eq:continuous_prob_discr}
\end{align}
where $M_O \ge 0, \int_{O\in SO(d)} M_O dO = \id$ is a valid continuous POVM. Indeed, if the optimal POVM for the left-hand side is $\{M_i\}_{i=1}^N$, then we can always construct a POVM $\{ M_{O_i}'\}_{i=1}^N$ for the right-hand side as $M_{O_j}':= M_i \delta_{ij}$, and the inequality holds.

The set of states $\{\rho_O\}_{O \in SO(d)}$, where $\rho_O = O^{\otimes k}\rho_0^{\otimes k}O^{\dagger \otimes k} = O^{\otimes k}\ketbra{0}^{\otimes k}O^{\dagger \otimes k}, \quad O \in SO(d)$, is group-covariant. From ~\cite{Chiribella_Maximum_likelihood} we know that the optimal measurement should also be group-covariant, and by \cite[Remark 2] {Chiribella_Maximum_likelihood} the optimal POVM  coincides with the “square-root measurement” or the pretty-good measurements ~\cite{Hausladen01121994}. Then the optimal POVM is given by $M_O = F^{-\frac{1}{2}}\rho_O^{\otimes k}F^{-\frac{1}{2}}$, where $F$ is a frame operator equal to $F = \int_{O \in SO(d)} O^{\otimes k} \rho_0^{\otimes k} O^{\dagger \otimes k} dO = \int_{O \in SO(d)} O^{\otimes k} \ketbra{0}^{\otimes k} O^{\dagger \otimes k} dO$. One can notice that the frame operator commutes with $\forall O^{\otimes k}, O \in SO(d)$ and is stable under it, i.e., $O^{\otimes k} F O^{\dagger \otimes k} = O^{\dagger \otimes k} F O^{\otimes k} = F$. By \cite[Reference 56] {2022BavarescoMuraoQuintino}, since $F$ commutes with $\forall O^{\otimes k}, O \in SO(d)$, $F^{-\frac{1}{2}}$ also commutes with $\forall O^{\otimes k}, O \in SO(d)$ as a function of $F$. For a group-covariant continuous set of states, using derivations analogous to ~\cite{Zhou2025distinguishability}, ~\cref{eq:continuous_prob_discr} can be transformed into
\begin{align}
    \max_{\{M_O\}} &\int_{O\in SO(d)} \Tr \left( M_O \left[ O \ketbra{0}{0} O^{\dagger} \right]^{\otimes k} \right) dO = \\
    &= \int_{O \in SO(d)} F^{-\frac{1}{2}} \rho_O^{\otimes k} F^{-\frac{1}{2}} \rho_O^{\otimes k} dO = \\
    &= \left(\Tr(F^{\frac{1}{2}})\right)^2. \label{eq:frame}
\end{align}

We can now recognise that $F=\overline{\rho}$, hence we may just combine \cref{eq:mean_state} with the spectral decomposition of $\overline{\rho}$ presented in \cite{Nemoz2025real} to finish the proof.
\end{proof}

\subsection{Pure real qubits}
Let us begin by considering only pure real states and analysing $\Omega^\textup{real}_\textup{pure}(d, N, k)$. 
For $d=2$, the sets of pure real qubits, or pure rebits, are constructed as $\{\rho_i\}_{i=1}^N=\{\ketbra{\psi_i}\}_{i=1}^N$, where $\ket{\psi_i}=\cos(\theta_i)\ket{0} + \sin(\theta_i)\ket{1}$. 

\begin{restatable}{theorem}{thmRebit}\label{thm:SO(2)}
    For real quantum states with $d=2$ and $N\geq k+1$,
        \begin{align}\label{eq:SO(2)}
        \Omega^\textup{real}_\textup{pure}(d=2,N,k) =& \frac{1}{N} \frac{1}{2^k} \left( \sum_{j=0}^k \sqrt{\binom{k}{j}} \right)^2 
    \end{align}
\end{restatable}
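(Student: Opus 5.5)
The plan is to specialise \cref{thm:SO(d)} to $d=2$ and to work directly with the structure of $SO(2)$ acting on $(\mathbb{R}^2)^{\otimes k}$. Since $SO(2)\cong U(1)$ is abelian, it is cleanest to pass to the circular basis $\ket{\pm}=(\ket{0}\pm i\ket{1})/\sqrt{2}$. In this basis the rotation by $\theta$ acts diagonally as $e^{\mp i\theta}$. A real state $\ket{\psi_\theta}=\cos\theta\ket{0}+\sin\theta\ket{1}$ equals $(e^{-i\theta}\ket{+}+e^{i\theta}\ket{-})/\sqrt{2}$. Hence $\ket{\psi_\theta}^{\otimes k}$ lies in $\operatorname{Sym}_2^k$ and has components $2^{-k/2}\sqrt{\binom{k}{j}}\,e^{-i(k-2j)\theta}$ on the Dicke states $\ket{D_j}$, where $j$ is the number of $\ket{-}$ factors. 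The operator $\overline{\rho}=\int\ketbra{\psi_\theta}^{\otimes k}\mathrm{d}\theta$ is then diagonal in the Dicke basis, with eigenvalues $\lambda_j=\binom{k}{j}/2^k$ for $j=0,\ldots,k$. These are distinct weights $k-2j$, so each has multiplicity one, and all phases cancel. This gives $\Tr(\sqrt{\overline{\rho}})^2=2^{-k}\bigl(\sum_{j=0}^k\sqrt{\binom{k}{j}}\bigr)^2$, and it also serves as a consistency check against the $\lambda_j,m_j$ of \cref{thm:SO(d)} at $d=2$.

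For the upper bound I would reuse the optimality argument in the proof of \cref{thm:SO(d)}. Any finite set of rebits is dominated by the continuous $SO(2)$-covariant ensemble. By covariance, the optimal measurement there is the pretty good measurement, whose value is $\frac{1}{N}\Tr(F^{1/2})^2$ with $F=\overline{\rho}$. Combined with the spectrum computed above, this gives $\Omega^\textup{real}_\textup{pure}(2,N,k)\le\frac{1}{N}2^{-k}\bigl(\sum_j\sqrt{\binom{k}{j}}\bigr)^2$ for every $N$. As a sanity check, I could instead derive the bound directly via Cauchy--Schwarz, in the form $\sum_i\Tr(\ketbra{\psi_i}^{\otimes k}M_i)\le\Tr(\sqrt{\overline\rho})^2$ for uniformly weighted real states.

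For attainability I need a real group $k$-design with $N$ elements whenever $N\ge k+1$. I would take the equally spaced rebits $\theta_i=\pi i/N$, $i=0,\ldots,N-1$. These form an orbit of the cyclic group generated by rotation by $\pi/N$, which acts projectively on the rays. In the Dicke basis, $\frac1N\sum_i\ketbra{\psi_{\theta_i}}^{\otimes k}$ has off-diagonal entries proportional to $\frac1N\sum_i e^{-2i(j'-j)\theta_i}=\frac1N\sum_i e^{-2\pi i(j'-j)i/N}$. For $1\le|j'-j|\le k<N$ this sum vanishes, so the average equals the Haar average $\overline{\rho}$. The set is therefore group covariant and a real state group $k$-design, and the result of Ref.~\cite{Zhou2025distinguishability} gives $\Discr=\frac1N\Tr(\sqrt{\overline\rho})^2$, which matches the bound. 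For $N>k+1$ one can either take the $N$ equally spaced states directly, or take $k+1$ of them and pad with arbitrary states.

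The step I expect to need the most care is the exact threshold, namely that $N=k+1$ suffices. The condition is that $N\nmid 2(j'-j)\cdot\frac{N}{\pi}\cdot\frac{\pi}{N}$, that is, that the differences $j'-j\in\{1,\ldots,k\}$ are never multiples of $N$. This holds precisely because $k<N$, using the angle $\pi/N$ rather than $2\pi/N$, since rays are invariant under $\theta\mapsto\theta+\pi$. I also need to make sure that the covariance of $\ket{\psi_\theta}$ up to a global sign $(-1)$ does not affect the $k$-fold projector. It does not, because the projector $\ketbra{\psi}^{\otimes k}$ is invariant under $\ket{\psi}\mapsto-\ket{\psi}$.
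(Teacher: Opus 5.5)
Your proposal is correct and follows the paper's skeleton. The upper bound is the $SO(2)$-covariant continuous-ensemble/PGM argument of \cref{thm:SO(d)}. Attainability comes from the regular polygon being a real group $k$-design, together with $\Discr=\frac{1}{N}\Tr(\sqrt{\overline{\rho}})^2$ for covariant sets.

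The difference is in how you obtain the spectrum of $\overline{\rho}$ and the design property. The paper substitutes $d=2$ into the general-$d$ eigenvalues and multiplicities of Ref.~\cite{Nemoz2025real}. It simplifies the double factorials to $\lambda_j=2^{-k}\binom{k}{j}$ with $m_j\in\{1,2\}$, and unfolds the half-range sum into a full one via $\binom{k}{j}=\binom{k}{k-j}$, writing out only even $k$. It then cites Ref.~\cite{Delsarte1991spherical} for the design property. You instead diagonalise $\overline{\rho}$ directly in the circular basis, where the Dicke states $\ket{D_j}$ carry distinct weights $k-2j$. The full spectrum $2^{-k}\binom{k}{j}$, $j=0,\dots,k$, then appears at once; the paper's $m_j=2$ is just the pairing of your weights $\pm(k-2j)$. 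You also verify the design property by a discrete Fourier sum rather than citing it.

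Your route is more elementary and self-contained, and treats even and odd $k$ uniformly. It also does not depend on the summation range printed in \cref{thm:SO(d)}. That upper limit, $\lfloor k/2\rfloor-1$, drops the top term: the appendix sums to $\lfloor k/2\rfloor$, and for $k=2$ the printed range would give $1$ instead of $\frac{3}{2}+\sqrt{2}$. Your consistency check would expose this. The paper's route, in exchange, presents the $d=2$ value as an instance of the general-$d$ formula.

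Your Cauchy--Schwarz alternative deserves to be the main upper-bound argument. Since $\lvert\langle D_j|\psi^{\otimes k}\rangle\rvert^2=\lambda_j$ for every real $\ket{\psi}$, one gets $\langle\psi^{\otimes k}|\overline{\rho}^{-1/2}|\psi^{\otimes k}\rangle=\Tr(\sqrt{\overline{\rho}})$. Hence $\ketbra{\psi}^{\otimes k}\le\Tr(\sqrt{\overline{\rho}})\sqrt{\overline{\rho}}$, a state-independent dual certificate that gives the bound without passing through continuous POVMs. Keep $\overline{\rho}$ there equal to the Haar average: with the empirical average of the chosen states, the inequality reverses in general.

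One slip: $N\nmid 2(j'-j)\cdot\frac{N}{\pi}\cdot\frac{\pi}{N}$ reduces to $N\nmid 2(j'-j)$, which is not the right condition. The correct condition is $N\nmid(j'-j)$, which you state immediately afterwards and which your Fourier computation actually uses.
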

The proof of \cref{thm:SO(2)} is presented in \cref{app:SO(2)}. The proof is based on the fact that, for $N\geq k+1$, the regular $N$-gon is a $k$-circle design, hence a real qubit design. So, for $N\geq k+1$, we can make use of \cref{thm:SO(d)}, and make use of standard combinatoric arguments to simplify \cref{eq:SO(d)} for the case where $d=2$.

For scenario $(d, N, k) = (2, 3 ,2)$ in the \cref{thm:trine_new}, we proved the optimal set of states constitutes an equiangular triangle in the XZ-plane of the Bloch sphere. The maximal $k$-copy discriminability in this case equals to $\Omega^\textup{real}_\textup{pure}(d=2,N = 3,k=2) = \frac{1}{2}+\frac{\sqrt{2}}{3} \approx 0,9714$. Furthermore,, as stated in \cref{thm:trine_new}, when restricting to real states, one has for any $N\geq 3$ that
\begin{equation}
   \Omega^\textup{real}_\textup{pure}(d=2,N \geq 3,k=2)= \frac{1}{N}\left(\frac{3}{2}+\sqrt{2}\right).
\end{equation}
The proof of \cref{thm:trine_new} is presented in \cref{app:trine}; the proof is based on an alternative argument than in \cref{thm:SO(d)} and makes use of the dual problem for state discrimination. Motivated by the fact that the commutant of $SO(2)$ is spanned by the projector onto the symmetric space, anti-symmetric space, and the maximally entangled state, we present an educated guess for a feasible point for the dual problem, ensuring an upper bound. We then show that this upper bound is tight using the lower bound that comes from \cref{thm:SO(2)} tailored for $N=3$. We also conjecture that the trine is also most discriminable among pure (complex) qubits for any $k$. For $k < 5$, this conjecture is strongly supported by numerical evidences.

\subsection{Pure real qubits and regular polygons}
In \cref{sec:pure}, we recognised that when a state $k$-design exists, such states form a set that is $k$-copy most discriminable. For real states analogous results hold, i.e., when a real qudit group $k$-design exists, the maximal $k$-copy discriminability is achieved by a real qudit group $k$-design. For $d=2$ real state $k$-designs are precisely the circle $k$-designs, and it is known that a regular $N$-gon is always an $N+1$ design. Therefore, the optimal set of states consists of regular polygons in the $XZ$-plane.

We observe that, as presented in \cref{table:classical_and_rebit}, this result seems to be even stronger. Our numerical analysis suggest that, for real qubits, for any number of states $N$ and any number of copies $k$, regular $N$-gons are most discriminable rebits, even when $N<k+1$. Additionally, we have numerically verified that the optimal measurements are given by the pretty good measurements~\cite{Hausladen01121994}.

\subsection{Mixed real qubits}

Similarly to the case of complex-valued quantum states, we verify that the most discriminable real qudits are not necessarily pure. Our numerical results, see \cref{app:numerical}, shows that for $d=2$ and some values of $N$ and $k$, we have that $\Omega^\textup{real}(d,N,k)>\Omega^\textup{real}_\textup{pure}(d,N,k)$. Also, we observe that, whenever a real state $k$-design exists for $N-1$ states, the set of $N$ states consisting of that design and a fully mixed state outperforms pure real states.

\subsection{Asymptotic analysis  of $\Omega^\textup{real}_\textup{pure}(d=2, N, k)$}

We now analyse the asymptotic behaviour of $\Omega^\textup{real}_\textup{pure}(d=2, N, k)$. In \cref{thm:SO(2)}, we prove that, for $N \geq k+1$, 
$\Omega^\textup{real}_\textup{pure}(d=2, N, k)= \frac{1}{N} \frac{1}{2^k} \left( \sum_{j=0}^k \sqrt{\binom{k}{j}} \right)^2$, where the bound is achieved by regular $N$-gons. 

In order to understand the asymptotic behaviour of this expression, we can study Rényi entropy for binomial distributions. In particular, if we set $\omega:=1/2$, $p:=1/2$, $n:=k$, and $k:=j$, Eq.~1 of Ref.~\cite{Jacquet1999entropy} reads as 
\begin{align}
    h_k(1/2) =  \ln \left( \frac{1}{2^k} \left(\sum_{j=0}^k \sqrt{\binom{k}{j} }\right)^2 \right).
\end{align}
That is, $ \frac{1}{2^k}\left( \sum_{j=0}^k \sqrt{\binom{k}{j}} \right)^2=e^{h_k(1/2)}$. 

Since the standard deviation of a single trial of the binomial distribution is given by $\sigma=1/2$, Theorem 1 of Ref.~\cite{Jacquet1999entropy} ensures that
\begin{align}
    h_k(1/2)=\ln \left( \sqrt{2\pi k} \right) + o(1),
\end{align}
hence, we have that 
\begin{align}\label{eq:lowerbound_real_entropy}
    \frac{1}{2^k}\left( \sum_{j=0}^k \sqrt{\binom{k}{j}} \right)^2 = \sqrt{2\pi k} \cdot \Theta(1) = \Theta(\sqrt{2\pi k})
\end{align}
and, for large $k$, the lower bound for rebits behaves as
\begin{align}
    \frac{1}{N} \frac{1}{2^k} \left( \sum_{j=0}^k \sqrt{\binom{k}{j}} \right)^2 = \frac{1}{N} \sqrt{2\pi k}\cdot \Theta(1).
\end{align}

This approach gives us an interesting connection of the real quantum case to Rényi entropy, but it does not allow us to obtain the exact constant in the limit of large number of copies. 
In order to do so, we note that in Ref.~\cite{polya_analysis} (problem 40 of part II), it is shown that, 
\begin{align}
    \sum_{j=0}^k \sqrt{\binom{k}{j}} \sim \frac{2^{k/2}}{\sqrt{1/2}} \left( \frac{2}{\pi k} \right) ^{-1/4} = 2^{k/2} \cdot (2\pi k)^{1/4},
\end{align}
which gives us the asymptotic equivalence for $\Omega^\textup{real}_\textup{pure}(d=2, N, k)$
\begin{align}
    \frac{1}{N} \frac{1}{2^k} \left( \sum_{j=0}^k \sqrt{\binom{k}{j}} \right)^2 \sim \frac{1}{N} \sqrt{2\pi k}.
\end{align}

With these evaluations, we can notice that, for large values of $k$, and $N\geq k+1$, classical bits respect $\Omega^\textup{cl}(d=2,N,k)\sim \frac{1}{N} \sqrt{\frac{\pi k}{2}} $, and pure rebits respect $\Omega^\textup{real}_\textup{pure}(d=2,N,k)   \sim \frac{1}{N} \sqrt{2\pi k} $. That is, rebits only offer a constant advantage over classical bits. 

This computation also shows that complex pure qubits offer a quadratic advantage over pure real qubits. 

\section{Numerical methods for upper and lower bounds on the $k$-copy discriminability} \label{sec:numerics}

\subsection{Lower bounds} \label{sec:numerics_lower}

In this subsection, we present methods used to obtain numerical lower bounds on the maximal $k$-copy discriminability of $N$ qudit states.
It is known that, for a given set of states $\{\rho_i\}_{i=1}^{N}$, its discriminability can be determined by finding optimal measurements via a semidefinite programming (SDP) approach~\cite{Boyd2004}. The formulation of the problem, as described in Ref. ~\cite{Watrous2018Book}, is 
\begin{align} \label{sdp:optimal_discr}
    \text{Given} \quad & \{\rho_i\}_{i=1}^{N} \nonumber \\
    \text{maximize} \quad & \frac{1}{N} \sum_{i=1}^{N} \Tr(\rho_i^{\otimes k} M_i) \nonumber \\
    \text{subject to} \quad & M_i \geq 0 \quad \forall i \nonumber \\
    & \sum_{i=1}^{N} M_i = \id
\end{align}

This is a convex optimisation of a linear function over affine and positive semidefinite constraints. It involves finding $N$ operators of size $d^k \times d^k$. For this semidefinite program, by Slater’s condition, strong duality holds, and the optimum value is attained. Nevertheless, finding the maximal $k$-copy discriminability of $N$ qudit states cannot be solved with a one-shot SDP, since the maximisation over qudit states is not convex. We used several techniques for finding the lower bound of an $\Omega(d, N, k)$.

\textit{Grid search with SDP.} For qubit case $d=2$, we can discretize the set of possible states using the Bloch sphere representation and parameterize each state by two angles: a polar angle $\theta_i \in [0, \pi]$ and an azimuthal angle $\phi_i \in [0, 2\pi)$. Then the optimisation \cref{sdp:optimal_discr} is performed for every possible set of $N$ states. This exhaustive grid search scales with the cardinality of a search space of such sets, given by $\binom{m}{N}$, where $m$ denotes the number of candidate states determined by the discretisation step size of the polar and azimuthal angles.

The run-time complexity of \cref{sdp:optimal_discr} scales exponentially with the number of state copies $k$ available, and with the number of discretisation steps. For pure states, there is a possible improvement suggested in Ref. ~\cite{mohan2023, Boyd2004}, which reduces the size of the standard SDP for the minimum-error state discrimination problem. The approach consists of constructing a Gram matrix for a given set of pure states $\{\ketbra{\psi_i}\}_{i=1}^{N}$. The Gram matrix is an $N \times N$ matrix with entries $G_{ij} = \bra{\psi_i}\ket{\psi_j}^k$. The initial problem can thus be reformulated as
\begin{align}
    \text{Given} \quad & \{\ketbra{\psi_i}\}_{i=1}^{N} \nonumber \\
    \text{maximize} \quad & \frac{1}{N} \sum_{i=1}^{N} \bra{\psi_i}W_i\ket{\psi_i} \nonumber \\
    \text{subject to} \quad & W_i \geq 0 \quad \forall i \nonumber \\
    & \sum_{i=1}^{N} W_i = G.
\end{align}

This optimisation finds $N$ operators of size $N \times N$, and is independent on the number of available copies $k$, providing a substantial decrease in the runtime complexity compared to the previous approach.

Grid search provides a lower bound on the value of a function $\Omega(d, N, k)$, since it relies on the discretisation of the search space. Nevertheless, with this approach, it is easy to control and constrain the search space if needed. Combined with the Gram matrix formulation it is a useful tool for the systematic exploration of possible candidate solutions, which provides a good understanding of the structure of the problem.

\textit{Gradient descent.} Another approach to obtaining a numerical lower bound of $\Omega(d, N, k)$ is to use gradient descent methods, in particular, the Adam algorithm \cite{kingma2017adam}. This is a stochastic optimisation suitable for non-convex optimisation problems. In our case, the parameters in the algorithm are defined as $2N$ matrices of size $d \times d$ for real and imaginary values of every density matrix of a state in the set, and $2N$ matrices of size $d^k \times d^k$ for measurements. The parameters are optimised using a finite differences approximation of the first derivative with standard hyperparameters: exponential decay rates $\beta_1=0.9$ and $\beta_2=0.999$, exponentially decreasing learning rate from $\alpha=0.1$ to $10^{-8}$, $10^4$ iterations, and several trials with random initial values. The algorithm performs a search over the whole unconstrained space and provides high precision of the value of the objective function; more details about how we unconstrained the problem are provided in \cref{app:gd_details}.

All of the obtained numerical results are presented in \cref{table:numerics}, \cref{table:classical_and_rebit} in the appendix. The code is written in the Julia language and uses the MOSEK solver~\cite{mosek}, and is available on GitHub~\cite{kvashchuk_github}.

\subsection{Upper bounds} \label{sec:numerics_upper}

We now introduce a hierarchy of semidefinite programming relaxations providing upper bounds on the maximum discriminability $\Omega$ for several scenarios. This section outlines the SDP formulation, with more details on the implementation found in \cref{app:sdp_details}.

Let $\mathcal{H}_S$ and $\mathcal{H}_M$ be the Hilbert spaces associated with the states $\rho_i$ and POVM elements $M_i$, respectively. In order to optimize over both states and measurements simultaneously, we make use of the identity $\Tr(F_{BA} A \otimes B) = \Tr(A B)$, where $F_{BA}$ is the swap or ``flip'' operator acting on the Hilbert spaces of operators of $A$ and $B$. First, we apply the identity to rewrite the trace in \cref{eq:def_omega} as
\begin{equation}\label{eq:sdp_fliptrace}
    \Tr(\rho_i^{\otimes k} M_i) = \Tr(F_{MS} \rho_i^{\otimes k} \otimes M_i).
\end{equation}
This expression is neither linear nor convex in terms of $\rho_i$ and $M_i$, making it unsuitable in a semidefinite program's objective function. To address this, we first replace $\rho_i^{\otimes k} \otimes M_i$ with the operators $\{ \Phi^0_i \}_{i=1}^N$ satisfying the constraints
\begin{align}\begin{split}\label{eq:sdp_phi0}
    \Phi^0_i \ge 0,\quad \Phi^0_i \in \operatorname{Perm}_d^k \otimes \operatorname{Perm}_d^k, \\
    \sum_{i=1}^N \Tr_S(\Phi^0_i) = \id_M, \quad \Phi^0_i \in \mathrm{Sep}(\mathcal{P}_{1^k|k})
\end{split}\end{align}
where $\mathrm{Sep}(\mathcal{P}) \subset \mathcal{L}(\mathcal{H}_S^{\otimes k} \otimes \mathcal{H}_M)$ is the convex cone of separable operators according to the partition $\mathcal{P}$, where $\mathcal{P}_{1^k|k} = \{ (1), (2), \ldots, (k), (k+1, \ldots, 2k) \}$.

Note that since $\rho_i^{\otimes k} \in \operatorname{Perm}_d^k$, only the projection of $M_i$ onto $\operatorname{Perm}_d^k$ contributes to the trace. Furthermore, since no constraints on $M_i$ act nontrivially outside this space, without loss of generality we can also enforce permutation invariance on $M_i$.

To approximate the separability constraint in $\Phi^0_i$, we make use of the Doherty-Parrilo-Spedalieri~\cite{dps2004_complete} (DPS) hierarchy (or, more precisely, the quantum de Finetti theorem~\cite{caves2002_quantumdefinetti}) and the positive partial transpose (PPT) criterion~\cite{peres1996_ppt}. This is done by considering instead positive operators $\Phi^\ell_i \in \operatorname{Perm}_d^{k+\ell} \otimes \operatorname{Perm}_d^k$ acting as potential extensions of $\Phi^0_i$ with $\ell$ additional permutation-invariant copies of the state. By the quantum de Finetti theorem, we then have
\begin{align}
	\min_{\Phi^0_i} \| \Tr_{S_{\ell}}(\Phi^\ell_i) - \Phi^0_i \|_1 \le \epsilon(d, k, \ell)
\end{align}
where $\Tr_{S_{\ell}}$ denotes the partial trace over $\ell$ copies of the state and $\epsilon(d, k, \ell)$ a nonnegative function such that $\lim_{\ell \to \infty} \epsilon(d, k, \ell) = 0$. Purity constraints and proper normalisation on the state subspace are obtained by introducing a new optimisation variable $\Phi^\ell_{\perp i}$ corresponding to $\rho_i^{\otimes k} \otimes (\id - M_i)$, and an auxiliary variable $\Psi_i \in \mathcal{L}(\mathcal{H}_S^{\otimes k+\ell})$ such that $\Phi^\ell_i + \Phi^\ell_{\perp i} = \Psi_i \otimes \id_M$. The final SDP relaxation becomes:

\begin{equation}\begin{split}\label{eq:sdp_main}
	&\text{maximize}\quad \frac{1}{N} \sum_{i=1}^N \Tr[ \left( \id_{S_\ell} \otimes F_{M S_k} \right) \Phi^\ell_i ] \\
	&\text{subject to} \\
	&\quad \Phi^\ell_i, \Phi^\ell_{i \perp}, \Psi_i \ge 0, \\
	&\quad \Phi^\ell_i + \Phi^\ell_{\perp i} = \Psi_i \otimes \id_M, \\
	&\quad \Phi^\ell_i, \Phi^\ell_{\perp i} \in \operatorname{Perm}_d^{k+\ell} \otimes \operatorname{Perm}_d^k, \\
	&\quad \sum_{i=1}^N \Tr_S(\Phi^\ell_i) = \id_M,\quad \Tr(\Psi_i) = 1, \\
	&\quad (\Phi^\ell_i)^{\mathrm{T}_n}, (\Phi^\ell_{\perp i})^{\mathrm{T}_n} \ge 0,\; n = 1, \ldots, k+\ell.
\end{split}\end{equation}
Here, $\mathrm{T}_n$ denotes the partial transpose with respect to $n$ copies of $\mathcal{H}_S$, which can be chosen arbitrarily due to permutation invariance. For the scenario involving pure states, we let $\psi_i = \Tr_{S_{k+\ell-2}}(\Psi_i)$ and further impose $\Tr(F_{S_2 S_1} \psi_i) = 1$, corresponding to the condition $\Tr(\rho_i^2) = 1$. Upper bounds for rebits can be obtained by enforcing $\mathrm{Im}[\Psi_i] = 0$. Further details on the implementation can be found in \cref{app:sdp_details}.

For this work, we have employed this SDP to obtain upper bounds on $\Omega$ for the case $d = 2$ and several scenarios involving pure and mixed qubits and rebits, with $k+\ell \le 5$. Results can be found in \cref{{table:upper_numerics_sdp}}. Optimisations were performed using Python and CVXPY~\cite{diamond2016cvxpy,agrawal2018rewriting}, with MOSEK~\cite{mosek} and SCS~\cite{odonoghue21_SCS} solvers, running on a Intel Core Ultra 9 285K CPU with 128 GB of RAM. The code is available on GitHub~\cite{1ucasvb_github}.

Note that \cref{eq:sdp_phi0} and the SDP in \cref{eq:sdp_main} relax the product form necessary for the substitution in \cref{eq:sdp_fliptrace}. The resulting feasible region for the operators $\Phi^\ell_i$ includes convex mixtures of strategies, effectively allowing for classically correlated state preparations and measurements. It is currently unknown if this additional resource leads to an increased value for the maximum discriminability.

\section{Visual illustration of the most discriminable two-dimensional states} \label{sec:visualisation}
In this section, we present a Bloch sphere representation for the most discriminable states for some small instances of the problem considered here \cref{fig:k2_N4_pure_mixed}-\cref{fig:k3_N7_pure_mixed}. We recall that the solution of this problem is not unique. For the pure case, when there exists a state $k$-design with $N-1$ states, we may always construct a set of the most discriminable states where $N-1$ states form a design, and the other state is arbitrary, as detailed in the proof of \cref{thm:k-copy-optimal-pure-design}. This explains why in \cref{fig:k2_N5_pure_mixed} we represent only $4$ states instead of $5$, and analogously in \cref{fig:k3_N7_pure_mixed} only $6$ states instead of $7$. Also, for some instances, when for given parameters the design has not appeared yet, solutions are also not unique; for example, in \cref{fig:k3_N5_pure_mixed}, for the pure case, the bipyramid attains this value, but it is not the only solution. To characterize the entire set of sets with this discriminability is a nontrivial task.

\begin{figure}[h!]
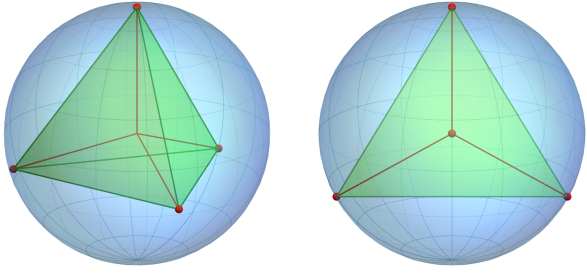

\centering
\begin{subfigure}{0.45\linewidth}
    \centering
    \includegraphics[width=\linewidth]{figures/bloch_tetrahedron.png}
    \caption{$\Omega^{\textup{q}}_{\textup{pure}}(2,4,2) = 0.75$}
\end{subfigure}
\hspace{0.03\linewidth}
\begin{subfigure}{0.45\linewidth}
    \centering
    \includegraphics[width=\linewidth]{figures/bloch_triangle_center.png}
    \caption{$\Omega^{\textup{q}}(2,4,2) = 0.7911$}
\end{subfigure}
\caption{(d, N, k) = (2, 4, 2)}
\label{fig:k2_N4_pure_mixed}
\end{figure}

\begin{figure}[h!]
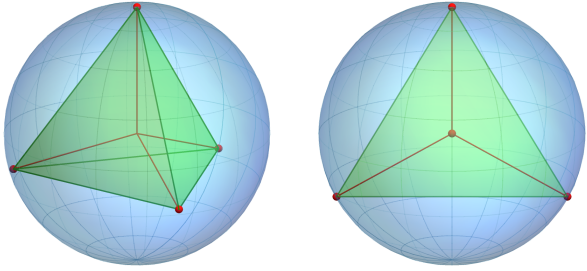

\centering
\begin{subfigure}{0.45\linewidth}
    \centering
    \includegraphics[width=\linewidth]{figures/bloch_tetrahedron.png}
    \caption{$\Omega^{\textup{q}}_{\textup{pure}}(2,4,3) = 0.9714$}
\end{subfigure}
\hspace{0.03\linewidth}
\begin{subfigure}{0.45\linewidth}
    \centering
    \includegraphics[width=\linewidth]{figures/bloch_tetrahedron.png}
    \caption{$\Omega^{\textup{q}}(2,4,3) = 0.9714$}
\end{subfigure}
\caption{(d, N, k) = (2, 4, 3)}
\label{fig:k3_N4_pure_mixed}
\end{figure}

\begin{figure}[h!] 
\centering
\begin{subfigure}{0.45\linewidth}
    \centering
    \includegraphics[width=\linewidth]{figures/bloch_tetrahedron.png}
    \caption{$\Omega^{\textup{q}}_{\textup{pure}}(2,5,2) = 0.6$}
\end{subfigure}
\hspace{0.03\linewidth}
\begin{subfigure}{0.45\linewidth}
    \centering
    \includegraphics[width=\linewidth]{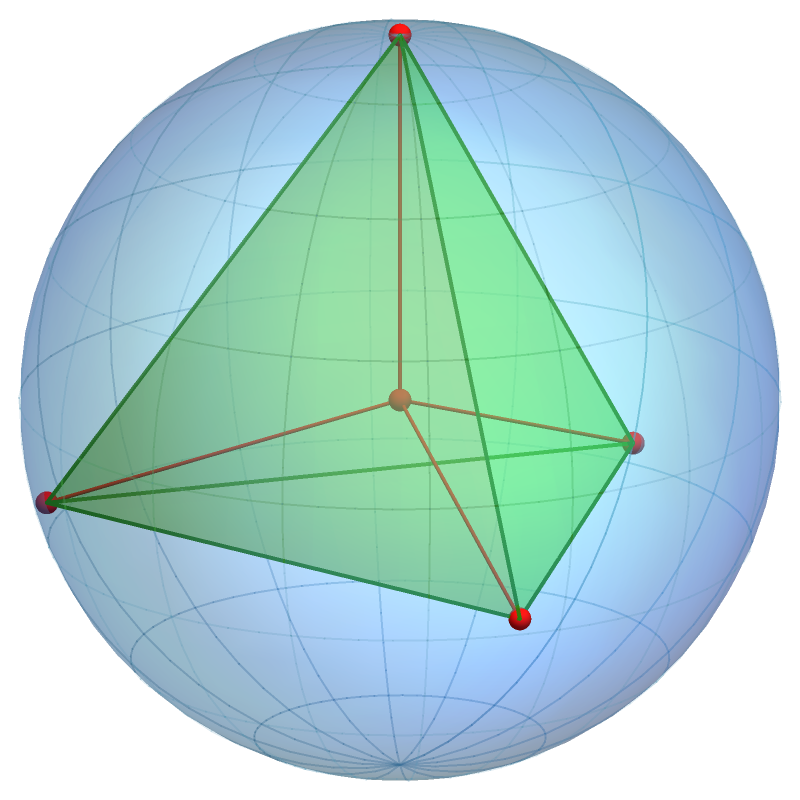}
    \caption{$\Omega^{\textup{q}}(2,5,2) = 0.65$}
\end{subfigure}
\caption{(d, N, k) = (2, 5, 2)}
\label{fig:k2_N5_pure_mixed}
\end{figure}

\begin{figure}[h!]
\centering
\begin{subfigure}{0.45\linewidth}
    \centering
    \includegraphics[width=\linewidth]{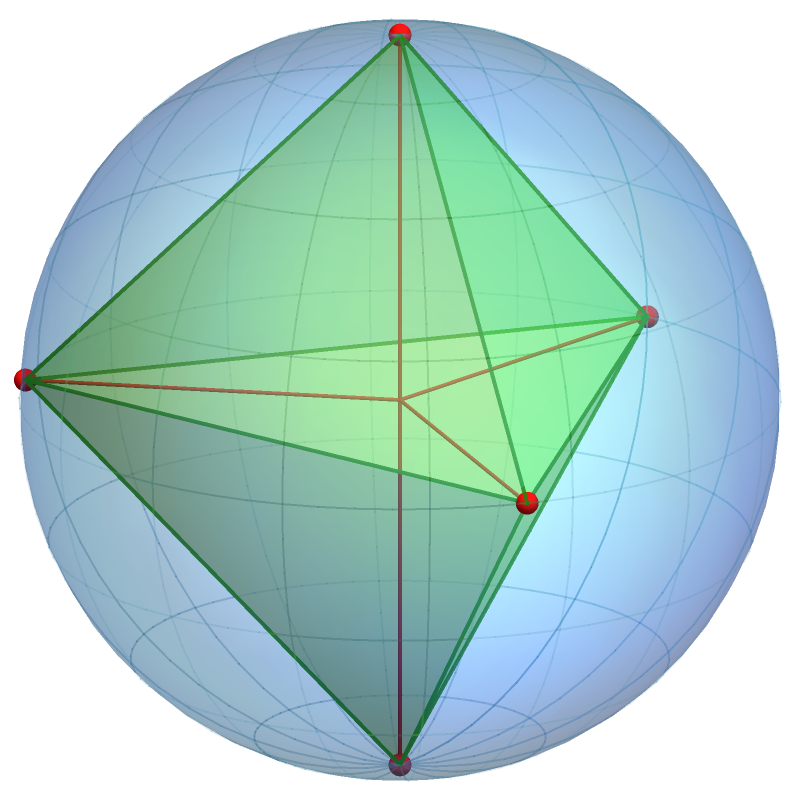}
    \caption{$\Omega^{\textup{q}}_{\textup{pure}}(2,5,3) = 0.7901$}
\end{subfigure}
\hspace{0.03\linewidth}
\begin{subfigure}{0.45\linewidth}
    \centering
    \includegraphics[width=\linewidth]{figures/bloch_tetrahedron_center.png}
    \caption{$\Omega^{\textup{q}}(2,5,3) = 0.8771$}
\end{subfigure}
\caption{(d, N, k) = (2, 5, 3)}
\label{fig:k3_N5_pure_mixed}
\end{figure}

\begin{figure}[h!]
\centering
\begin{subfigure}{0.45\linewidth}
    \centering
    \includegraphics[width=\linewidth]{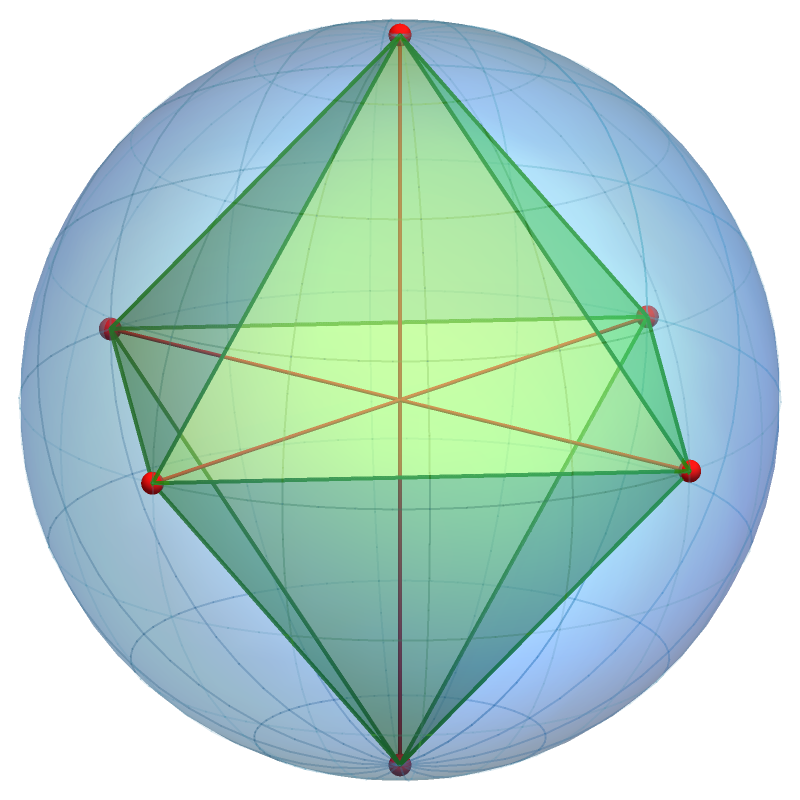}
    \caption{$\Omega^{\textup{q}}_{\textup{pure}}(2,6,3) = 0.5$}
\end{subfigure}
\hspace{0.03\linewidth}
\begin{subfigure}{0.45\linewidth}
    \centering
    \includegraphics[width=\linewidth]{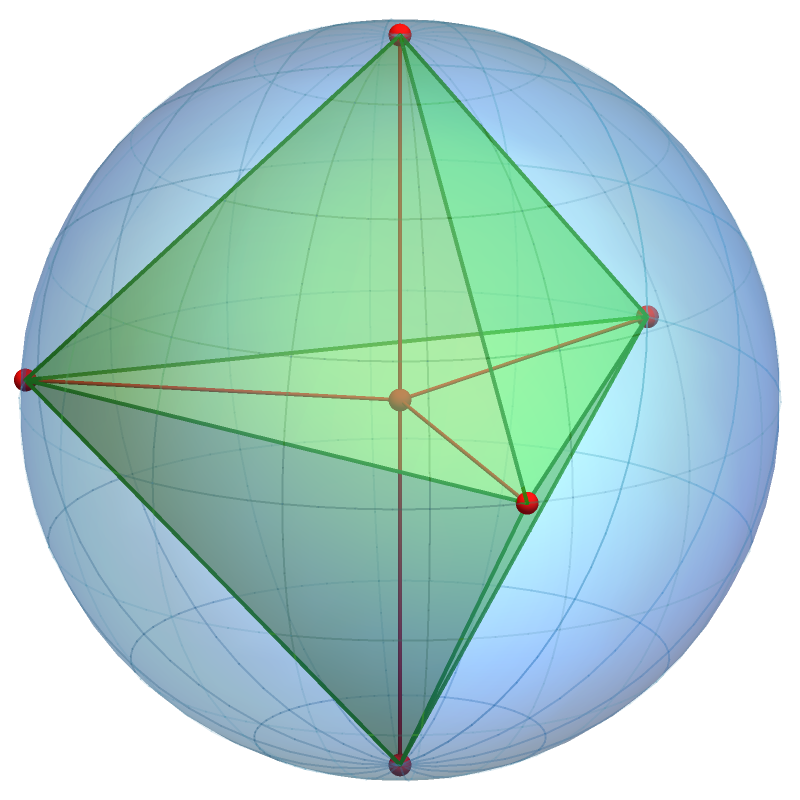}
    \caption{$\Omega^{\textup{q}}(2,6,3) = 0.7418$}
\end{subfigure}
\caption{(d, N, k) = (2, 6, 3)}
\label{fig:k3_N6_pure_mixed}
\end{figure}

\begin{figure}[h!]
\centering
\begin{subfigure}{0.45\linewidth}
    \centering
    \includegraphics[width=\linewidth]{figures/bloch_bipyramid4.png}
    \caption{$\Omega^{\textup{q}}_{\textup{pure}}(2,7,3) = 0.5714$}
\end{subfigure}
\hspace{0.03\linewidth}
\begin{subfigure}{0.45\linewidth}
    \centering
    \includegraphics[width=\linewidth]{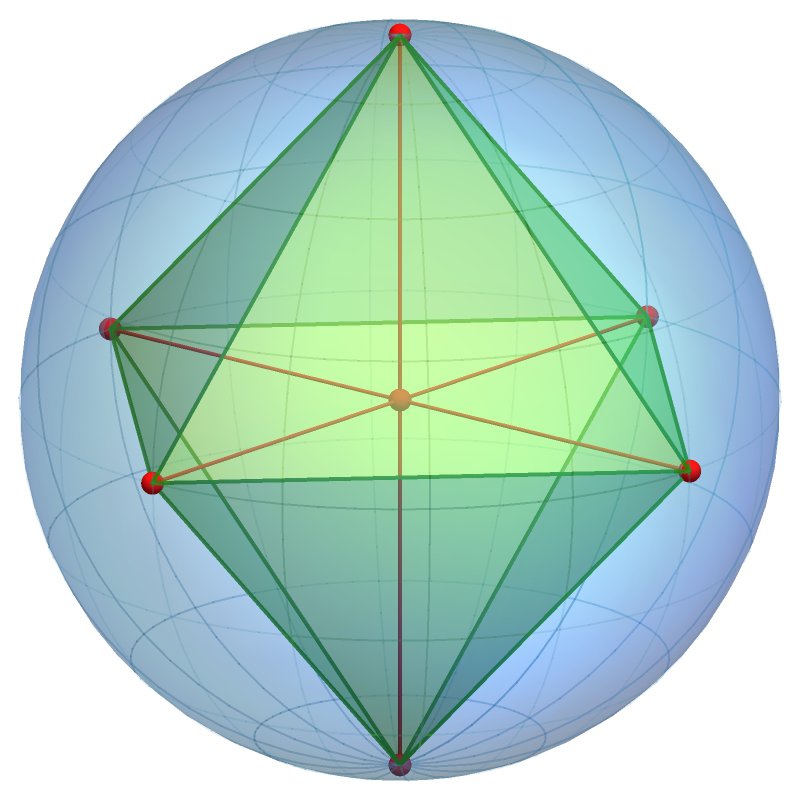}
    \caption{$\Omega^{\textup{q}}(2,7,3) = 0.6429$}
\end{subfigure}
\caption{(d, N, k) = (2, 7, 3)}
\label{fig:k3_N7_pure_mixed}
\end{figure}

\section{The relationship between $k$-copy discrimination problem and state $k$-designs} \label{sec:relationship}

In this section we summarise the results obtained in this work that connect the problem of finding the most discriminable $k$-copy states with the concept of state $k$-designs. 

\subsection{Pure quantum states} \label{subsec:pure_and_design}
When focusing on pure quantum states (see \cref{sec:pure}), the relationship between the problem of finding the most discriminable $k$-copy states and state $k$-designs is very strong. More precisely, as stated in \cref{coro:omega_pure}, when the number of states $N$ is large enough to contain a state $k$-design, then maximal $k$-copy discriminability $\Omega^\textup{q}_\textup{pure}(d,N,k)$ is attainable. Moreover, when the number of states $N$ is large enough to contain a state $k$-design, a set of pure states is most discriminable if and only if it contains a state $k$-design, and such set has maximal $k$-copy discriminability, which is given by $\Discr(\{\ketbra{\psi_i}\}_{i=1}^N, k) =  \frac{1}{N} \binom{d + k - 1}{k}$. This ensures that our computational methods to find lower and upper bounds on the function $\Omega^\textup{q}_\textup{pure}(d,N,k)$ are also methods to find state $k$-designs, or to rule out state $k$-designs for some given parameters $(d,N,k)$. 

Now, when the number of states $N$ is not large enough to contain a state $k$-design, the problem of finding the most discriminable $k$-copy states becomes more subtle. While \cref{thm:epsilon} shows that $\epsilon$-approximate state $k$-designs approximate the most discriminable $k$-copy states, the problem of finding the most discriminable states in this regime is likely a hard computational task. We notice that, for a given number of candidates $N$, dimension $d$, the problem of finding $k$-state designs, or even ensuring whether they exist, is a hard mathematical and computational problem~\cite{An2026survey}. Since checking whether $\Omega^\textup{q}_\textup{pure}(d,N,k) =  \frac{1}{N} \binom{d + k - 1}{k}$ is a necessary and sufficient condition for the existence of a state $k$-design, the problem of evaluating the function $\Omega^\textup{q}_\textup{pure}(d,N,k)$ and finding states that saturate this bound is also mathematically and computationally hard.

Finally, we mention that when the number of states $N$ is not large enough to contain a state $k$-design, we expect that the sets of pure states for which 
$\Discr(\{\ketbra{\psi_i}\}_{i=1}^N, k) = \Omega^\textup{q}_\textup{pure}(d,N,k)$ are likely good approximations of state $k$-designs, and they can be used for applications where state $k$-designs are useful, such as in quantum tomography, quantum cryptography, and quantum error correction.

\subsection{Mixed quantum states} 
As discussed in \cref{sec:mixed}, when the number of states $N$ is strictly larger than the required number of states to contain a $k$-design, the most discriminable $k$-copy states are not pure and are not strictly given by a state $k$-design. However, in this regime, we have shown that some advantageous $k$-copy states are still closely related to state $k$-designs, since they can be obtained from a set of states containing a $k$-design and the maximally mixed state. This construction follows from the intuition that, the optimal measurement for most discriminable $k$-copy quantum states, which by \cref{thm:k-copy-optimal-pure-design} contain state $k$-design, does not add up to the identity map in $\left(\mathbb{C}^d\right)^{\otimes k}$, but to the projector onto the symmetric subspace $\operatorname{Sym}_d^k$, so the optimal measurement is not normalised. This allows for an extra measurement outcome, defined as $M_{N+1} = \id - \Pi_\textup{sym}(d,k)$, which can increase the success probability of the discrimination task.

While we do not have proof that this construction is optimal, for the qubit case we have numerical evidence that it is. We expect that, at least for $d=2$, this construction may be optimal every time $N$ is large enough to contain a state $k$-design and one extra state.

\subsection{Real quantum states} 
As discussed in \cref{sec:real}, analysing multiple copies of real quantum states is mathematically more complicated than its complex counterpart, as real state $k$-designs do not add up to a projector, but to an operator with several different eigenvalues. Since real state $k$-designs do not add up to a projector, the methods used in the complex case to obtain upper bounds do not apply in a straightforward way, but we could still prove that when a real group state $k$-design exist, they are most $k$-copy discriminable, see \cref{thm:SO(d)}. For the qubit case, we have proven that, for $N\geq k+1$ and large values of $k$, we have $ \Omega^\text{real}_\text{pure}(d=2,N,k) \sim \frac{1}{N}   \sqrt{2\pi k}$ while for the classical case we have $\Omega^\text{cl}(d=2,N,k) \sim \frac{1}{N}  \sqrt{\frac{\pi k}{2}}$. That is, the advantage of encoding classical information on real qudits over classical bits is is constant, while as detailed in \cref{subsec:asympt_pure}, the advantage given by encoding classical information on complex qudits is quadratic. This is an evidence that the power of real quantum states is very limited~\cite{Nemoz2025real}.

It is also interesting to notice that, we have strong numerical evidence that the regular polygon is always the most discriminable set of pure real qubits, regardless of the number of states $N$ or the number of copies $k$. While this seems to be intuitively true to us, we were unable to find a proof of this fact.

Lastly, we notice that, as in the complex case, when mixed states are considered, the most discriminable real quantum states are not pure. And, as in the complex case, we expect that, while state $k$-designs are not the most discriminable $k$-copy states, they will be still closely related to the optimal construction.

\section{Discussions} 

Despite the results obtained on maximal $k$-copy discriminability of different sets of states, including quantum pure and mixed states, real states, and classical states, and showing the relation between the quantum state discrimination problem and quantum state $k$-design, we can still enumerate several open questions.

\begin{enumerate}
    \item What is the asymptotic behaviour of the function $\Omega_\textup{pure}^{\textup{real}}(d,N,k)$ for $d>2$? We have shown that, for $d=2$, real qubits offer only a constant advantage over classical bits (contrary to complex qubits, where this advantage is quadratic). Do real qudits offer only a constant advantage over classical bits for any dimension $d$?
    \item Is the regular $N$-gon set of rebits always the most discriminable set of pure rebits for any number of copies $k$? Apart from the numerical evidence, we do not have proof of this fact.
    \item For $d=2$ and $N=3$, is the regular triangle the most discriminable set of qubits for any number of copies $k$? We could prove it for $k=2$, but it is not clear to us how to extend this proof for any number of copies $k$.
     \item For mixed states and $N$ large enough to contain a state $k$-design and one more state, is the construction of the most discriminable $k$-copy states from a state $k$-design combined with the maximally mixed state optimal for any dimension? We have numerical evidence that it is true for qubits, but we do not have a proof of this fact. Also, this construction seems to be useful even when a design does not exist (see e.g., \cref{subsection:mixed_examples} that addresses the scenario where $d=2$, $N=4$, and $k=2$.)
     \item The numerical upper bound method presented in this work is not known to converge for every scenario, but, we can verify that it does converge in some cases. We leave for future work the analysis of the convergence of this method, and to understand in which scenarios it converges and in which it does not, and why we have this phenomenon.
     \item Does the quantity $\Omega^{\textup{q}}(d, N, k)$ play a role relevant in the analysis of the information leakage, or any quantum cryptographic protocol? Since its classical analogue is a relevant and well-studied quantity for the analysis of the information leakage in classical cryptographic protocols, we expect that its quantum analogue may also have analogous interpretations and applications.
\end{enumerate}

Given the rich structure of the problem of finding the most discriminable $k$-copy states, we foresee some interesting future research directions. 
\begin{enumerate}
    \item How would the maximal $k$-copy discriminability function $\Omega(d, N, k)$ behave when considering generalised probabilistic theories (GPTs)~\cite{2023GPTs_plavala, 2005GPTs_barrett}, which are a general framework for describing physical theories, including classical and quantum mechanics. Such investigation was done for the single copy case~\cite{Heinosaari_2024}, but we expect to find different and interesting results for the multicopy case, given that GPTs with a great freedom on quantum states have severe restrictions on the possibility of joint measurements~\cite{2024joint_measurements_dmello_Ligthart, 2009joint_measurements_Skrzypczyk, 2005joint_measurements_Short}  (see however \cite{2026joint_measurements_dmello_gross}). Understanding the behaviour of the function $\Omega(d, N, k)$ for GPTs may allow us to understand the role of the structure of the state space in the problem of state discrimination with many copies, and to identify the specific features of quantum mechanics that lead to the observed results. 
    \item While here we have restricted our analysis to states, this question is also interesting to analyse for measurements. For instance, we can ask what is the most discriminable $k$-copy measurement. We notice that this question may be nontrivial already for the case of $k=1$, and may be connected to the problem of finding the measurements with most information storability~\cite{Heinosaari_2024, 2018information_storability_Matsumoto}, and the most incompatible measurements~\cite{2017incompatible_measurements_Bavaresco_Quintino}.
    \item It would be interesting to analyse the case where the prior distribution is not uniform, and to see how this affects the results obtained in this work.
\end{enumerate}

We hope the results and methods presented in this work will spark further research on the problem of state discrimination with many copies, and its connections to state $k$-designs, and possibly, to find applications and connections to other branches of quantum information.

\let\oldaddcontentsline\addcontentsline
\renewcommand{\addcontentsline}[3]{} 
\section*{Note added}
\let\addcontentsline\oldaddcontentsline 

During the final stages of this work, we became aware of a related manuscript by Achenbach \textit{et al}, titled ``Nonclassical traits in multi-copy state discrimination''~\cite{Achenbach2025}. The two works were carried out independently and address closely related questions.

\let\oldaddcontentsline\addcontentsline
\renewcommand{\addcontentsline}[3]{} 
\section*{Acknowledgments}
\let\addcontentsline\oldaddcontentsline 

We thank Peter Brown, Mariami Gachechiladze, Carlos Humberto Vieira, and Tristan Nemoz for fruitful discussions, in particular, we thank Peter Brown, Carlos Humberto Vieira, and Tristan Nemoz for discussions that were fundamental for proving the upper bound of Thm.~\ref{thm:SO(d)}. We thank Hyunho Cha for pointing us to Ref.~\cite{cha}. We thank Camiel Meijer, Kristin Sundal Lien, and Raman Choudhary for comments on a first version of this manuscript.

This work has been supported by Region Île-de-France in the framework of DIM QuanTiP.
This work was funded by QuantEdu France, a State aid managed by the French National Research Agency for France 2030 with the reference ANR-22-CMAS-0001
MTQ is supported by the Agence Nationale de la Recherche (ANR) through the JCJC programme under grant number ANR-25-CE47-6396-01-HOQO-KS. LV is supported within the QuantERA II Programme that has received funding from the EU's H2020 research and innovation programme under the GA No 101017733. TO is supported by the Swedish Research Council (Grant No. 2024-05341).


%

\clearpage
\appendix

\onecolumn

\let\oldaddcontentsline\addcontentsline
\renewcommand{\addcontentsline}[3]{} 
\section*{Appendix}
\let\addcontentsline\oldaddcontentsline 

\section{Summary of results for two-dimensional systems} \label{app:summary}

The table below presents the values of $\Omega(d=2,N,k)$, including analytical and numerical calculations.

\begin{table*}[h!]
\begin{center}
\begin{adjustbox}{max width=\linewidth}
\footnotesize
\setlength{\tabcolsep}{8pt}
\renewcommand{\arraystretch}{1.7}
\begin{tabular}{|c|Sc|Sc|Sc|Sc|Sc|} 
\hline
 $N$ & $\Omega^\textup{cl}$ & $\Omega_{\textup{pure}}^\textup{real}$ & $\Omega^{\textup{q}}_{\textup{pure}}$ & $\Omega^{\textup{q}}$  \\ 
\hline

\multicolumn{5}{|c|}{k = 2} \\
\hline

3 & $ \frac{1}{3} \cdot 2.5 $ & $   \frac{1}{3} \cdot 2.9142$ & $ \frac{1}{3} \cdot 2.9142$ & $ \frac{1}{3} \cdot 2.9142^*$ \\
\hline

4 & $ \frac{1}{4} \cdot 2.5 $ & $   \frac{1}{4} \cdot 2.9142$ & $ \frac{1}{4} \cdot 3$ & $ \frac{1}{4} \cdot 3.1642^* $ \\
\hline

5 & $ \frac{1}{5} \cdot 2.5 $ & $  \frac{1}{5} \cdot 2.9142$  & $\frac{1}{5} \cdot 3$  & $ \frac{1}{5} \cdot 3.2500^* $   \\
\hline

6 & $ \frac{1}{6} \cdot 2.5 $ & $ \frac{1}{6} \cdot 2.9142$  & $\frac{1}{6} \cdot 3$ & $ \frac{1}{6} \cdot 3.2500^* $   \\
\hline

7 & $ \frac{1}{7} \cdot 2.5 $  & $ \frac{1}{7} \cdot 2.9142$  & $\frac{1}{7} \cdot 3$ & $ \frac{1}{7} \cdot 3.2500^* $  \\
\hline
$N>$ 4 & $ \frac{1}{N} \cdot \frac{3}{2} $  & $  \frac{1}{N}\left(\frac{3}{2}+\sqrt{2}\right)$  & $\frac{1}{N} \cdot 3$ & $ \frac{1}{N} \left(3+\frac{1}{4}\right)^* $  \\
\hline

\multicolumn{5}{|c|}{k = 3} \\
\hline

3 & $\frac{1}{3} \cdot 2.7501^*$ & $\frac{1}{3} \cdot 2.9747^*$ &  $\frac{1}{3} \cdot 2.9747^*$  &  $\frac{1}{3} \cdot 2.9747^*$    \\
\hline

4 & $\frac{1}{4} \cdot 2.8888 $ & $\frac{1}{4} \cdot 3.7320^*$ & $\frac{1}{4} \cdot 3.8856^*$ & $\frac{1}{4} \cdot 3.8856^*$    \\
\hline

5 & $\frac{1}{5} \cdot 2.8888 $ & $\frac{1}{5} \cdot 3.7320^*$ & $\frac{1}{5} \cdot 3.9505^*$ & $\frac{1}{5} \cdot 4.3855^*$  \\
\hline

6 & $\frac{1}{6}  \cdot 2.8888 $& $\frac{1}{6} \cdot 3.7320^*$ & $\frac{1}{6} \cdot 4$& $\frac{1}{6} \cdot 4.4508^*$  \\
\hline

7 & $\frac{1}{7} \cdot 2.8888$ & $ \frac{1}{7} \cdot 3.7320^*$ & $\frac{1}{7} \cdot 4$ & $\frac{1}{7} \cdot 4.5000^*$   \\
\hline

$N>$ 6 & $ \frac{1}{N} \cdot \frac{26}{3^2} $  & $  \frac{1}{N}\left(\frac{4}{2}+\sqrt{3}\right)^*$  & $\frac{1}{N} \cdot 4$ & $ \frac{1}{N} \left(4+\frac{1}{2}\right)^* $  \\
\hline

\multicolumn{5}{|c|}{k = 4} \\
\hline

3 & $\frac{1}{3} \cdot 2.8749^*$ & $\frac{1}{3} \cdot 2.9943^*$ & $\frac{1}{3} \cdot 2.9943^*$ & $\frac{1}{3} \cdot  2.9943^*$    \\
\hline

4 & $\frac{1}{4} \cdot 3.1180^*$ & $\frac{1}{4} \cdot 3.8649^*$ & $\frac{1}{4} \cdot 3.9664^*$ & $\frac{1}{4} \cdot 3.9664^*$   \\
\hline

5 & $\frac{1}{5} \cdot 3.2188$ & $ \frac{1}{5} \cdot 4.4621^*$  & $ \frac{1}{5} \cdot 4.7860^*$ & $ \frac{1}{5} \cdot 4.7860^*$   \\
\hline

6 & $\frac{1}{6} \cdot 3.2188$ & $ \frac{1}{6} \cdot 4.4621^*$  & $ \frac{1}{6} \cdot 4.9494^*$ & $ \frac{1}{6} \cdot 5.4738^*$  \\
\hline

7 & $\frac{1}{7} \cdot 3.2188$ & $ \frac{1}{7} \cdot 4.4621^*$ & $ \frac{1}{7} \cdot 4.9665^*$ & $ \frac{1}{7} \cdot 5.6371^*$ \\
\hline

$N>$ 12 & $ \frac{1}{N} \cdot \frac{206}{4^3} $  & $  \frac{1}{N} \left(\frac{21+6\sqrt{6}}{8} \right)^*$  & $\frac{1}{N} \cdot 5$ & $ \frac{1}{N} \left(5+\frac{11}{16}\right)^* $\\
\hline

\multicolumn{5}{|c|}{k = 5} \\
\hline

3 & $\frac{1}{3} \cdot 2.9375^*$ & $\frac{1}{3} \cdot 2.9985^*$ & $\frac{1}{3} \cdot 2.9985^*$ & $\frac{1}{3} \cdot 2.9985^*$ \\
\hline

4 & $\frac{1}{4} \cdot 3.3389^*$ & $\frac{1}{4} \cdot 3.9365^*$ & $\frac{1}{4} \cdot 3.9876^*$ & $\frac{1}{4} \cdot 3.9876^*$  \\
\hline

5 & $\frac{1}{5} \cdot 3.4442^*$ & $\frac{1}{5} \cdot 4.6596^*$ & $\frac{1}{5} \cdot 4.9045^*$ & $\frac{1}{5} \cdot 4.9045^*$  \\
\hline

6 & $\frac{1}{6} \cdot \frac{2194}{5^4}$ & $ \frac{1}{6} \cdot 5.1168^*$  & $ \frac{1}{6} \cdot 5.7750^*$ & $ \frac{1}{6} \cdot 5.7750^*$ \\
\hline

$N>$ 12 & $ \frac{1}{N} \cdot \frac{2194}{5^4} $  & $  \frac{1}{N} \left(\frac{8+5\sqrt{2}+\sqrt{5}+\sqrt{10}}{4}\right)^*$  & $\frac{1}{N} \cdot 6$ & $ \frac{1}{N} \left(6+\frac{26}{32}\right)^* $ \\
\hline

\multicolumn{5}{|c|}{$k \in \mathbb{N}$  } \\
\hline

large enough\footnote{For the classical and pure real case, we just need $N\geq k+1$. For the quantum pure and quantum case, we need $N$ to be large enough for a $k$-state design to exist, see Sec.~\ref{subsection:state_design} for details.} $N$ & $\frac{1}{N} \left( \frac{k!}{k^k}  \sum_{j=0}^k \frac{k^j}{j!}\right) $ & $\frac{1}{N} \left(\frac{1}{2^k} \left( \sum_{j=0}^k \sqrt{\binom{k}{j}} \right)^2\right)$ & $\frac{1}{N} (k+1)$ & $\frac{1}{N}  \Big((k+1) +\left(1 - \frac{k+1}{2^k}\right)\Big)  $ \\

  & $  \sim \frac{1}{N}   \sqrt{\frac{\pi k}{2}}$ & $ \sim \frac{1}{N} \sqrt{2\pi k}^*$ & $\sim \frac{1}{N} (k+1)$ & $ \sim \frac{1}{N} (k+2)$ \\
\hline
\end{tabular}
\end{adjustbox}
\end{center}
\caption{The summary of the results for qubits, $d=2$, for different set cardinalities $N$ and numbers of copies $k$. We put an asterisk on the values for which we have a support of computational results or an educated guess, but no analytical proof.} \label{table:numerics}
\end{table*}
\clearpage
\section{Numerical results} \label{app:numerical}

We now present the numerical results for lower and upper bounds of $\Omega(d,N,k)$ organised in tables.

First, in \cref{table:classical_and_rebit}, we show numerical lower bounds for the maximal discriminability of sets of classical and real states. Here $\Omega^\textup{cl}_{\textup{uniform}}$ is for states distributed uniformly along the $Z$ axis, $\Omega^{\textup{cl}}$ is an optimisation over all classical distributions, and $\Omega^{\textup{cl}}_{\textup{UB}}$ is an analytical upper bound for classical states. Analogously, $\Omega^\textup{real}_{\textup{polygon}}$ is for regular polygon ensembles of pure rebits states, $\Omega^{\textup{real}}_{\textup{pure}}$ is an optimisation over pure rebit states, and $\Omega^{\textup{real}}$ is an optimisation over pure and mixed rebits states. The computational methods used are described in \cref{sec:numerics_lower}.

\begin{table}[h!]
\begin{center}
\begin{adjustbox}{max width=\linewidth}
\footnotesize
\setlength{\tabcolsep}{8pt}
\renewcommand{\arraystretch}{1.6}

\begin{tabular}{|c|c|c|c||c|c|c|} 
\hline
N & $\Omega^\textup{cl}_{\textup{uniform}}$ & $\Omega^\textup{cl}$ & $\Omega^\textup{cl}_\textup{UB}$ & $\Omega^\textup{real}_{\textup{polygon}}$ & $\Omega^\textup{real}_\textup{pure}$ & $\Omega^\textup{real}$\\ 
\hline

\multicolumn{7}{|c|}{k = 2} \\
\hline

3 & 0.8333 & 0.8333 & 0.8333 & 0.9714 & 0.9714 & 0.9714 \\
\hline

4 & 0.6111 & 0.625 & 0.625 & 0.7286 & 0.7286 & 0.7911 \\
\hline

5 & 0.5 & 0.5 & 0.5 & 0.5828 & 0.5828 & 0.6328 \\
\hline

6 & 0.4133 & 0.4167 & 0.4167 & 0.4857 & 0.4857 & 0.5274 \\
\hline

\multicolumn{7}{|c|}{k = 3} \\
\hline

3 & 0.9167 & 0.9167 & 0.963 & 0.9916 & 0.9916 & 0.9916 \\
\hline

4 & 0.7222 & 0.7222 & 0.7222 & 0.9330 & 0.9330 & 0.9330 \\
\hline

5 & 0.5687 & 0.5778 & 0.5778 & 0.7464 & 0.7464 & 0.8464 \\
\hline

6 & 0.4773 & 0.4815 & 0.4815 & 0.622 & 0.622 & 0.7053 \\
\hline

\multicolumn{7}{|c|}{k = 4} \\
\hline

3 & 0.9583 & 0.9583 & 1 & 0.9981 & 0.9981 & 0.9981 \\
\hline

4 & 0.7716 & 0.7795 & 0.8047 & 0.9662 & 0.9662 & 0.9662 \\
\hline

5 & 0.6438 & 0.6438 & 0.6438 & 0.8924 & 0.8924 & 0.9230 \\
\hline

6 & 0.5275 & 0.5365 & 0.5365 & 0.7437 & 0.7437 & 0.8583 \\
\hline

\multicolumn{7}{|c|}{k = 5} \\
\hline

3 & 0.9792 & 0.9792 & 1 & 0.9995 & 0.9995 & 0.9995 \\
\hline

4 & 0.8292 & 0.8347 & 0.8776 & 0.9841 & 0.9841 & 0.9841 \\
\hline

5 & 0.6832 & 0.6888 & 0.7021 & 0.9319 & 0.9319 & 0.9623 \\
\hline

6 & 0.5851 & 0.5851 & 0.5851 & 0.8529 & 0.8529 & 0.9172 \\
\hline

\multicolumn{7}{|c|}{k = 6} \\
\hline

3 & 0.9896 & 0.9896 & 1 & 0.9999 & 0.9999 & 0.9999 \\
\hline

4 & 0.8512 & 0.8668 & 0.9437 & 0.9920 & 0.9920 & 0.9920 \\
\hline

5 & 0.7235 & 0.7288 & 0.7549 & 0.9581 & 0.9581 & 0.9811 \\
\hline

6 & 0.6142 & 0.6203 & 0.6291 & 0.8943 & 0.8943 & 0.9520  \\
\hline
 
\end{tabular}
\end{adjustbox}
\end{center}
\caption{Computational lower bounds for classical and real quantum sets of states for $d=2$ and different $N$, $k$.}
\label{table:classical_and_rebit}
\end{table}

In \cref{table:upper_numerics_sdp} we present results for upper bounds for maximal discriminability for different sets of states with dimension $d=2$:
$\Omega^{\textup{real}}_{\textup{pure}}$ is an optimisation over pure rebit states, $\Omega^{\textup{real}}$ is an optimisation over pure and mixed rebits states, $\Omega^{\textup{q}}_{\textup{pure}}$ - over pure quantum states, and $\Omega^{\textup{q}}$ - over all pure and mixed quantum states.

\begin{table}

\begin{center}
\begin{adjustbox}{max width=\linewidth}
\footnotesize
\setlength{\tabcolsep}{8pt}
\renewcommand{\arraystretch}{1.6}
\begin{tabular}{|c|c|c|c|c|}
\hline
N& $\Omega^{\textup{real}}_{\textup{pure}}$& $\Omega^{\text{real}}$& $\Omega^{\textup{q}}_{\textup{pure}}$& $\Omega^{\textup{q}}$ \\
\hline

\multicolumn{5}{|c|}{k = 2} \\
\hline
3 & $\le 0.9714$ & $\le 0.9714$ & $\le 1.0000$ & $\le 1.0000$ \\
\hline

4 & $\le 0.7286$ & $\le 0.8067$ & $\le 0.7500$ & $\le 0.8333$ \\
\hline

5 & $\le 0.5828$ & $\le 0.6453$ & $\le 0.6000$ & $\le 0.6667$ \\
\hline

6 & $\le 0.4857$ & $\le 0.5378$ & $\le 0.5000$ & $\le 0.5556$ \\
\hline

7 & $\le 0.4163$ & $\le 0.4639$ & $\le 0.4286$ & $\le 0.4796$ \\
\hline

\multicolumn{5}{|c|}{k = 3} \\
\hline
3 & $\le 0.9916$ & $\le 0.9921$ & $\le 1.0000$ & $\le 1.0000$ \\
\hline

4 & $\le 0.9330$ & $\le 0.9378$ & $\le 1.0000$ & $\le 1.0000$ \\
\hline

5 & $\le 0.7464$ & $\le 0.8740$ & $\le 0.8000$ & $\le 0.9367$ \\
\hline

6 & $\le 0.6220$ & $\le 0.7643$ & $\le 0.6667$ & $\le 0.8177$ \\
\hline

7 & $\le 0.5331$ & $\le 0.6551$ & $\le 0.5714$ & $\le 0.7009$ \\
\hline

\multicolumn{5}{|c|}{k = 4} \\
\hline
3 & $\le 0.9981$ & $\le 1.0000$ & $\le 1.0000$ & $\le 1.0000$ \\
\hline

4 & $\le 0.9662$ & $\le 1.0000$ & $\le 1.0000$ & $\le 1.0000$ \\
\hline

5 & $\le 0.8924$ & $\le 1.0000$ & $\le 1.0000$ & $\le 1.0000$ \\
\hline

6 & $\le 0.7437$ & $\le 1.0000$ & $\le 0.8333$ & $\le 1.0000$ \\
\hline

7 & $\le 0.6374$ & $\le 0.9922$ & $\le 0.7143$ & $\le 1.0000$ \\
\hline

\end{tabular}
\end{adjustbox}
\end{center}

\caption{Upper bounds on $\Omega^{\textup{q}}(d=2,N,k)$ for various scenarios, obtained through the SDP described in \cref{sec:numerics_upper}. The number of extensions $\ell$ was restricted to $\ell \le 5 - k$ due to memory constraints. We notice that the SDP does not converge to a non-trivial bound in several scenarios, including the $k = 2$ and $N = 3$ quantum case.}
\label{table:upper_numerics_sdp}
\end{table}

\clearpage
\section{Proof of \cref{thm:SO(2)}} \label{app:SO(2)}

\thmRebit*
\begin{proof}
    When $d=2$, a real state $k$-design exists for every $N\geq k+1$, it is enough to consider a set of states distributed in the ZX-plane as a regular polygon~\cite{Delsarte1991spherical}. Hence, for any $N\geq k+1$, we can apply the lower bound presented in \cref{thm:SO(d)} for the particular case of $d=2$. We now notice that, when $d=2$, the multiplicity 
\begin{align}
     m_j = \binom{d + k - 2j - 1}{d - 1} - \binom{d + k - 2j - 3}{d - 1}
\end{align}
simplifies to 
\begin{align}
m_j = 
\begin{cases} 
2, & \text{if } j < \frac{k}{2} \\
1, & \text{if } j = \frac{k}{2}
\end{cases}
\end{align}
(notice that  $m_j=1$ only happens when $k$ is even and only for the greatest allowed value $j$).

Let us now analyse the value of $\lambda_j$ when $d=2$. For arbitrary $d$, we have that,
\begin{equation}
\lambda_j = \frac{k!(d - 2)!!}{(2j)!!(d + 2k - 2j - 2)!!}
\end{equation}
For $d=2$, the term $(d-2)!!$ becomes $0!! = 1$. The denominator term $(d + 2k - 2j - 2)!!$ simplifies to $(2 + 2k - 2j - 2)!! = (2k - 2j)!!$:
\begin{equation}
\lambda_j = \frac{k!}{(2j)!! (2k - 2j)!!}
\end{equation}

The mathematical identity $(2n)!! = 2^n n!$ ensures that the terms in the denominator can be re-written as $(2j)!! = 2^j j!$ and $(2k - 2j)!! = 2^{k-j} (k - j)!$.
Substituting these into the expression, we get
\begin{equation}
\lambda_j = \frac{k!}{(2^j j!) (2^{k-j} (k - j)!)}
\end{equation}

Combining the powers of 2 in the denominator, we obtain
\begin{equation}
\lambda_j = \frac{k!}{2^k j! (k - j)!}
\end{equation}
We can then recognise the binomial coefficient $\binom{k}{j} = \frac{k!}{j!(k-j)!}$ to write
\begin{equation}
\lambda_j = \frac{1}{2^k} \binom{k}{j}.
\end{equation}

We now want to evaluate the quantity $p = S^2$ where $S = \sum_{j=0}^{\lfloor k/2 \rfloor} m_j \sqrt{\lambda_j}$. 

Substituting $\lambda_j$ into the sum gives:
\begin{align}
S &= \sum_{j=0}^{\lfloor k/2 \rfloor} m_j \sqrt{\frac{1}{2^k} \binom{k}{j}} \\
  &= \sqrt{\frac{1}{2^k} } \sum_{j=0}^{\lfloor k/2 \rfloor} m_j \sqrt{\binom{k}{j}}
\end{align}

We now assume that $k$ is even. Given $\lambda_j = 2^{-k}\binom{k}{j}$, we know that for $j < k/2$, the multiplicity is $m_j = 2$, and when $j = k/2$, the multiplicity is $m_{k/2} = 1$. 
We can write the sum by explicitly separating the middle term from the rest:
\begin{align}
S &= \left( \sum_{j=0}^{k/2 - 1} m_j \sqrt{\lambda_j} \right) + m_{k/2} \sqrt{\lambda_{k/2}} \\
  &= \left( \sum_{j=0}^{k/2 - 1} 2 \sqrt{2^{-k}\binom{k}{j}} \right) + 1 \sqrt{2^{-k}\binom{k}{k/2}}
\end{align}

Factoring out the common $\sqrt{\frac{1}{2^k}}$ term yields:
\begin{align}
S &= \sqrt{\frac{1}{2^k}} \left[ \sum_{j=0}^{k/2 - 1} 2 \sqrt{\binom{k}{j}} + \sqrt{\binom{k}{k/2}} \right]
\end{align}

Using the symmetry of binomial coefficients, $\binom{k}{j} = \binom{k}{k-j}$, we can split the factor of 2 in the summation to represent the ``left'' and ``right'' halves of the distribution:
\begin{align}
\sum_{j=0}^{k/2 - 1} 2 \sqrt{\binom{k}{j}} &= \sum_{j=0}^{k/2 - 1} \sqrt{\binom{k}{j}} + \sum_{j=0}^{k/2 - 1} \sqrt{\binom{k}{k-j}} \\
&= \sum_{j=0}^{k/2 - 1} \sqrt{\binom{k}{j}} + \sum_{i=k/2 + 1}^{k} \sqrt{\binom{k}{i}}
\end{align}

Substituting this back into our equation for $S$, the pieces seamlessly join together to form a complete sum from $0$ to $k$:
\begin{align}
S &= \sqrt{\frac{1}{2^k}} \left[ \sum_{j=0}^{k/2 - 1} \sqrt{\binom{k}{j}} + \sqrt{\binom{k}{k/2}} + \sum_{i=k/2 + 1}^{k} \sqrt{\binom{k}{i}} \right] \\
  &= \sqrt{\frac{1}{2^k}} \sum_{j=0}^{k} \sqrt{\binom{k}{j}}
\end{align}
Squaring the entire sum gives the final quantity for $p$:
\begin{align}
p =  \frac{1}{2^k} \left( \sum_{j=0}^{k} \sqrt{\binom{k}{j}} \right)^2.
\end{align}

The calculation for the case of odd $k$ is analogous.
\end{proof}

\section{Proof of \cref{thm:trine_new}} \label{app:trine}

\thmTrineNew*
\begin{proof}
    Since the set of real pure qubit states is a subset of all pure qubit states, it follows directly from \cref{thm:SO(2)} that $\Omega^{\textup{q}}_\textup{pure}(2,3,2) \geq \Omega^\textup{real}_\textup{pure}(2,3,2) \geq \frac{1}{2}+\frac{\sqrt{2}}{3}$. Hence, it remains to be shown that it is also an upper bound of $\Omega^{\textup{q}}_\textup{pure}(2,3,2)$, i.e., $\Omega^{\textup{q}}_\textup{pure}(2,3,2) \leq \frac{1}{2}+\frac{\sqrt{2}}{3}$. From there, Eq.~\eqref{eq:232_value} would follow directly.

    We will start with the proof for real states (rebits), which will then be generalised for general pure qubit states. The idea is to construct a feasible point for the dual problem of the state discrimination task that is independent of the states that are subject to the discrimination. By weak duality~\cite{Boyd2004}, this dual feasible point gives rise to an upper bound of the corresponding primal problem. 
    In general, the dual problem of quantum state discrimination for a set of states $\{\sigma_l\}_{l=1}^N$ which is uniformly distributed is given by~\cite{Bae2015ReviewQSD}
    \begin{align} \label{eq:dual}
       & \min \Tr(G) \\
       & \textup{s.t. }\frac{1}{N} \sigma_l \leq G, \quad \forall l \in \{1,\ldots, N\}.         
    \end{align}
    Let us now focus on the case where $\sigma_l = \ketbra{\psi_l} \otimes \ketbra{\psi_l}$, and where $\ket{\psi_l} \in \mathbb{R}^2$. Since  $\ket{\psi_l}$ is a rebit, we can generate it as 
\begin{align}
     \ket{\psi_l} = O_l \ket{0},
\end{align}
where $O_l\in \textup{SO}(2)$ is real qubit unitary operator, that is, a rotation in the XZ-plane of the Bloch sphere, or concretely $O_l = e^{i \gamma_{l} \sigma_{y}}$ for some angle $\gamma_l \in [0, \pi]$. Hence, the problem reads as  
    \begin{align}
        & \min \Tr(G) \\
        &\textup{s.t. }\frac{1}{N} \Big(O_l\otimes O_l\Big) \ketbra{00} \Big(O_l \otimes O_l\Big)^\dagger\leq G, \quad \forall l \in \{1,\ldots, N\},       
    \end{align}
    or, equivalently,
    \begin{align}
     &   \min \Tr(G) \\
     &   \textup{s.t. }\frac{1}{N}\ketbra{00} \leq \Big(O_l\otimes O_l\Big)^\dagger G  \Big(O_l \otimes O_l\Big), \quad \forall i \in \{1,\ldots, N\},         
    \end{align}    
for some given rotations $O_i \in \textup{SO}(2)$. The idea is to construct an ansatz for $G$ that is invariant under all rotations in the sense that 
\begin{equation}
    \Big(O\otimes O\Big)^\dagger G  \Big(O \otimes O\Big) = G \; \forall O \in \textup{SO}(2).
\end{equation}
This is achieved by 
\begin{align}
        G=\frac{1}{N} \left(\alpha \Pi_\textup{sym}(2,2) + \beta \ketbra{\phi^+}\right),
\end{align}
where $\ket{\phi^+} = \frac{1}{\sqrt{2}}(\ket{00} + \ket{11})$, due to the well-known facts that $\Big(U \otimes U\Big) \Pi_\textup{sym}(2,2)   \Big(U \otimes U\Big)^\dagger = \Pi_\textup{sym}(2,2)$ for all unitaries $U \in U(2)$ and $O \otimes O \ket{\phi^{+}} = \ket{\phi^{+}}$ for all $O \in \textup{SO}(2)$. Now, using this ansatz for $G$, the dual problem reduces to
    \begin{align}
        &\min \frac{1}{N}\left(3\alpha + \beta\right) \\
        &\textup{s.t. }\ketbra{00} \leq \alpha \Pi_\textup{sym}(2,2) + \beta \ketbra{\phi^+}.  \label{eq:xz_op_inequality}       
    \end{align}
   The operator inequality \eqref{eq:xz_op_inequality}  holds iff
\begin{align}
    \alpha \Pi_\textup{sym}(2,2) + \beta \ketbra{\phi^+}- \ketbra{00}\geq0.    
\end{align}
This is equivalent to the condition that all eigenvalues of $M:=\alpha \Pi_\textup{sym}(2,2) + \beta \ketbra{\phi^+}- \ketbra{00}$ are non-negative.
In the computational basis, we are hence interested in the eigenvalues of the matrix
\begin{align}
    M = \begin{pmatrix} 
\alpha + \frac{\beta}{2} - 1 & 0 & 0 & \frac{\beta}{2} \\ 
0 & \frac{\alpha}{2} & \frac{\alpha}{2} & 0 \\ 
0 & \frac{\alpha}{2} & \frac{\alpha}{2} & 0 \\ 
\frac{\beta}{2} & 0 & 0 & \alpha + \frac{\beta}{2} 
\end{pmatrix}.
\end{align}
The matrix $M$ is block diagonal, so we just need to calculate the eigenvalues of two-by-two matrices.
Direct calculation shows that the eigenvalues are given by
\begin{align}
    &\lambda_1 = 0 \\ 
    &\lambda_2 = \alpha \\ 
    & \lambda_3 = \alpha + \frac{\beta - 1 + \sqrt{\beta^2 + 1}}{2} \\ 
    & \lambda_4 = \alpha + \frac{\beta - 1 - \sqrt{\beta^2 + 1}}{2}. 
\end{align}
Hence, we must have $\alpha\geq0$ and since $\alpha,\beta\in\mathbb{R}$, the only other constraint that is relevant is $\lambda_4\geq 0 $, which imposes that
\begin{align}
\alpha \geq \frac{-\beta + 1 +  \sqrt{\beta^2 + 1}}{2}.
\end{align}
Since we want to minimise $3\alpha+\beta$, it makes sense to assume that
\begin{align}
\alpha = \frac{-\beta + 1 +  \sqrt{\beta^2 + 1}}{2}.
\end{align}
To then minimise $\Tr(G)$, our problem reads as 
\begin{align}
   \frac{1}{N} \min_{\beta\in\mathbb{R}} \; 3\left( \frac{-\beta + 1 +  \sqrt{\beta^2 + 1}}{2}\right) + \beta.
\end{align}
In order to find the minimum of a real-valued function we just need to derive and find its extremal points of $f(\beta):= 3\left( \frac{-\beta + 1 +  \sqrt{\beta^2 + 1}}{2}\right) + \beta$.
Direct calculation shows that the derivative of $f$ is given by
\begin{align}
    f'(\beta)= \frac{3\beta}{2\sqrt{\beta^2 + 1}} - \frac{1}{2}.
\end{align}
We then see that $f'(\beta)=0$ when $\beta=\frac{\sqrt{2}}{4}$, and by substituting $\beta=\frac{\sqrt{2}}{4}$ into $f$, we obtain
\begin{align}
    f\left(\frac{\sqrt{2}}{4}\right) =\frac{3}{2}+\sqrt{2}.
\end{align}
For completeness, let us explicitly check that $\alpha\geq0$, so that the eigenvalues of $M = G-\ketbra{00}$ are non-negative. Since we took $\alpha := \frac{-\beta + 1 +  \sqrt{\beta^2 + 1}}{2}$, when  $\beta=\frac{\sqrt{2}}{4}$, we have that
\begin{align}
    \alpha=\frac{2+\sqrt{2}}{4} > 0.
\end{align}

To conclude, we have shown that, for $\alpha=\frac{2+\sqrt{2}}{4}$, $\beta=\frac{\sqrt{2}}{4}$, and $G=\frac{1}{N} \left(\alpha \Pi_\textup{sym}(2,2) + \beta \phi^+\right)$, for any pure rebit $\ket{\psi}\in\mathbb{R}^2$,
it holds that
\begin{align}
    \frac{1}{N}  \ketbra{\psi}\otimes\ketbra{\psi} \leq G. 
\end{align}
Hence, $\Tr(G)=\frac{1}{N}\left(\frac{3}{2}+\sqrt{2}\right)$ is an upper bound to the problem of pure rebit discrimination, i.e, $\Omega^\textup{real}_\textup{pure}(2,N,2) \leq \frac{1}{N}\left(\frac{3}{2}+\sqrt{2}\right)$ and by Theorem \ref{thm:SO(2)} $\Omega^\textup{real}_\textup{pure}(2,N,2) = \frac{1}{N}\left(\frac{3}{2}+\sqrt{2}\right)$. In particular, for $N = 3$ we have $\Omega^\textup{real}_\textup{pure}(2,3,2)= \frac{1}{2}+\frac{\sqrt{2}}{3} \approx 0,9714$

    The proof logic for the real case can be generalized to complex qubit states if $N = 3$.
    To see this, the important insight is that three pure qubit states $\ket{\psi_i}$ always lie in some common plane that intersects the Bloch sphere. By choosing an appropriate coordinate system (performing a unitary rotation) this plane is without loss of generality parallel to the XZ-plane meaning that $ \ket{\psi_i} \in \mathcal{O}_{\theta} := \{\ket{\psi} \in \mathbb{C}^2: \braket{\psi} = 1, \bra{\psi} \sigma_y \ket{\psi} = -\sin(2\theta)\}$ for some constant angle $\theta \in [-\pi/4,\pi/4]$. All the pure states in this plane can be transformed into each other by rotations $O \in \textup{SO}(2)$ and $\textup{SO}(2)$ leaves each of the sets $\mathcal{O}_{\theta}$ invariant. Starting from the state $\ket{0}$, which lies in the rebit plane $\mathcal{O}_{0}$, it can be simply checked that the state $e^{-i\theta \sigma_{x}} \ket{0}$ lies in $\mathcal{O}_{\theta}$. 
    
    Suppose now that $G(\theta)$ is a two-qubit operator that is invariant under the unitary conjugation by $O \otimes O$ for all $O \in \textup{SO}(2)$, i.e.,
    \begin{equation}
        \Big(O\otimes O\Big)^\dagger G(\theta)  \Big(O \otimes O\Big) = G(\theta) \; \forall O \in \textup{SO}(2).
    \end{equation}
    Then, by the same logic as in the real case, one has the implication 
    \begin{equation}
        \label{eq:y-rotation_implication}
        \left[G(\theta) \geq \frac{1}{3} (e^{-i\theta\sigma_x} \otimes e^{-i\theta \sigma_x})\ketbra{00} (e^{i\theta\sigma_x} \otimes e^{i\theta \sigma_x}) \right]  \Rightarrow \left[G(\theta) \geq \frac{1}{3} \ketbra{\psi} \otimes \ketbra{\psi} \; \forall \ket{\psi} \in \mathcal{O}_{\theta} \right]
    \end{equation}
    and hence, $\Tr(G(\theta))$ is guaranteed be an upper bound to the success probability in uniform state discrimination of two copies of three states from $\mathcal{O}_{\theta}$, whenever the left condition of the implication \eqref{eq:y-rotation_implication} is satisfied.
    As an ansatz for the operator $G(\theta)$, we assume that it can be decomposed as 
    \begin{equation}
        G(\theta) = \sum_{i=1}^{3} \lambda_i(\theta) \ketbra{\alpha_i}
    \end{equation}
    with the orthonormal eigenvectors $\ket{\alpha_i}$ that are in terms of the Bell states given by\footnote{This ansatz was motivated by numerically solving the SDP of the dual problem presented in \cref{eq:dual} and analysing the structure of the optimal operator $G$. }
    \begin{align}
        \ket{\alpha_1} = \ket{\phi^+}, \quad \ket{\alpha_2} = \frac{1}{\sqrt{2}}(\ket{\phi^-} - i \ket{\psi^+}), \quad  \ket{\alpha_3} = \frac{1}{\sqrt{2}}(\ket{\phi^-} + i \ket{\psi^+})
    \end{align}
    independent of $\theta$. The corresponding eigenvalues $\lambda_i(\theta)$ will depend on $\theta$ and are constructed in the following. Before that, it should be noticed that all the vectors $\ket{\alpha_i}$ are eigenvectors of $O \otimes O$ for all $O \in \textup{SO}(2)$
    so that $G(\theta)$ is invariant under conjugation by  $O \otimes O$ and hence, it suffices for $G(\theta)$ to obey 
    \begin{equation}
        \label{eq:rotated_dual_inequality}
       G(\theta) \geq \frac{1}{3} (e^{-i\theta\sigma_x} \otimes e^{-i\theta \sigma_x})\ketbra{00} (e^{i\theta\sigma_x} \otimes e^{i\theta \sigma_x}).
    \end{equation}
    We denote the eigenvalues $\lambda_i(\theta)$ that satisfy the inequality \eqref{eq:rotated_dual_inequality} as feasible. To understand the condition under which the eigenvalues are feasible, notice first that one can decompose the state $(e^{-i\theta\sigma_x} \otimes e^{-i\theta \sigma_x})\ket{00}$ in the eigenbasis of $G(\theta)$ via 
    \begin{equation}
        \label{eq:basis_decomposition}
        (e^{-i\theta\sigma_x} \otimes e^{-i\theta \sigma_x})\ket{00} = \frac{\cos(2\theta)}{\sqrt{2}} \ket{\alpha_1} + \left(\frac{1+\sin(2\theta)}{2} \right) \ket{\alpha_2} + \left(\frac{1-\sin(2\theta)}{2}\right) \ket{\alpha_3}.
    \end{equation}
    Next, we notice that the operator inequality \eqref{eq:rotated_dual_inequality} is equivalent to the scalar inequality 
    \begin{equation}
        \label{eq:scalar_inequality}
       \frac{1}{3} \bra{00} (e^{-i\theta\sigma_x} \otimes e^{-i\theta \sigma_x}) G(\theta)^{-1} (e^{i\theta\sigma_x} \otimes e^{i\theta \sigma_x}) \ket{00} \leq 1 
    \end{equation}
    where $G(\theta)^{-1}$ denotes the inverse on the support of $G(\theta)$. To see this, let $A$ be a positive definite matrix, $\ket{a}$ be any normalised vector and consider the following logic of equivalences:
    \begin{align}
        A \geq \ketbra{a} &\Leftrightarrow A - \ketbra{a} \geq 0 \\
        & \Leftrightarrow A^{\frac{1}{2}}(\mathds{1} - A^{-\frac{1}{2}} \ketbra{a} A^{-\frac{1}{2}})  A^{\frac{1}{2}} \geq 0 \\
        & \Leftrightarrow \mathds{1} - A^{-\frac{1}{2}} \ketbra{a} A^{-\frac{1}{2}} \geq 0 \\
        & \Leftrightarrow \bra{a} A^{-1} \ket{a} \leq 1
    \end{align}
    Thus, using the inequality \eqref{eq:scalar_inequality} and the basis decomposition \eqref{eq:basis_decomposition}, the eigenvalues $\lambda_i(\theta)$ are feasible if and only if they obey the inequality 
    \begin{equation}
        \label{eq:eigenvalues_dual_feasibility}
        \frac{1}{3} \left(\frac{\cos^2(2 \theta)}{2 \lambda_1(\theta)} +  \frac{(1 + \sin(2\theta))^2}{4 \lambda_2(\theta)} + \frac{(1 - \sin(2\theta))^2}{4 \lambda_3(\theta)}\right) \leq 1.
    \end{equation}

    One may now check that one set of feasible eigenvalues is given by $\lambda_i(\theta) = \sqrt{a_i(\theta)} \sum_{j=1}^3  \sqrt{a_j(\theta)}$, where $a_i(\theta)$ are the coefficients in the inequality \eqref{eq:eigenvalues_dual_feasibility}, i.e.,
    \begin{align}
        a_1(\theta) &= \frac{\cos^2(2 \theta)}{6}, \quad a_2(\theta) = \frac{(1 + \sin(2\theta))^2}{12}, \quad a_3(\theta) = \frac{(1 - \sin(2\theta))^2}{12}.
    \end{align}

    With these eigenvalues $\lambda_i(\theta)$, one computes
    \begin{align}
        \Tr(G(\theta)) &= \sum_{i=1}^{3} \lambda_i(\theta) = \left(\sum_{i=1}^{3} \sqrt{a_i(\theta)}\right)^2 \\
        &= \left(\frac{\cos(2\theta)}{\sqrt{6}} + \frac{1+\sin(2 \theta)}{\sqrt{12}} + \frac{1-\sin(2 \theta)}{\sqrt{12}}\right)^2 \\
        &= \left(\frac{2}{\sqrt{12}} + \frac{\cos(2 \theta)}{\sqrt{6}}\right)^2 \leq \left(\frac{2}{\sqrt{12}} + \frac{1}{\sqrt{6}}\right)^2 = \frac{1}{2}+\frac{\sqrt{2}}{3},
    \end{align}
    which shows that $\Omega^{\textup{q}}_\textup{pure}(d=2,N=3,k=2) \leq \frac{1}{2}+\frac{\sqrt{2}}{3}$ concluding the proof.
\end{proof}

\section{Implementation details of the gradient descent for lower bounds}\label{app:gd_details}

In order to perform the gradient descent optimisation, we first modify the problem to be unconstrained over its parameter space. To achieve this, instead of optimizing over states $\rho_i$ and POVM elements $M_i$ directly, we optimize over unconstrained matrices $A_i \in \mathbb{C}^{d \times d}$ and $B_i \in \mathbb{C}^{d^k \times d^k}$, respectively, which we use to construct the valid operators. To obtain valid quantum states, we require for all $i$
\begin{align}
    \rho_i \geq 0 \quad \qquad \text{and} \qquad \Tr(\rho_i) = 1.
\end{align}
The positivity and unit trace constraints can be fulfilled by setting $\rho_i = ( A_i^\dagger A_i ) / \operatorname{Tr}(A_i^\dagger A_i)$. For the POVM $\{M_i\}_{i=1}^{N}$, it should satisfy
\begin{equation}
    M_i \geq 0 \quad \forall i, \qquad \text{and} \qquad \sum_{i=1}^{N} M_i = \id.
\end{equation}
Positivity can be achieved by setting $\tilde{M}_i = B_i^\dagger B_i$. Letting $M_i = \tilde{M}_i / \lambda_\text{max}(\sum_{i=1}^N \tilde{M}_i)$, where $\lambda_\text{max}(Q)$ is the maximum eigenvalue of $Q$, we obtain $\sum_{i=0}^N M_i \le \id$. This is sufficient for the gradient descent optimisation, as the gradient will naturally favour solutions where $\sum_{i=0}^N M_i = \id$ is satisfied. Indeed, this relaxation provides a less constrained version of the search space to be explored in the initial iterations, which is often beneficial. For pure states, we optimize the vectors $\ket{a_i}$ such that $A_i = \ketbra{a_i}{a_i}$, and for rebits, we restrict the matrix $A_i$ or vectors $\ket{a_i}$ to be real.

\section{Implementation details of the SDP for upper bounds}\label{app:sdp_details}

In practice, we have implemented the SDP in \cref{eq:sdp_main}, for $d = 2$, by considering a permutation-invariant basis in terms of the Pauli operators. First, we define a (non-normalised) basis for $\operatorname{Perm}_d^k$ via
\begin{equation}
	b_k[t] = \sum_{\mu \in \mathcal{S}_k} V_\mu ( \id^{\otimes t_\id} \otimes X^{\otimes t_X} \otimes Y^{\otimes t_Y} \otimes Z^{\otimes t_Z} ) V^\dagger_\mu
\end{equation}
where $t$ denotes the type~\cite{harrow2013churchsymmetricsubspace} indexing the basis, i.e., a 4-tuple of non-negative integers such that $\sum_{\alpha \in \{\id, X, Y, Z\}} t_\alpha = k$. Next, we define the joint basis for states and measurements via
\begin{equation}
	B[t,t'] = \frac{1}{d^{k+\ell}} b_{k+\ell}[t] \otimes b_{k}[t']
\end{equation}
such that the real coefficients $\phi_i[t,t']$ define $\Phi^\ell_i = \sum_{t, t'} \phi_i[t,t'] B[t,t']$, and similarly for $\Phi^\ell_{\perp i}$. The Pauli representation allows the constraints to be written directly in terms of elementary operations on these types. For example, the purity constraint can be imposed first by defining $r_i[a,b,c] = \frac{1}{d^k}\left( \phi_i[(k+l-2,a,b,c),(k,0,0,0)] + \phi_{\perp i}[(k+l-2,a,b,c),(k,0,0,0)] \right)$, for $a + b + c = 2$, corresponding to the  coefficients associated with the quadratic Pauli terms in the two-copy reduced state. The purity condition then reduces to $r[2,0,0] + r[0,2,0] + r[0,0,2] = 1$, which is equivalent to imposing a unit Bloch vector for the state.

Ultimately, for the positive semidefinite constraints we opted to construct the $2^{2k+l} \times 2^{2k+l}$ matrices directly using this permutation-invariant Pauli basis. Partial transposes were implemented using a direct transposition of the basis elements $B[t,t']$. This is inefficient and renders the SDP optimisation intractable for larger $k$ and $\ell$, but it was sufficient as a proof of concept. We note that many further optimisations are possible, such as exploiting Schur-Weyl duality, where $\Phi^k_i, \Phi^k_{\perp i}, \Psi_i$, and their partial transposes, can be written in block diagonal form. Positivity can then be imposed directly on their unique blocks, each corresponding to an irreducible representation of the symmetric group. However, we did not pursue such optimisations here. For further details, see the comments in the code available on GitHub~\cite{1ucasvb_github}.

\end{document}